\newif\ifstoc
\stoctrue 
\stocfalse

\ifstoc
  \documentclass{sig-alternate}
  \usepackage[obey]{fixacm}  
  \usepackage{times}
\else
  \documentclass[letterpaper,11pt]{article}

  \usepackage{typearea}
  \paperwidth 8.5in \paperheight 11in
  \typearea{15}

  \usepackage[compact]{titlesec}
\fi

\newcommand{\stocoption}[2]{{\ifstoc #1 \else #2 \fi}}

\usepackage{theorem,latexsym,graphicx}
\usepackage{amsmath,amssymb,enumerate}
\usepackage{boxedminipage}
\usepackage{xspace}
\usepackage{bm}
\usepackage{ifpdf}
\usepackage{color}
\usepackage{algorithm}
\usepackage{algorithmic}
\usepackage{subfigure}
\usepackage{verbatim}
\usepackage{mathrsfs}
\usepackage{paralist}
\usepackage{algorithm,algorithmic}

\ifstoc
  \makeatletter
  \addtolength{\theorempreskipamount}{-3mm}
  \addtolength{\theorempostskipamount}{-2mm}
  \addtolength{\abovedisplayskip}{-3mm}
  \makeatother
\else
  \makeatletter
  \setlength{\parindent}{0pt}
  \addtolength{\partopsep}{-2mm}
  \setlength{\parskip}{5pt plus 1pt}
  \addtolength{\theorempreskipamount}{-1mm}
  \addtolength{\theorempostskipamount}{-1mm}
  \addtolength{\abovedisplayskip}{-3mm}
  \addtolength{\textheight}{25pt}
  \addtolength{\footskip}{-10pt}
  \makeatother
\fi

\allowdisplaybreaks

\ifstoc
\newcommand{\lref}[2][]{{#1~\ref{#2}}}
\else
\definecolor{Darkblue}{rgb}{0,0,0.4}
\definecolor{Brown}{cmyk}{0,0.81,1.,0.60}
\definecolor{Purple}{cmyk}{0.45,0.86,0,0}
\newcommand{\mydriver}{hypertex}
\ifpdf
 \renewcommand{\mydriver}{pdftex}
\fi
\usepackage[breaklinks,\mydriver]{hyperref}
\hypersetup{colorlinks=true,
            citebordercolor={.6 .6 .6},linkbordercolor={.6 .6 .6},%
citecolor=Darkblue,urlcolor=black,linkcolor=Darkblue,pagecolor=black}
\newcommand{\lref}[2][]{\hyperref[#2]{#1~\ref*{#2}}}
\fi

\newtheorem{theorem}{Theorem}[section]
\newtheorem{definition}[theorem]{Definition}

\newtheorem{lemma}[theorem]{Lemma}

\newtheorem{claim}[theorem]{Claim}

\newtheorem{corollary}[theorem]{Corollary}

\numberwithin{algorithm}{section}

\ifstoc
\else
\newenvironment{proof}{

\noindent{\bf Proof:}}
{\hfill$\blacksquare$

}

\fi
\newenvironment{proofof}[1]{

\noindent{\bf Proof of {#1}:}}
{\hfill$\blacksquare$

}

\newcommand{\junk}[1]{}
\newcommand{\ignore}[1]{}

\newcommand{\Z}[0]{{\ensuremath{\mathbb{Z}}}}


\newcommand{\sse}{\subseteq}

\newcommand{\C}{{\mathscr{C}}}

\newcommand{\Run}{\mathcal{R}}
\newcommand{\Init}{{\tt Init}}

\newcommand{\deficit}{{\tt deficit}}
\newcommand{\val}{{\tt val}}

\newcommand{\eps}{\varepsilon}

\newcommand{\Deficit}{\Phi}
\newcommand{\Round}{{\tt Round}}
\renewcommand{\sp}{{\hspace*{0.1 in}}}
\newcommand{\bb}{{\boldsymbol {\beta}}}
\newcommand{\symdif}{\triangle}

\newcommand{\Opt}{\ensuremath{\mathsf{opt}\xspace}}
\newcommand{\OPT}{\ensuremath{\mathsf{opt}\xspace}}

\newcommand{\rank}{\ensuremath{\mathsf{rank}\xspace}}
\newcommand{\cost}{\ensuremath{\mathsf{cost}\xspace}}
\newcommand{\weight}{\ensuremath{\mathsf{Wt}\xspace}}

\newcommand{\vrank}{\boldsymbol{\nu}}
\newcommand{\len}{\textsf{len}}
\renewcommand{\rank}{\boldsymbol{\rho}}

\newcommand{\const}{6}

\stocoption{}{}

\newcounter{note}[section]


\newcommand{\qedsymb}{\hfill{\rule{2mm}{2mm}}}
\renewenvironment{proof}{\begin{trivlist} \item[\hspace{\labelsep}{\bf
\noindent Proof.\/}] }{\qedsymb\end{trivlist}}%

\newcommand{\initOneLiners}{%
    \setlength{\itemsep}{0pt}
    \setlength{\parsep }{0pt}
    \setlength{\topsep }{0pt}
}
\newenvironment{OneLiners}[1][\ensuremath{\bullet}]
    {\begin{list}
        {#1}
        {\initOneLiners}}
    {\end{list}}

\newcommand{\squishlist}{
 \begin{list}{$\bullet$}
  { \setlength{\itemsep}{0pt}
     \setlength{\parsep}{3pt}
     \setlength{\topsep}{3pt}
     \setlength{\partopsep}{0pt}
     \setlength{\leftmargin}{1.5em}
     \setlength{\labelwidth}{1em}
     \setlength{\labelsep}{0.5em} } }

\newcommand{\squishend}{
  \end{list}  }


\begin{document}

\title{The Power of Deferral: \\ Maintaining a Constant-Competitive Steiner Tree Online}

\ifstoc
\conferenceinfo{STOC'13,} {June 1-4, 2013, Palo Alto, California, USA.}
\CopyrightYear{2013}
\crdata{978-1-4503-2029-0/13/06}
\clubpenalty=10000
\widowpenalty = 10000
\fi

\author{
Albert Gu\thanks{Department of Mathematical Sciences, Carnegie Mellon
  University, Pittsburgh, PA 15213. Supported by a Knaster-McWilliams
  Scholarship.}
\and
Anupam Gupta\thanks{Computer Science Department, Carnegie Mellon
    University, Pittsburgh, PA 15213, USA. Supported in part by
    NSF awards CCF-0964474 and CCF-1016799, and by the
    CMU-MSR Center for Computational Thinking.}
\and Amit Kumar\thanks{Dept. of Computer Science and Engg., IIT Delhi,
  India 110016.}
}
\date{\today}
\maketitle

\begin{abstract}
  In the online Steiner tree problem, a sequence of points is revealed
  one-by-one: when a point arrives, we only have time to add a single
  edge connecting this point to the previous ones, and we want to
  minimize the total length of edges added. Here, a tight bound has been
  known for two decades: the greedy algorithm maintains a tree whose
  cost is $O(\log n)$ times the Steiner tree cost, and this is best
  possible. But suppose, in addition to the new edge we add, we have
  time to change a single edge from the previous set of edges: can we do
  much better? Can we, e.g., maintain a tree that is constant-competitive?

  We answer this question in the affirmative. We give a primal-dual
  algorithm that makes only a single swap per step (in addition to
  adding the edge connecting the new point to the previous ones), and
  such that the tree's cost is only a constant times the optimal cost.
  Our dual-based analysis is quite different from previous primal-only
  analyses. In particular, we give a correspondence between radii of
  dual balls and lengths of tree edges; since dual balls are associated
  with points and hence do not move around (in contrast to edges), we
  can closely monitor the edge lengths based on the dual radii. Showing
  that these dual radii cannot change too rapidly is the technical heart
  of the paper, and allows us to give a hard bound on the number of
  swaps per arrival, while maintaining a constant-competitive tree at
  all times.  Previous results for this problem gave an algorithm that
  performed an \emph{amortized} constant number of swaps: for each $n$,
  the number of swaps in the first $n$ steps was $O(n)$. We also give a
  simpler tight analysis for this amortized case.
  
\end{abstract}

\ifstoc
\category{F.2.2}{Analysis of Algorithms and Problem Complexity}{Nonnumerical Algorithms and Problems}
\terms{Algorithms, Theory}
\keywords{Online Algorithms, Recourse, Steiner Tree, Greedy, Primal-Dual}
\fi

\section{Introduction}
\label{sec:intro}

In the online Steiner tree problem, a sequence of points $0, 1, 2,$
$\ldots, n, \ldots$ is revealed online. When the point $i$ arrives we are
told the distances $d(i, j)$ for all $j < i$; the distances between
previous points do not change, and we are guaranteed that the distances
always satisfy the triangle inequality.  The goal is to maintain a tree
spanning all the arrivals and having a small cost (which is the sum of
the lengths of the tree edges). As is usual in online algorithms, all
decisions are irrevocable: once an edge is bought it cannot be removed.
This naturally captures a situation where we are building a network, but
only have time to add a single edge at a time.  The greedy algorithm,
upon  arrival of the $i^{th}$ vertex, greedily attaches it to its
closest preceding point; \cite{IW91,AlonAzar} showed that this algorithm
produces a tree that is $O(\log n)$-competitive against the best
spanning tree on $\{ 1, 2, \ldots, n\}$, for every $n$.  They also
showed a matching lower bound of $\Omega(\log n)$ on the competitive
ratio.

But what if the decisions were not irrevocable? What if, when a new
vertex arrived, we were allowed to add a new edge, but also to swap a
small number of previously-added edges for new ones? Given the power of
hindsight, we could do better---but by how much? Imase and Waxman~ \cite{IW91} showed a
natural greedy algorithm that maintains a $2$-competitive tree and makes
at most $O(n^{3/2})$ swaps over the course of the first $n$ arrivals,
for every $n$. Hence the \emph{amortized budget}, the average number of
swaps per arrival, is $O(n^{1/2})$.  This was substantially improved
upon recently, when Megow, Skutella, Verschae, and
Wiese~\cite{MSVW12,Vers12} gave an algorithm with a constant amortized budget
bound. Specifically, given $\eps > 0$, their algorithm
maintains a tree that is $(1+\eps)$-competitive against the minimum
spanning tree, and performs $O(n/\eps \log 1/\eps)$ swaps in $n$ steps.


Note that both these prior results work in the amortized setting: what if
we could only do a constant number of changes per arrival? \emph{In fact, what
if we only had time to perform a single swap per timestep: could
we maintain a spanning tree that is constant competitive against the
best Steiner tree?} The algorithms used in previous papers do not have
this property, as there exist instances where a single arrival can cause
their algorithms to do a linear number of swaps. The main result of this
paper is an affirmative answer to the above question. 
\begin{theorem}[Constant Budget Algorithm]
  \label{thm:main1}
  There is an online algorithm for metric Steiner tree that performs a
  single edge swap upon each arrival, and maintains a spanning tree with
  cost at most a constant times that of the optimal Steiner tree on the
  current set of points.
\end{theorem}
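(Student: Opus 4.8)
The plan is to run a primal--dual algorithm that maintains, alongside the spanning tree $T_t$ on the current points $\{0,1,\dots,t\}$, a feasible solution to the dual of the cut LP for Steiner tree. I would represent the dual as a \emph{moat packing}: each arrived point $i$ carries a radius $r_i\ge 0$ describing a ball/moat around it, and feasibility asks that these moats are packed inside the metric (in the simplest form, $r_u+r_v\le d(u,v)$ whenever the moats of $u$ and $v$ are disjoint, with a laminar refinement in general). By weak LP duality the dual value $\Theta(\sum_i r_i)$ is a running lower bound on $\OPT_t$. The bridge between the two sides, which I would keep as an invariant, is: the edge joining point $i$ to the rest of $T_t$ has length $\Theta(r_i)$. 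Summing over $i$ then gives $\cost(T_t)=O(\sum_i r_i)=O(\OPT_t)$, which is exactly the competitiveness claimed, so the whole task reduces to restoring this invariant after each arrival using a single swap.

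Next I would describe the update on the arrival of point $t$: attach $t$ to its nearest present point (which fixes a tentative radius for the new moat), then run the usual primal--dual growth step, re-optimizing the dual for the enlarged point set. The technical core is a \emph{stability lemma}: moving from the (locally) optimal dual on $\{0,\dots,t-1\}$ to the one on $\{0,\dots,t\}$ perturbs the radii only in a tightly controlled way --- the points whose radius changes by more than a constant factor form a short chain along which the perturbation decays geometrically. Consequently only $O(1)$ radii, hence only $O(1)$ tree edges, leave their $\Theta(r_i)$ window; and with the constants in the invariant chosen with enough slack one can arrange that exactly one edge needs to change per step, with any remaining, strictly smaller corrections pushed onto later arrivals (this is the ``deferral'' of the title), since the geometric bound keeps the backlog bounded. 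The swap itself is the natural one: detach the point whose parent edge has become too long relative to its new, smaller radius and re-attach it more cheaply.

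Proving the stability lemma is where I expect the real work to be, and it is the main obstacle. The danger is a cascade: $t$'s moat grows, forcing a neighbouring moat $u$ to shrink, which frees room for $u$'s other neighbour $w$ to grow, which forces yet another moat to shrink, and so on around the tree; a priori this could touch a linear number of points. One must show the process is self-damping. I would do this by charging: order the affected points by the ``wave'' of the cascade that reaches them, and show that each successive re-attachment involves a metric length at most, say, half that of the previous one --- using the triangle inequality together with the laminarity of the moats, so that a point can only be pulled into a moat substantially larger than its current one, and the relevant distances along the chain must therefore shrink by a constant factor at each step. The geometric decay simultaneously bounds the length of the chain (so only $O(1)$ edges change) and bounds the total extra primal cost incurred before the invariant is fully restored (so $T_t$ remains $O(1)$-competitive at every intermediate moment, not only at the end).

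Once the stability lemma is in hand the remaining pieces are routine: cut-LP weak duality supplies the lower bound, the radius--edge-length invariant supplies the upper bound, the single swap per arrival follows directly from the lemma, and a short queueing/potential argument handles the bookkeeping needed to bring the per-step swap count down to exactly one while keeping the competitive ratio constant throughout.
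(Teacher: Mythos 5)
Your lower-bound scaffolding (a moat packing whose value bounds $\OPT$ from below, plus an invariant tying each point's attachment-edge length to its dual radius) matches the paper's use of cluster ranks, weak duality, and the cost bound for ``valid'' trees. But the core of your argument --- the \emph{stability lemma} asserting that a single arrival perturbs only $O(1)$ radii, with the perturbation decaying geometrically along a short chain --- is false, and the cascade-damping sketch does not rescue it. A single new point can bridge many clusters at the \emph{same} scale (it sits roughly equidistant from a linear number of previously far-apart groups), so a linear number of radii/ranks must decrease in one round; the affected vertices are not arranged along a chain of geometrically shrinking distances, so the ``each re-attachment is half the previous one'' charge has no basis. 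The paper explicitly confronts this: locality per step is unattainable, and what is actually proved is an \emph{amortized, temporal} statement --- if a vertex's rank drops by at least two below some level $k$, then \emph{future} arrivals must land within distance $\alpha^{k+1}$ of its cluster carrying total dual weight $\Omega(\alpha^{k})$ (Lemma~\ref{lem:arrival}), and via a $b$-matching/Hall argument (Theorem~\ref{thm:charge-main}) all such decreases can be scheduled with only $O(1)$ charged to each arrival. The damping is over time, paid for by later arrivals, not over space within one round.

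The second gap is your dismissal of the deferral bookkeeping as ``a short queueing/potential argument.'' Once decrements are deferred, the maintained (virtual) radii can lag the true ones by an unbounded amount \emph{coordinate-wise} --- the paper states it does not know how to keep them close pointwise --- so constant competitiveness at every moment is not a consequence of a bounded backlog; it needs a separate argument showing that the \emph{greedy} choice of which deferred decrement to execute (largest value first) keeps the total weight of the lagging function within a constant factor of the true one. This is the head/tail exchange argument comparing the algorithm's schedule to the feasible schedule as two matchings (Theorem~\ref{thm:greedy} and Lemma~\ref{lem:end}), and it is the technical heart of the result; nothing in your proposal substitutes for it. Finally, even granting a constant-per-step bound, getting down to exactly one swap requires a further coarsening (performing decrements in blocks of size $K$, degrading the ratio to $2^{O(1)}$), which your outline does not address beyond asserting slack in the constants.
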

In fact, we can 
maintain a $2^{O(1/\delta)}$-approximate tree and perform only one
swap per $1/\delta$ rounds \stocoption{(details in the final version).}{(see \lref[Theorem]{thm:main1-str}).}

We discuss the ideas behind the algorithm in
\lref[Section]{sec:intuition}; at a high level, the algorithm is
based on the primal-dual method; our analysis is based on relating the
edges in the tree to dual balls, and tracking the changes via changes in
these dual values. This dual-based analysis is substantially different
from the primal-only analyses used in previous works, and we feel it
gives a better insight into the problem.

Our techniques also allow us to give a trade-off between the number of
swaps and the competitiveness: in fact, we first show a weaker result
that performs a constant number of swaps per arrival and maintains a
constant competitive tree (in \lref[Section]{sec:primal-dual}
and~\ref{sec:analysis}). In \stocoption{the full version}{\lref[Section]{sec:one-swap}}, we
show how refinements of our arguments can reduce the number of swaps and prove
\lref[Theorem]{thm:main1} and its extension mentioned
above.

Our second result is a simpler and improved amortized-budget analysis of
the greedy algorithm studied by~\cite{IW91, MSVW12}. For any $\eps \in
(0,1]$, consider the online algorithm $\mathcal{B}_{1+\eps}$ that
greedily connects each new vertex to the closest previous vertex, and
that also swaps a tree edge $e$ for a non-edge $f$ whenever $\len(e) \ge
(1+\eps)\,\len(f)$ and $T + f - e$ is also a spanning tree. By
construction, $\mathcal{B}_{1+\eps}$ maintains a tree that is
$(1+\eps)$-competitive against the best spanning tree.
\begin{theorem}[Tight Amortized Budget Algorithm]
  \label{thm:main2}
  For any $\eps \in (0,1]$, the algorithm $\mathcal{B}_{1+\eps}$ makes
  at most $n \cdot \log_{1+\eps} 4 \leq 2n/\eps$ swaps over the course of
  $n$ arrivals.
\end{theorem}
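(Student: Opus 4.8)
The plan is a one-line potential argument followed by one genuinely combinatorial lemma. Let $\Phi(T) := \sum_{e \in T}\log_{1+\eps}\len(e)$, and for $t \ge 1$ write $a_t$ for the distance from the $t$-th arriving point to its nearest predecessor --- equivalently, the length of the edge that $\mathcal{B}_{1+\eps}$ greedily adds at step $t$. Two observations drive everything: (i) a swap replaces an edge $e$ by a non-edge $f$ with $\len(e) \ge (1+\eps)\len(f)$, so it changes $\Phi$ by $\log_{1+\eps}(\len(f)/\len(e)) \le -1$; (ii) the greedy insertion at step $t$ is the only other event that changes $\Phi$, and it raises $\Phi$ by exactly $\log_{1+\eps} a_t$. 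Since $\Phi$ is $0$ on the initial edgeless tree, telescoping over the first $n$ arrivals gives
\[
 \#\{\text{swaps}\} \;\le\; \sum_{t=1}^{n}\log_{1+\eps} a_t \;-\; \Phi(T_n) \;=\; \log_{1+\eps}\!\Bigl(\tfrac{\prod_{t=1}^n a_t}{\prod_{e \in T_n}\len(e)}\Bigr),
\]
where $T_n$ denotes the (swap-stable) tree after $n$ arrivals. Thus the theorem reduces to the scale-free inequality $\prod_{t=1}^n a_t \le 4^n \prod_{e \in T_n}\len(e)$; the stated closed form then follows because $\log_{1+\eps}4 \le 2/\eps$ for $\eps \in (0,1]$, which is exactly $\eps\ln 2 \le \ln(1+\eps)$, true since $\ln(1+\eps) - \eps\ln 2$ is concave and vanishes at $\eps=0$ and $\eps=1$.

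To prove $\prod_t a_t \le 4^n \prod_{e\in T_n}\len(e)$ I would root $T_n$ at the initial point $0$, so that $v \mapsto e_v := (v,\pi(v))$ is a bijection from $\{1,\dots,n\}$ onto the edges of $T_n$; it then suffices to bound $\sum_v \log_{1+\eps}\bigl(a_v/\len(e_v)\bigr)$ by $n\log_{1+\eps}4$. A vertex $v$ whose parent $\pi(v)$ arrived before it is free: then $a_v \le d(v,\pi(v)) = \len(e_v)$, so its term is nonpositive. For a vertex $v$ whose parent arrived after it, walk up the tree to the first ancestor $w$ of $v$ that arrived before $v$ (it exists since $0$ did), observing that every vertex strictly between $v$ and $w$ on this path arrived after $v$; since $w$ was present when $v$ arrived, $a_v \le d(v,w)$, which by the triangle inequality is at most the sum of the lengths of the edges on the $v$--$w$ path of $T_n$. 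So the "excess" of each such $v$ is charged along its path to $w$, and the crux is to show that, aggregated over all vertices, this charging inflates $\prod_e \len(e)$ by a factor at most $4^n$ --- i.e., to control how many of these up-paths pass through a given edge and with what lengths. This is where I expect the real work to lie, and where swap-stability of $T_n$ must be invoked: it is exactly the presence of long runs of comparably long edges along the up-paths that would otherwise spoil a constant bound, and swap-stability is what rules such configurations out.

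A brief remark on reference trees: one can instead replace $\prod_{e\in T_n}\len(e)$ by $\prod_{e\in\mathrm{MST}_n}\len(e)$ at the cost of nothing, since (sorted in nonincreasing order) the edge-length vector of any spanning tree dominates coordinatewise that of the minimum spanning tree, so $\prod_{e\in T_n}\len(e)\ge\prod_{e\in\mathrm{MST}_n}\len(e)$; proving $\prod_t a_t\le 4^n\prod_{e\in\mathrm{MST}_n}\len(e)$ directly (e.g.\ by pairing arrivals with MST edges in Kruskal order) would be an alternative, but the combinatorial difficulty --- that a path of many short edges can separate two faraway points, so the naive bijection-by-bijection comparison fails by an unbounded factor --- is intrinsic and is precisely the point the argument must overcome to land the constant $4$.
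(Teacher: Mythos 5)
Your reduction is sound and matches the paper's: the potential/telescoping step showing that the number of swaps is at most $\log_{1+\eps}\bigl(\prod_t a_t / \prod_{e\in T_n}\len(e)\bigr)$ is exactly the paper's starting point, and your observation that $\prod_{e\in T_n}\len(e) \ge \prod_{e\in \mathrm{MST}}\len(e)$ is the paper's Lemma~\ref{lem:treemap} (proved there via the spanning-tree exchange bijection). The numerical step $\log_{1+\eps}4 \le 2/\eps$ is also fine. But everything after that is a gap, and it is the entire substance of the theorem: you reduce to the inequality $\prod_t a_t \le 4^n\prod_e \len(e)$ and then explicitly defer its proof (``this is where I expect the real work to lie''), offering only a charging sketch (walk up from each vertex to its first earlier-arrived ancestor) with no mechanism for controlling how many up-paths load a given edge. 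Since the lower-bound instances in the paper show the product ratio really can be exponential (e.g.\ $2^{1.25n}$), landing the constant $4$ per vertex is precisely the nontrivial content, and your proposal contains no argument for it.

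For comparison, the paper's proof of this inequality (Lemma~\ref{lem:potential}) goes through a quite different mechanism, and notably does \emph{not} use swap-stability of $T_n$, contrary to your expectation: the bound $\prod_t a_t \le 4^n \prod_{e\in\mathrm{MST}}\len(e)$ is a pure statement about nearest-predecessor distances versus MST edge lengths, independent of the swapping process. The ingredients are: (i) the quantities $\Delta_i$ (the least $D$ such that $[n]$ can be partitioned into $i$ parts of diameter $\le D$) together with a pigeonhole argument showing the $i$-th largest greedy edge has length at most $\Delta_i$ (Lemma~\ref{lem:kdiam}); (ii) a path-splitting lemma (Lemma~\ref{lem:splittingedge}) saying every tree path contains an edge $h$ with $c(P)/c(h) \le 4\, f(\ell_h)f(\ell_h')/f(\ell)$ for the special function $f(\ell) = 2^{2\ell-1}(2\ell-1)/\binom{2\ell}{\ell}$ satisfying $\sum_{i<\ell} f(\ell)/(f(i)f(\ell-i)) \le 4$ (Lemma~\ref{lem:goodfunction}); and (iii) a recursive selection of MST edges along longest paths of the shrinking forest, so that $\Delta_k/c(e_k) \le 4\,f(\ell)f(\ell')/f(\ell+\ell')$ and the product over $k$ telescopes to $4^n$. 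None of this is recoverable from your sketch, so the proof as proposed is incomplete at its core.
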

This result is asymptotically tight, as a lower bound of
$\Omega(n/\eps)$ is known~\cite{Vers12}. The previous best amortized
bound was $O(n/\eps \log 1/\eps)$ given by~\cite{MSVW12}, for a variant
of $\mathcal{B}_{1+\eps}$ which did not perform all possible
$(1+\eps)$-swaps. 
\stocoption{Due to space constraints, we defer the proof of
  \lref[Theorem]{thm:main2} 
  to the full version of the paper.}{The proof of
  \lref[Theorem]{thm:main2} appears in \lref[Section]{sec:all-swaps}. In \lref[Section]{sec:lbd} we give an
instance where algorithm $\mathcal{B}_2$ needs at least $1.25 n$ swaps;
no instances were known earlier where more than $n$ swaps were needed
for $\eps = 1$.}


\subsection{The Constant-Budget Algorithm: Ingredients, Intuition, and Ideas}
\label{sec:intuition}

One of the main difficulties in analyzing ``primal'' algorithms that
directly deal with edge lengths of the current tree is that swaps in the
tree are not local: two close-by edges may be swapped for two edges that
are far from each other, and spatio-temporal relationships between them
become difficult to reason about. Instead we take a primal-dual approach
that talks about duals around vertices---since vertices do not move, we
can argue about them more easily. The rest of this section outlines the
steps and the intuition behind them; the algorithm itself is summarized
in \lref[Section]{sec:nutshell}.

Say vertices $\{0,1,2, \ldots, i-1\}$ had arrived previously, and now vertex $i$ arrives. We first run a \emph{clustering process}
on the vertices in $[i]= \{0,1, \ldots, i\}$ 
(described in \lref[Section]{sec:clustering}): this is
similar to the moat-growing process in the
Agrawal-Klein-Ravi/Goemans-Williamson primal-dual algorithm, but here
we grow the clusters in discrete exponentially-sized steps.  
This clustering process defines integer ``ranks'' $\rank_i(v)$ for
vertices $v \in [i]$. 
Then we run
a \emph{tree-forming process} using these vertex ranks, which
outputs a tree $T^\star_i$ on $[i]$ that is a constant-approximation to
the optimal Steiner tree on $[i]$. This is where ranks are useful:
we ensure a correspondence between lengths of tree edges and vertex
ranks---a vertex of rank $k$ corresponds to some tree edge of length
$\approx \alpha^k$ (for some small constant $\alpha$), and so we can shift
our focus from tracking edge lengths to tracking vertex ranks. Our
clustering ensures that the ranks of existing vertices never increase as
future arrivals occur, and also that the total number of rank decrements
for all vertices over the course of $i$ arrivals is $O(i)$.
Furthermore, the tree-formation satisfies a Lipschitz property: if $s$
rank decrements occur due to the arrival of vertex $i$, then
$T^\star_{i}$ can be obtained from $T^\star_{i-1}$ by adding an edge
connecting vertex $i$ to its closest vertex in $[i-1]$, and then
performing at most $s$ edge swaps.  Putting these two facts together
gives an $O(1)$-\emph{amortized}-budget and $O(1)$-competitive
algorithm.

But we had promised a constant-\emph{worst-case}-budget algorithm; we
cannot directly use the algorithm above, since some arrivals may cause the ranks
of a linear number of vertices to drop. However, we can fix things, and
this is where the advantages of our dual-based approach become apparent.
In addition to the ranks, we also maintain \emph{virtual ranks}
$\vrank_i(v)$ for vertices $v \in [i]$, which also drop monotonically,
and which are upper bounds on the ranks. And we run the tree-forming
process on these virtual ranks to get the actual tree $T_i$. We want the
virtual ranks to be close to the real ranks, but also not to change too
drastically upon arrivals---so when vertex $i$ arrives, we define
$\vrank_i(\cdot)$ in such a way that
\begin{OneLiners}
\item $\vrank_i(v) \leq \vrank_{i-1}(v)$ for all $v \in [i-1]$ (virtual
  ranks are also monotone decreasing),

\item $\vrank_i(v) \geq \rank_i(v)$ for all $v \in [i-1]$ (virtual ranks
  are upper bounds for actual ranks),

\item $\vrank_i(i) = \rank_i(i)$ (the virtual rank equals the actual
   rank for new arrivals), and

\item $\| \vrank_i - \vrank_{i-1} \|_1 \leq O(1)$ (the number of virtual
    rank changes is only a constant).
\end{OneLiners}
By this last property and the Lipschitz-ness of our tree-formation
algorithm, the number of edge swaps to get from $T_{i-1}$ to $T_i$ is a
constant. 

But what about the competitiveness of the tree?
The cost of our tree $T_i$ is $\approx \sum_{v \in [i]}
\alpha^{\vrank_i(v)}$, whereas the ideal tree $T^*_i$ has cost $\approx
\sum_{v \in [i]} \alpha^{\rank_i(v)}$. Since we want the former sum to
be close to the latter, we define the virtual ranks by decrementing it
for those nodes $v$ for which $\vrank_{i-1}(v) > \rank_i(v)$ and the
numerical value of $\vrank_{i-1}(v)$ is the largest. The technical heart
of the paper lies in showing that this way of maintaining virtual ranks
gives us a constant approximation tree at all times; the analysis is
given in \lref[Section]{sec:analysis}.

\subsubsection{The Algorithm in a Nutshell}
\label{sec:nutshell}

Most of the above description is intuition and analysis. Our algorithm
is simply the following. When vertex $i$ arrives,
\begin{OneLiners}
\item[(i)] run clustering to get ranks $\rank_i(v)$ for all $v \in [i]$,
\item[(ii)] define the virtual ranks $\vrank_i(v)$ using the simple
  greedy rule
  described above, and
\item[(iii)] run tree-formation on the virtual rank function $\vrank_i$
  to get the tree $T_i$.
\end{OneLiners}
We emphasize that the clustering and tree-formation algorithms are just
two halves of the Agrawal-Klein-Ravi/Goe\-mans-Williamson
moat-growing primal-dual algorithm, by viewing the moat-growing and
edge-additions separately.

\subsection{Related Work}
\label{sec:related-work}

The online Steiner tree problem was studied by Imase and
Waxman~\cite{IW91}, who proved $\Theta(\log n)$-competi\-tive\-ness for the
problem when no swaps are allowed. The proof was simplified by Alon and
Azar~\cite{AlonAzar}, who also gave an $\Omega(\frac{\log n}{\log\log
  n})$-lower bound for planar point sets. Imase and Waxman also gave an
algorithm that maintained a $2$-competitive tree and used an amortized
budget of $O(\sqrt{n})$ over $n$ steps. They also considered the model
where vertices could leave the system: for this problem they also gave
an algorithm with the same amortized budget, but a weaker
$8$-competitiveness.

The primal-dual method has been a powerful tool in offline algorithm
design; see, e.g., the treatment in the textbooks~\cite{vvv,SW10}. This
technique was used in original Steiner forest papers of Agrawal, Klein,
and Ravi~\cite{AKR95} and Goemans and Williamson~\cite{GW95}. The
popularity and power of primal-dual in online algorithm design is more
recent (see, e.g., the monograph of Buchbinder and Naor~\cite{BN-mono}
for many successes).  Some early uses of primal-dual ideas in online
algorithms can be seen in the online Steiner forest analysis of
Awerbuch, Azar, and Bartal~\cite{AAB96}, and the algorithm and analysis
of Berman and Coulston~\cite{BC97}.

Apart from the results of~\cite{IW91, MSVW12}, the idea of making a
small number of alterations to maintain a good solution in online
settings had been studied for other problems. See, e.g., the works
of~\cite{Wes00, AGZ99, AAPW01, SSS, SV10, EL11} which study alterations in the
context of online scheduling and routing problems.
Ashwinkumar~\cite{Ashwin11} gives optimal results for the problem of
maximizing the value of an independent set in the intersection of $p$
matroids, where the elements of the universe arrive online; in his
model, an element can get added to the set and subsequently canceled (for a
price), but an element that has been dropped can never be added
subsequently. However, the ideas in these papers seem to be orthogonal
to ours.

Our results are also related, at a conceptual level, to those of Kirsch
and Mitzenmacher~\cite{KM07}, Arbitman et al.~\cite{ANS09} and related
works on \emph{de-amortizing} data structures (such as cuckoo
hash-tables) by maintaining a suitable auxiliary data structure (e.g., a
queue or a ``stash''); this allows them to achieve constant update times
with overwhelming probability. Since our goal is slightly different---to
achieve low cost---our de-amortization works by delaying certain updates
via a priority queue.


\section{A Constant-Swaps Algorithm}
\label{sec:primal-dual}

It will be useful to first prove a slightly weaker version of
\lref[Theorem]{thm:main1}, which will introduce the main
ideas.
\begin{theorem}
  \label{thm:weaker}
  Let $\alpha \geq 6$. There is an online algorithm that makes at most
  $K = 2\alpha^2$ swaps upon each vertex arrival, and for every $n$, maintains a tree
  $T_n$ with cost at most $C = \smash{\frac{2\alpha^5}{(\alpha-1)^2}}$
  times the cost of an optimal Steiner tree on the first $n$ points. 
\end{theorem}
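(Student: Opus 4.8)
The plan is to build the proof in three modular pieces, matching the "nutshell" outline: a clustering subroutine, a virtual-rank bookkeeping step, and a tree-formation subroutine, and then to combine their guarantees. First I would formalize the \emph{clustering process}: run a discretized moat-growing procedure on the metric restricted to $[i]$, where at scale $k$ each surviving cluster has grown a moat of radius $\approx \alpha^{k}$, and assign to each vertex $v$ the integer rank $\rank_i(v)$ equal to the scale at which $v$ stops being a "center." I would prove the three structural facts asserted in the intuition section: (a) $\rank_i(v)$ is nonincreasing in $i$ for fixed $v$ (a new point only merges clusters earlier, never later); (b) $\sum_v \rank_i(v)$ telescoping gives that the total number of rank \emph{decrements} over $i$ arrivals is $O(i)$; and (c) the packing/LP-duality argument showing $\sum_{v\in[i]}\alpha^{\rank_i(v)}$ is within a constant factor of $\Opt_i$, the optimal Steiner tree cost on $[i]$ — this is exactly the Agrawal--Klein--Ravi/Goemans--Williamson dual-fitting bound, adapted to the geometric growth, and is where the constants $\alpha\ge 6$ and the factor $\tfrac{2\alpha^5}{(\alpha-1)^2}$ will enter.

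Next I would handle the \emph{virtual ranks}. Define $\vrank_i$ by the greedy rule: start from $\vrank_{i-1}$ (with $\vrank_i(i):=\rank_i(i)$), and repeatedly decrement, among vertices $v$ with $\vrank_{i-1}(v)>\rank_i(v)$, the one of currently-largest virtual-rank value, stopping after a bounded number of decrements. I would verify the four bullet properties directly from this definition — monotonicity, domination of $\rank_i$, exactness on new arrivals, and $\|\vrank_i-\vrank_{i-1}\|_1=O(1)$ — the last being immediate since we cap the number of decrements per arrival. The \textbf{main obstacle} is the remaining claim: that this lazy, budget-capped updating still keeps $\sum_{v\in[i]}\alpha^{\vrank_i(v)}=O\!\big(\sum_{v\in[i]}\alpha^{\rank_i(v)}\big)$ at \emph{every} step. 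The danger is that deferred decrements pile up and the virtual ranks lag far above the true ranks. I would attack this with a potential/charging argument: think of the "excess" $\sum_v(\alpha^{\vrank_i(v)}-\alpha^{\rank_i(v)})$ as a queue, show each arrival adds only $O(1)$ new excess mass (because only $O(1)$ real ranks drop on average, and each drop at scale $k$ adds $\approx\alpha^k$), and show the greedy "decrement-the-largest" rule drains mass at least as fast as it arrives, so the queue stays bounded by a constant multiple of the current true cost. The key inequality is that decrementing a virtual rank from value $k$ to $k-1$ removes $\alpha^{k}-\alpha^{k-1}=\alpha^{k-1}(\alpha-1)$ of excess, which for $\alpha\ge 6$ dominates the incoming $\alpha^{k}$-sized contributions once amortized against the $O(1)$ decrement budget $K=2\alpha^2$; choosing $K=2\alpha^2$ is precisely what makes the drain rate exceed the fill rate.

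Finally I would specify the \emph{tree-formation} map $\vrank_i\mapsto T_i$: process vertices in decreasing order of (virtual) rank, and attach each vertex $v$ to the appropriate representative of its scale-$\rank$ cluster by a single edge, so that a vertex of virtual rank $k$ contributes one tree edge of length $\Theta(\alpha^{k})$; this gives $\cost(T_i)=\Theta\!\big(\sum_v\alpha^{\vrank_i(v)}\big)$, and combined with the two cost bounds above yields $\cost(T_n)\le C\cdot\Opt_n$ with $C=\tfrac{2\alpha^5}{(\alpha-1)^2}$. I would then prove the \emph{Lipschitz property}: if $\vrank_i$ differs from $\vrank_{i-1}$ in $s$ coordinates (each by a unit decrement), then $T_i$ is obtained from $T_{i-1}$ by adding the edge from $i$ to its nearest predecessor and performing at most $s$ swaps — because the tree-formation is "local" in the sense that changing one vertex's rank only re-routes its own connecting edge (and, carefully, the edges of vertices that used it as a representative, which the construction must arrange to be charged to the same $s$). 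Since $\|\vrank_i-\vrank_{i-1}\|_1\le K=2\alpha^2$, we get at most $K$ swaps per arrival, completing both halves of the theorem. The honest hard point, again, is the queue-boundedness argument in the middle paragraph; everything else is assembling standard primal-dual estimates with the geometric-scale twist.
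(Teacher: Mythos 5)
Your overall architecture matches the paper's: discretized moat-growing to define ranks, a dual-fitting bound showing $\sum_v \alpha^{\rank_i(v)}$ lower-bounds $\Opt([i])$, a Lipschitz tree-formation step in which a rank-$k$ vertex is charged roughly $\alpha^k$, and a budget-capped greedy virtual-rank function whose decrements drive $O(1)$ swaps per arrival. But the step you yourself flag as the ``honest hard point'' --- showing that the lazy, budget-capped greedy updates keep $\weight_i(\vrank_i)=O(\weight_i(\rank_i))$ at every step --- is where your sketch has a genuine gap, and the fill-versus-drain potential argument you propose would not work as stated. The fill in a single round is not $O(1)$ and not even bounded: one arrival can cause a linear number of vertices to drop rank, possibly at high scales, so no per-step comparison of ``incoming excess'' against the $K$ decrements performed that round can close the argument; the amortization must be against \emph{future} arrivals. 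That this is even possible is a geometric fact, not a bookkeeping one: the paper's \lref[Lemma]{lem:arrival} shows (via disjoint balls and an annulus/induction argument) that whenever a vertex's rank falls from $k$ to $k-2$ or below, subsequent arrivals within distance $\alpha^{k+1}$ of its cluster have $\sum_l \alpha^{\Init(l)} \geq \alpha^{k-2}$ with each $\Init(l)\le k$, and this is what powers the charging theorem (\lref[Theorem]{thm:charge-main}): all but the two most recent rank drops of every vertex can be assigned to rounds, at most $K=2\alpha^2$ per round, via a $b$-matching built scale by scale. Your sketch contains no analogue of this witness lemma, and without it the claim that ``the drain rate exceeds the fill rate'' is exactly the statement to be proved, not an argument for it.

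There is a second, subtler missing piece even granting such a feasible deferred schedule: the algorithm is greedy and does not know the schedule, so one must show that always decrementing the currently largest lagging virtual rank drains at least as much $\alpha$-weight as the schedule would. In the paper this is \lref[Lemma]{lem:end}, proved by taking the symmetric difference of the algorithm's matching with the schedule's matching and running an induction over alternating paths (the head/tail argument of \lref[Claim]{clm:head-tail}); the exclusion of the two most recent drops per vertex is also what produces the $\alpha^2$ slack in \lref[Theorem]{thm:greedy}. Your proposal treats this as immediate from ``decrementing at value $k$ removes $\alpha^{k-1}(\alpha-1)$,'' but note that $\alpha^{k-1}(\alpha-1)<\alpha^k$, so a single decrement does not dominate an incoming contribution at the same scale; the constants only work out through the $\alpha^{k-2}$-versus-$\alpha^{k}$ slack in the witness lemma and the choice $K=2\alpha^2 \ge \alpha^3/(\alpha-1)$ in the matching construction. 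In short: the clustering, dual-fitting, and Lipschitz tree-building parts of your plan are sound and essentially the paper's, but the de-amortization核心 --- the witness/charging theorem and the greedy-versus-schedule exchange argument --- is missing, and the potential-function shortcut you propose in its place fails on instances where one arrival triggers many simultaneous rank drops.
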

We describe the algorithm of \lref[Theorem]{thm:weaker} in this section,
and give its analysis in \lref[Section]{sec:analysis}. 
\stocoption{Due to space constraints, the proof of the single-swap
  algorithm is deferred to the full version of the paper.}{We then show how to
modify the algorithm slightly to trade off swaps for performance, and
hence get a single-swap algorithm in \lref[Section]{sec:one-swap}.}

In order to describe the online algorithm,
let us lay down some notation.
Let $[n]$  denote the integers $\{0, 1, 2, \ldots,
n\}$, and $[i \ldots j]$  denote the integers $\{i, i+1, \ldots, j\}$.
We associate the arriving vertices in the metric space with the
integers: we start off with the root vertex $0$, and the $n^{th}$
arriving vertex is called $n$; and hence the root and the first $n$
arriving vertices are identified with the set $[n]$.

For $i \geq 1$, when the $i^{th}$ vertex arrives, we begin \emph{round}
$i$. Round $i$ involves three steps:
\begin{OneLiners}
\item[(a)] running a clustering algorithm on the vertex
  set $[i]$ (itself involving several ``phases'') which defines
  the rank function $\rank_i: [i] \to \Z_{\geq 0}$ (described in
  \lref[Section]{sec:clustering}),
\item[(b)] getting a virtual rank function $\vrank_i$ from the actual
  rank function $\rank_i$ (as in \lref[Section]{sec:def-vrank}), and
\item[(c)] finally constructing the tree $T_i$ given the virtual rank function
  (described in \lref[Section]{sec:timed-pd}).
\end{OneLiners}
This concludes round $i$. We denote $\Run_i$ to be the run of the
clustering algorithm for round $i$; as mentioned above it has several
phases. Let $\Opt([i])$ denote the cost of the optimal Steiner tree on
$[i]$. We want to relate the cost of $T_i$ to $\Opt([i])$.

For a vertex $x \in [n]$ and subset $S \subseteq [n]$, define $d(x,S) :=
\min_{s \in S} d(x,s)$.  Similarly, let $d(S,T) = \min_{s \in S}
d(s,T)$.  For \emph{radius} $r \geq 0$, define the \emph{ball} $B(x,r) =
\{ y \in [n] \mid d(x,y) \leq r\}$.  For a set $S \sse [n]$ and radius
$r \geq 0$, let $B(S,r) = \cup_{s \in S} B(s,r) = \{y \mid d(y,S) \leq r
\}$.


\subsection{The Clustering Procedure}
\label{sec:clustering}

Let $\alpha$ be a universal constant $\geq \const$. We assume that all
inter-vertex distances are at least $2\alpha$, else we can scale things
up. The run $\Run_i$ on vertex set $[i]$ starts off with trivial
clustering $\C_i(0)$ containing $i+1$ disjoint \emph{clusters} $\{\{0\},
\{1\}, \ldots, \{i\}\}$. At the beginning of phase $t$ of $\Run_i$, we
have a clustering $\C_i(t-1)$, which is a partition of $[i]$, produced
by phase $t-1$. The invariant is that two distinct clusters $C, C' \in
\C_i(t-1)$ have $d(C,C') \geq 2\alpha^t$.  (Since all distances are at
least $2\alpha$, the clustering $\C_i(0)$ satisfies this invariant for
the first phase $t=1$.) We start with the clusters in $\C_i(t-1)$, and
while there exist two clusters $C$ and $C'$ that satisfy $d(C,C') <
2\alpha^{t+1}$, we \emph{merge} these two clusters into one (i.e., we
remove $C$ and $C'$ and add in $C \cup C'$). Note the resulting
clustering, which we call $\C_i(t)$, satisfies minimum inter-cluster distance
$2\alpha^{t+1}$ by construction. This defines clusterings $\C_i(t)$ for
all $t \in \Z_{\geq 0}$.

For each cluster $C \in \C_i(t)$, define $C$'s \emph{leader} as the vertex
in $C$ with the least index.  Note that for $t = 0$ each vertex is a leader,
and as $t$ gets very large, only the vertex~$0$ remains the leader.  Let
the \emph{rank} $\rank_i(x)$ of vertex $x \in [1\ldots i]$ be the
largest $t$ for which $x$ is the leader of its cluster in $\C_i(t)$ (i.e., at the end of
phase $t$); define the rank
of vertex $0$ as $\rank_i(0):= \infty$. Finally, for a value $j \geq 0$
and a function $f: [j] \to \Z_{\geq 0}$, define its weight
\begin{gather}
  \weight_j(f) := \sum_{l = 1}^{j} \alpha^{f(l)}. \label{eq:wt}
\end{gather}
(Note the definition of $\weight_j(\cdot)$ \emph{does not} include
the index~$0$ in the sum.)

\subsubsection{Properties of the Ranks and Clusterings}
\label{sec:rank-ppties}

We now prove some useful properties about ranks and clusterings: these
are not needed for the algorithm, only for the analysis.

\begin{lemma}
  \label{lem:lbd}
  For any $i \geq 1$, we have $\weight_i(\rank_i) \leq
  \frac{\OPT([i])}{\alpha - 1}$.
\end{lemma}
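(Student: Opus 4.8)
The plan is to run the standard moat-growing lower bound, with the clusterings $\C_i(t)$ themselves serving as the moats. The first step is a bookkeeping identity: I want to rewrite $\weight_i(\rank_i)$ in terms of the sizes $m_t := |\C_i(t)|$. Since $\C_i(t+1)$ is obtained by merging clusters of $\C_i(t)$, the least-index vertex (``leader'') of a cluster at phase $t$ is also the leader of its smaller cluster at every earlier phase; hence $\rank_i(v)\ge t$ holds exactly when $v$ is the leader of its cluster in $\C_i(t)$, and therefore $|\{v\in\{1,\dots,i\}:\rank_i(v)\ge t\}| = m_t-1$ (every cluster except the one containing the root $0$). Summing the telescoping identity $\alpha^{\rank_i(v)}=\sum_{t=0}^{\rank_i(v)}(\alpha^t-\alpha^{t-1})$ (with $\alpha^{-1}:=0$) over $v$ then yields $\weight_i(\rank_i)=\sum_{t\ge 0}(\alpha^t-\alpha^{t-1})(m_t-1) = \sum_{t\ge 0}(m_t-m_{t+1})\alpha^t$. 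This is the only place the combinatorics of the clustering enters, and it is routine once one notices the monotonicity of leaders under merging.

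Next I would lower-bound $\OPT([i])$ by a certificate of matching value. To compare against an optimal \emph{Steiner} tree (not merely the MST, which would cost a factor of $2$ too much) I would route the argument through a spanning cycle: double the edges of an optimal Steiner tree $T^\star$, take an Euler tour, and shortcut it (via the triangle inequality) to a Hamilton cycle $H$ on $[i]$, so that $\cost(H)\le 2\,\OPT([i])$. For an edge $e=(u,v)$ of $H$, let $t^\star(e)$ be the first phase in which $u$ and $v$ lie in a common cluster; applying the algorithm's invariant at phase $t^\star(e)-1$ (distinct clusters of $\C_i(t^\star(e)-1)$ are at distance $\ge 2\alpha^{t^\star(e)}$) gives $\len(e)\ge 2\alpha^{t^\star(e)}$. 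Moreover $\{e\in H:t^\star(e)\ge t\}$ is exactly the set of $H$-edges crossing the partition $\C_i(t-1)$, and a Hamilton cycle crosses a partition into $m$ parts at least $m-1$ times (at least $m$ when $m\ge 2$), so this set has size $\ge m_{t-1}-1$.

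Combining, $\cost(H)\ge\sum_{e\in H}2\alpha^{t^\star(e)}$; expanding each $\alpha^{t^\star(e)}$ by the same telescoping identity and summing by parts, the bounds $|\{e:t^\star(e)\ge t\}|\ge m_{t-1}-1$ convert this into $\cost(H)\ge 2\alpha\,\weight_i(\rank_i)$ — the extra factor of $\alpha$ appearing because the cluster sizes are now indexed one scale lower ($t-1$ instead of $t$), i.e.\ $(\alpha^t-\alpha^{t-1})=\alpha(\alpha^{t-1}-\alpha^{t-2})$. With $\cost(H)\le 2\,\OPT([i])$ this gives $\weight_i(\rank_i)\le\OPT([i])/\alpha\le\OPT([i])/(\alpha-1)$, which is the claim (and in fact slightly stronger). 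Equivalently, one can present the second step as exhibiting a feasible dual to the Steiner-tree cut LP — moats being the clusters $C\in\C_i(t)$ with $0\notin C$, of width $\propto\alpha^t-\alpha^{t-1}$, with the inter-cluster-distance invariant giving feasibility, and the $\tfrac12$-integral cycle solution certifying that the LP optimum is at most $\OPT$ — then invoking weak duality; the arithmetic is identical.

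The step I expect to be the main obstacle is the second one: getting the comparison against the optimal \emph{Steiner} tree with exactly the constant $\alpha-1$. The naive ``ball of radius $\Theta(\alpha^t)$ around each cluster'' dual does not quite work, because a cluster can be geometrically very spread out, so those balls cannot be packed along an edge with the needed slack; passing to a spanning cycle (equivalently, using the $\tfrac12$-integral cycle solution of the cut LP) is precisely what lets the inter-cluster-distance invariant be applied cleanly edge-by-edge and recovers the factor $\alpha-1$. Everything past the reformulation is then a summation-by-parts computation.
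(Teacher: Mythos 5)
Your proof is correct --- and in fact yields the slightly stronger bound $\weight_i(\rank_i)\le \OPT([i])/\alpha$ --- but it takes a genuinely different formal route from the paper's. The paper argues by explicit LP duality: it sets $y_C=\alpha^t(\alpha-1)$ for every cluster $C\in\C_i(t)$ with $0\notin C$, verifies each pair constraint by summing $\sum_{t'=0}^{t}2\alpha^{t'}(\alpha-1)=2(\alpha^{t+1}-1)\le d(j,l)$ over the phases at which $j$ and $l$ are still separated, lower-bounds the dual objective by the injection sending each vertex to the cluster it last leads (your identity $|\{v\ge 1:\rank_i(v)\ge t\}|=|\C_i(t)|-1$ is that same count in Abel-summed form), and finishes by weak duality. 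You avoid duality altogether: you double the optimum and shortcut to a Hamiltonian cycle $H$ with $\cost(H)\le 2\,\OPT([i])$, bound each cycle edge below by $2\alpha^{t^\star(e)}$ via the inter-cluster-distance invariant, and count how often $H$ crosses each partition $\C_i(t-1)$. These are the primal and dual faces of the same moat bound (as you yourself note), but your version buys the constant $\alpha$ in place of $\alpha-1$ and is valid verbatim even if $\OPT([i])$ is read as a Steiner tree permitted to use Steiner points, since tour-shortcutting works in any metric extension; the paper's version is shorter (no Euler-tour detour, no summation by parts) and produces an explicit dual certificate. I checked your final computation: using $t^\star(e)\ge 1$ for every cycle edge and the crossing bound $|\{e\in H: t^\star(e)\ge t\}|\ge |\C_i(t-1)|-1$ only for $t\ge 1$, one indeed gets $\sum_{e\in H}\alpha^{t^\star(e)}\ge \alpha\,\weight_i(\rank_i)$ and hence $\cost(H)\ge 2\alpha\,\weight_i(\rank_i)$; just start the telescoping at $t=1$, since at $t=0$ your stated bound references an undefined cluster count. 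One cosmetic quibble: your closing remark that a cluster-based dual ``does not quite work'' is overly pessimistic --- the paper's cluster dual is feasible as is, because feasibility is checked pair-by-pair via the last phase at which the pair is separated, with no packing of balls along an edge required.
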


\begin{proof}
  Recall the dual LP for the natural relaxation of the Steiner tree
  problem. We consider the graph $G = ([i], \binom{[i]}{2})$, with the
  length of the edge $(i,j)$ being $d(i,j)$. The dual says:
  \begin{align}
    \max \sum_S y_S & \tag{$DLP_{ST}$} \\
    \sum_{S : |S \cap \{j,l\}| = 1} y_S &\leq  d(j,l) \qquad\qquad\forall
    j, l \in [i] \label{eq:dualLP} \\
    y_S &\geq 0. \notag
  \end{align}
  We shall define a feasible solution $y$ such that $\sum_S y_S \geq
  \weight_i(\rank_i)\cdot(\alpha-1)$; by weak duality this will imply that $\OPT([i]) \geq
  \weight_i(\rank_i)\cdot(\alpha-1)$. Consider the clustering defined by  the run $\Run_i$. For
  every $t \geq 0$, and every cluster $C \in \C_i(t)$ such that vertex
  $0 \notin C$, we define $y_C = \alpha^{t}(\alpha -1)$. For all other sets $S$,
  set $y_S = 0$.

  To check feasibility, consider any edge $(j,l)$, and let $t$ be the
  last phase such that $j$ and $l$ lie in different clusters in
  $\C_i(t)$. For all phases $t' \leq t$, we contribute $2\alpha^{t'}(\alpha-1)$
  towards the left hand side of (\ref{eq:dualLP}). Moreover, $d(j,l)
  \geq 2 \alpha^{t+1}$, because they are in different clusters in
  $\C_i(t)$). Hence, the LHS of~(\ref{eq:dualLP}) is $\sum_{t'=0}^{t} 2
  \alpha^{t'}(\alpha-1) = 2(\alpha^{t+1}-1) \leq d(j,l)$.

  Now consider the objective function value $\sum_C y_C$, which we claim
  is at least $\weight(\rank_i) \cdot (\alpha-1)$. Indeed, consider the
  following map $g$ from $[1 \ldots i]$ to $\cup_t \C_i(t)$: for vertex $j
  \in [1 \ldots i]$, let $g(j)$ be the cluster in $\C_i(\rank_i(j))$
  for which $j$ is the leader. Since this is a 1-1 mapping,
  $$ \sum_C y_C \geq \sum_{j=1}^i y_{g(j)} = \sum_{j=1}^i
  \alpha^{\rank_i(j)}(\alpha-1) = \weight_i(\rank_i)\cdot(\alpha-1). $$ This proves the lemma.
\end{proof}

The following lemma shows that if a set of vertices $S$ is far from the rest of the vertices, then
until a high enough phase $t$, any cluster in $\C_i(t)$ will be a subset
of $S$ or of $[i] \setminus S$.
\begin{lemma}
  \label{lem:empty}
  Suppose $S \sse [i]$ such that $B(S, 2 \alpha^t) \cap [i] = S$ for
  some value $t$. Then for any phase $k \leq t-1$ and any cluster $C$ in
   $\C_i(k)$, either $C \subseteq S$ or $C \cap
  S = \emptyset$.
\end{lemma}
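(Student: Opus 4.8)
The plan is a straightforward induction on the phase index $k$: I will show that for every $k$ with $0 \le k \le t-1$, every cluster of $\C_i(k)$ is either contained in $S$ or disjoint from $S$. The one piece of setup I need is to unpack the hypothesis. By definition $B(S,2\alpha^t) = \{ y \mid d(y,S) \le 2\alpha^t\}$, so $B(S, 2\alpha^t)\cap[i] = S$ says precisely that every vertex of $[i]\setminus S$ is at distance strictly more than $2\alpha^t$ from $S$; equivalently $d(S, [i]\setminus S) > 2\alpha^t$ (and if $[i]\setminus S = \emptyset$ the lemma is trivial since then every cluster is a subset of $S$).

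For the base case $k=0$, the clustering $\C_i(0)$ consists of the singletons $\{v\}$ for $v\in[i]$, and each singleton is trivially a subset of $S$ or disjoint from $S$. For the inductive step, suppose the statement holds for $\C_i(k-1)$ where $1 \le k \le t-1$. Phase $k$ produces $\C_i(k)$ by starting from $\C_i(k-1)$ and repeatedly merging two current clusters $C,C'$ whenever $d(C,C') < 2\alpha^{k+1}$. At every intermediate stage of this process each current cluster is a union of clusters of $\C_i(k-1)$, hence by the inductive hypothesis is either contained in $S$ or disjoint from $S$. I claim no merge ever combines a cluster $C\subseteq S$ with a cluster $C'$ disjoint from $S$: such a $C'$ satisfies $C'\subseteq [i]\setminus S$, so $d(C,C') \ge d(S,[i]\setminus S) > 2\alpha^t \ge 2\alpha^{k+1}$, where the last inequality uses $k+1 \le t$. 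This contradicts the merge condition $d(C,C') < 2\alpha^{k+1}$. Hence every merge in phase $k$ is between two subsets of $S$ or between two sets disjoint from $S$, so the ``subset-of-$S$-or-disjoint-from-$S$'' property is preserved, and $\C_i(k)$ satisfies it. This completes the induction.

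I do not expect a real obstacle here; the only thing to be careful about is that the property must be maintained \emph{throughout} the greedy merging inside a single phase, not merely compared between consecutive clusterings $\C_i(k-1)$ and $\C_i(k)$ — merging clusters can shrink inter-cluster distances, but it can never push the distance between an $S$-cluster and a non-$S$-cluster below $d(S,[i]\setminus S) > 2\alpha^t$, which dominates the phase-$k$ merge threshold $2\alpha^{k+1}$ for all $k \le t-1$. That single observation drives the whole argument.
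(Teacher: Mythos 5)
Your proof is correct and follows essentially the same route as the paper's: induction on the phase index, using that the separation $d(S,[i]\setminus S) > 2\alpha^t$ strictly exceeds the phase-$k$ merge threshold $2\alpha^{k+1}$ for $k \le t-1$, so no merge can join an $S$-cluster with a non-$S$-cluster. Your extra care about maintaining the property throughout the greedy merging within a single phase is a point the paper's proof glosses over, but both arguments are the same in substance.
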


\begin{proof}
  Proof by induction on the phases of run $\Run_i$. For the base case, each
  cluster in $\C_i(0)$ is a singleton and the claim holds. Now suppose
  for some phase $k < t-1$, each $C \in \C_i(k)$ either lies within $S$
  or is disjoint from it. Note that the assumption  $B(S, 2\alpha^t)
  \cap [i] = S$ is  same as saying $d(S, [i]\setminus S) >
  2\alpha^t$. So, if $C, C' \in \C_i(k)$ satisfy $C \subseteq S$ and $C'
  \cap S = \emptyset$, then $d(C,C') > 2 \alpha^t \geq 2\alpha^{k+1}$,
  and we will not merge these two clusters in $\C_i(k+1)$. Hence, the
  claim holds for phase $k+1$ as well.
\end{proof}

The clusterings produced by the runs $\Run_{i-1}$ on the set $[i-1]$,
and $\Run_{i}$ on $[i]$ are closely related:
the clustering $\C_{i-1}(t)$ is a refinement of the clustering
$\C_{i}(t)$, as the next lemma shows.

\begin{lemma}
  \label{lem:relating}
  For a cluster $C \in
  \C_{i}(t)$, exactly one of the following holds:
  \begin{OneLiners}
  \item[(a)] vertex $i \not\in C$: in this case
    $C$ is also a cluster in $\C_{i-1}(t)$, or
  \item[(b)] vertex $i \in C$: in this case $C = \{i\} \cup C_1
    \cup C_2 \cup \ldots \cup C_p$ for some $p \geq 0$ clusters $C_1, \ldots,
    C_p \in \C_{i-1}(t)$.
  \end{OneLiners}
  (In the latter case, note that $p$ may equal $0$, in which case $C = \{i\}$.)
\end{lemma}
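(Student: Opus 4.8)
The plan is to run the two clustering processes $\Run_{i-1}$ and $\Run_i$ in lockstep, phase by phase, and prove by induction on the phase $t$ the following strengthened statement: every cluster $C \in \C_i(t)$ either does not contain vertex $i$ and is exactly a cluster of $\C_{i-1}(t)$, or contains vertex $i$ and is of the form $\{i\} \cup C_1 \cup \dots \cup C_p$ with each $C_j \in \C_{i-1}(t)$. Equivalently, $\C_{i-1}(t)$ is obtained from $\C_i(t)$ by deleting vertex $i$ from whichever cluster contains it (and, if that cluster becomes empty, deleting it). First I would check the base case $t = 0$: here $\C_i(0)$ is the partition into singletons $\{0\},\dots,\{i\}$, and $\C_{i-1}(0)$ is the partition into singletons $\{0\},\dots,\{i-1\}$, so the cluster $\{i\}$ is the ``case (b) with $p=0$'' cluster and every other singleton is a cluster of $\C_{i-1}(0)$; the claim holds.

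For the inductive step, assume the statement holds at the end of phase $t$, and consider how phase $t+1$ proceeds in each run. Recall that phase $t+1$ starts from $\C_i(t)$ (resp.\ $\C_{i-1}(t)$) and repeatedly merges any two clusters at distance $< 2\alpha^{t+1}$ until no such pair remains; the resulting partition is $\C_i(t+1)$ (resp.\ $\C_{i-1}(t+1)$). The key observation is that distances between \emph{subsets of $[i-1]$} are the same whether computed inside run $\Run_i$ or run $\Run_{i-1}$, since the metric on $[i-1]$ does not change when vertex $i$ arrives. So, let $C^\star \in \C_i(t)$ be the unique cluster containing $i$, write $C^\star = \{i\} \cup C_1 \cup \dots \cup C_p$ with $C_j \in \C_{i-1}(t)$ by the inductive hypothesis, and observe that the remaining clusters of $\C_i(t)$ are exactly the clusters of $\C_{i-1}(t)$ other than $C_1,\dots,C_p$. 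Now compare the merging process of phase $t+1$ in the two runs. In $\Run_{i-1}$, we merge clusters of $\C_{i-1}(t)$; in $\Run_i$, we merge clusters of $\C_i(t)$, which is the same collection except that $C_1,\dots,C_p$ have been pre-merged together with the extra vertex $i$. Any merge that $\Run_{i-1}$ performs between two clusters disjoint from $C^\star$ is also available and forced in $\Run_i$ (same distance), and vice versa; and any chain of merges in $\Run_{i-1}$ that would eventually join some of the $C_j$'s to other clusters corresponds, in $\Run_i$, to those other clusters being pulled into $C^\star$. Since both processes merge greedily until the minimum inter-cluster distance reaches $2\alpha^{t+1}$, they reach the \emph{same} partition of $[i-1]$ away from the $i$-containing cluster, and the $i$-containing cluster of $\C_i(t+1)$ is exactly $\{i\}$ together with a union of clusters of $\C_{i-1}(t+1)$ (namely the blocks of $\C_{i-1}(t+1)$ that, after running $\Run_{i-1}$'s phase $t+1$, intersect $C_1 \cup \dots \cup C_p$ or get drawn toward $i$). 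This gives exactly cases (a) and (b), completing the induction.

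The step I expect to be the main obstacle is making precise the claim that the two greedy merge processes ``agree off the $i$-cluster.'' The merging in phase $t+1$ is not canonical in the order of merges, but the final partition is: it is the partition into connected components of the graph on $\C_i(t)$ (resp.\ $\C_{i-1}(t)$) whose edges join clusters at distance $< 2\alpha^{t+1}$ — this is a standard fact, since single-linkage merging to a fixed threshold produces the transitive closure regardless of order. So the cleanest way to carry this out is to phrase phase $t+1$ as ``take connected components under the threshold-$2\alpha^{t+1}$ proximity graph,'' and then the inductive step reduces to a graph-theoretic statement: contracting the vertex set $\{C^\star\} = \{\{i\}\cup C_1 \cup\dots\cup C_p\}$ into the vertices $\{C_1,\dots,C_p\}$ (and adding an isolated-or-attached vertex for $i$) does not change the partition into connected components on the remaining vertices, and merges precisely the components meeting $\{C_1,\dots,C_p\}$. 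One subtlety to handle carefully is the triangle inequality interaction: adding vertex $i$ could in principle create a \emph{new} short connection between two clusters $C_j, C_k \subseteq [i-1]$ via $i$ as an intermediary — but this only helps merge things that end up inside $C^\star$ anyway, so it does not violate case (b); I would note explicitly that this is why $p$ can be larger than the number of ``original'' pieces and why the $C_j$'s are allowed to be an arbitrary subcollection. With the connected-components formulation in hand, the rest is routine.
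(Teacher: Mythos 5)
Your proposal is correct and takes essentially the same route as the paper's proof: induction on the phase, recasting each phase as connected components of a threshold proximity graph on the current clusters (the paper's auxiliary graphs $G_{i-1}(t)$ and $G_i(t)$), noting that distances among clusters of $[i-1]$ are unchanged between the two runs so merges away from the $i$-containing cluster coincide, while any new connections created through vertex $i$ only pull whole clusters of $\C_{i-1}(t+1)$ into the $i$-cluster, which is exactly what case (b) allows. The only nit is an indexing slip: phase $t+1$ merges clusters at distance below $2\alpha^{t+2}$, not $2\alpha^{t+1}$, though this does not affect the argument.
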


\begin{proof}
  The proof is again by induction on $t$. At $t=0$, this is  true
  because all clusters are singleton elements.  Suppose the claim of
  the lemma holds for some phase $t \geq 0$. Let $C_1, \ldots, C_l$ be 
  the clusters in $\C_{i-1}(t)$. By the induction hypothesis  
  we can renumber these clusters in $\C_{i-1}(t)$ such  that 
  the clusters in $\C_i(t)$ are $\{i\} \cup C_1 \cup \ldots \cup C_p, 
  C_{p+1}, \ldots, C_l$ for some $p \geq 0$. We construct an auxiliary graph $G_{i-1}(t)$ with 
  vertices corresponding to the clusters in $\C_{i-1}(t)$, and join
  two clusters $C_i, C_j$ by an edge in $G_{i-1}(t)$ if $d(C_i, C_j) \leq 2 \alpha^{t+2}$. 
  By the definition of the clustering process, the clustering $\C_{i-1}(t+1)$ is obtained by taking the 
  unions of the clusters in each connected component of $G_{i-1}(t)$. We can define
  $G_i(t)$ similarly, and again, the clusters in $\C_{i}(t)$ correspond to connected
  components of $G_i(t)$. 

  Now observe that if $i, j > p$, we have an edge between $(C_i, C_j)$
  in $G_i(t)$ exactly when
  we have this edge in $G_{i-1}(t)$. And if $i \leq p, j > p$, and we have the edge 
  $(C_i, C_j) \in G_{i-1}(t)$, then we have an edge between $C_j$ and 
  $\{i\} \cup C_1 \cup C_2 \cup \ldots \cup C_p$ in $G_i(t)$. The hypothesis for $t+1$ follows from these facts. 
\end{proof}



\begin{lemma}
  \label{lem:delayed-merge}
  If $j,l \in [i-1]$ lie in a common cluster in $\C_i(t)$, then in run
  $\Run_{i-1}$ they lie in a common cluster either in $\C_{i-1}(t)$ or in
  $\C_{i-1}(t+1)$.
\end{lemma}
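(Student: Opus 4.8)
The plan is to reduce co-clustering in the runs $\Run_k$ to an elementary metric-reachability condition, and then to observe that deleting the single new vertex $i$ from a connecting path costs only one phase. First I would prove the following characterization: for every $k$ and every $t \ge 0$, two vertices $u,v \in [k]$ lie in a common cluster of $\C_k(t)$ if and only if there is a finite sequence $u = w_0, w_1, \dots, w_m = v$ of vertices in $[k]$ with $d(w_a, w_{a+1}) < 2\alpha^{t+1}$ for all $a$. With this in hand, \lref[Lemma]{lem:delayed-merge} follows quickly: given $j, l \in [i-1]$ lying in a common cluster of $\C_i(t)$, take such a connecting sequence inside $[i]$ and assume it is simple (repeated vertices can be excised, which only deletes edges). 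If the sequence avoids the new vertex $i$ it lies entirely in $[i-1]$, so $j$ and $l$ already share a cluster of $\C_{i-1}(t)$. Otherwise $i = w_a$ for a unique internal index $a$, and the triangle inequality gives $d(w_{a-1}, w_{a+1}) \le d(w_{a-1}, i) + d(i, w_{a+1}) < 4\alpha^{t+1} \le 2\alpha^{t+2}$ since $\alpha \ge 2$; replacing the detour $w_{a-1}, i, w_{a+1}$ by the single edge $w_{a-1}w_{a+1}$ yields a connecting sequence in $[i-1]$ all of whose gaps are below $2\alpha^{t+2}$, so by the characterization applied at phase $t+1$ the vertices $j$ and $l$ share a cluster of $\C_{i-1}(t+1)$.

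To prove the characterization I would induct on $t$. The base case $t = 0$ holds because $\C_k(0)$ is the partition into singletons and, since all inter-vertex distances are at least $2\alpha$, any sequence with consecutive gaps below $2\alpha$ must be constant. For the inductive step I would use the fact already established in the proof of \lref[Lemma]{lem:relating}: $\C_k(t)$ is obtained from $\C_k(t-1)$ by taking as its clusters the connected components of the graph $G_k(t-1)$ whose nodes are the clusters of $\C_k(t-1)$ and whose edges join two clusters at distance below $2\alpha^{t+1}$. The ``if'' direction is then easy: a connecting sequence with gaps below $2\alpha^{t+1}$ pushes forward to a walk among the clusters of $\C_k(t-1)$ along edges of $G_k(t-1)$, since consecutive vertices sit in the same cluster or in two clusters within distance $2\alpha^{t+1}$; hence $u$ and $v$ land in one component, i.e.\ one cluster of $\C_k(t)$. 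For the ``only if'' direction, a path in $G_k(t-1)$ between the clusters of $u$ and of $v$ supplies, for each consecutive pair of clusters, witness vertices at distance below $2\alpha^{t+1}$; splicing these together with connecting sequences of gap below $2\alpha^{t} < 2\alpha^{t+1}$ inside each traversed cluster of $\C_k(t-1)$ (furnished by the induction hypothesis) produces the required sequence from $u$ to $v$.

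The hard part will be the characterization itself, and specifically the fact that the merge step is iterative: as clusters are fused their pairwise distances change, so ``$\C_k(t)$ is the set of connected components of the threshold-$2\alpha^{t+1}$ graph on $\C_k(t-1)$'' is not a tautology --- it rests on the observation $d(A \cup B, X) = \min\{d(A,X), d(B,X)\}$, so that fusing $A$ and $B$ leaves the result within the threshold of exactly those clusters that $A$ or $B$ was within the threshold of. This is precisely the statement the paper already uses when it writes that $\C_{i-1}(t+1)$ is the union of the connected components of $G_{i-1}(t)$, so I would cite it rather than reprove it. The only remaining quantitative ingredient is the factor-of-two slack $4\alpha^{t+1} \le 2\alpha^{t+2}$, valid since $\alpha \ge 2$; it is exactly this slack that forces the conclusion of \lref[Lemma]{lem:delayed-merge} to permit phase $t+1$ rather than phase $t$.
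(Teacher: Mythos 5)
Your proof is correct, and it takes a genuinely different route from the paper's. You first establish a single-linkage characterization of the whole clustering process --- $u,v$ share a cluster of $\C_k(t)$ iff they are joined by a chain in $[k]$ with all gaps below $2\alpha^{t+1}$ --- and then prove the lemma by chain surgery: excising the unique occurrence of the new vertex $i$ costs, by the triangle inequality, one gap of size below $4\alpha^{t+1}\leq 2\alpha^{t+2}$, i.e.\ one phase. The paper instead argues at the cluster level: it invokes \lref[Lemma]{lem:relating} to write the $\C_i(t)$-cluster containing $j,l$ as $\{i\}\cup C_1\cup\cdots\cup C_p$ with $C_q\in\C_{i-1}(t)$, uses \lref[Lemma]{lem:empty} to show $d(i,C_q)\leq 2\alpha^{t+1}$ for each $q$, and then applies exactly the same triangle-inequality slack $4\alpha^{t+1}\leq 2\alpha^{t+2}$ to merge all the $C_q$ in phase $t+1$ of $\Run_{i-1}$. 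So the quantitative heart is identical, but the scaffolding differs: the paper's proof is shorter given the lemmas it has already proved, while yours is more self-contained and elementary, correctly handles the strict thresholds, and yields a reusable structural fact (the clusterings are threshold-graph components on vertices) that subsumes the ``connected components of the auxiliary graph'' observation the paper states inside the proof of \lref[Lemma]{lem:relating}; your inductive step does legitimately rest on that observation (iterative merging with set distances equals one-shot components, via $d(A\cup B,X)=\min\{d(A,X),d(B,X)\}$), and citing it as you propose is fine, though proving it takes only a line.
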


\begin{proof}
  Suppose $j,l \in [i-1]$ lie in a common cluster in $\C_i(t)$, but
  belong to different clusters of $\C_{i-1}(t)$. By
  \lref[Lemma]{lem:relating}, there are a set of clusters $C_1, \ldots,
  C_p \in \C_{i-1}(t)$ such that $C = \{i\} \cup C_1 \cup \ldots \cup
  C_p$ is a cluster in $\C_i(t)$, and these two vertices $j,l \in C$.
  Consider a cluster $C_q$ for $1 \leq q \leq p$.  We claim that $d(i,
  C_q) \leq 2 \alpha^{t+1}$. Suppose not. The fact that $C_q$ is a
  cluster in $\C_{i-1}(t)$ implies that $d(C_q, [i-1] \setminus C_q) \geq 2
  \alpha^{t+1}. $ Then, $d(C_q, [i] \setminus C_q) \geq 2 \alpha^{t+1}$ as
  well. But then, by \lref[Lemma]{lem:empty}, there cannot be a cluster
  in $\C_i(t)$ containing a vertex from both $C_q$ and $[i] \setminus C_q$,
  which contradicts the assumption about the cluster $C \in \C_i(t)$.
  Now applying the triangle inequality, for any $q, q' \in [1 \ldots
  p]$, the distance $d(C_q, C_{q'}) \leq d(i, C_q) + d(i, C_{q'}) \leq 4
  \alpha^{t+1} \leq 2 \alpha^{t+2}$. (Recall that $\alpha \geq 2$.)
  Consequently, we will merge $C_q$ and $C_{q'}$ in phase $t+1$ of the
  run $\Run_{i-1}$. Indeed, all the vertices in $C_1 \cup \ldots \cup
  C_q$ will lie in a common cluster in $\C_{i-1}(t+1)$, which proves the
  lemma.
\end{proof}




\begin{corollary}[Ranks are Monotone]
  \label{cor:rank-fall}
  For $j \in [i-1]$, $\rank_i(j) \leq \rank_{i-1}(j) \leq \rank_i(j) +
  1$. Hence the rank of a vertex is non-increasing as a function of $i$.
\end{corollary}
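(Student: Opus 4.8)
The plan is to prove the two displayed inequalities separately, each by a short structural argument using the refinement lemmas above, and then read off the monotonicity claim by chaining over consecutive arrivals. Throughout I may assume $j \ge 1$, since $\rank_i(0) = \rank_{i-1}(0) = \infty$ makes both inequalities vacuous for $j = 0$.

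For the bound $\rank_i(j) \le \rank_{i-1}(j)$, I would isolate the claim: \emph{if $j$ is the leader of its cluster in $\C_i(t)$, then $j$ is also the leader of its cluster in $\C_{i-1}(t)$.} Let $C \in \C_i(t)$ be the cluster containing $j$. By \lref[Lemma]{lem:relating}, either $i \notin C$ and then $C \in \C_{i-1}(t)$, or $i \in C$ and then $C = \{i\} \cup C_1 \cup \cdots \cup C_p$ with each $C_q \in \C_{i-1}(t)$; in the latter case $j \ne i$ forces $j \in C_q$ for some $q$. Either way $j$ lies in some cluster of $\C_{i-1}(t)$ that is a subset of $C$, and since $j$ has the smallest index in $C$ it is also the smallest index in that subcluster, i.e., $j$ is a leader in $\C_{i-1}(t)$. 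Applying the claim with $t := \rank_i(j)$ and using the definition of $\rank_{i-1}(j)$ as the largest such phase gives $\rank_{i-1}(j) \ge \rank_i(j)$.

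For the bound $\rank_{i-1}(j) \le \rank_i(j) + 1$, set $t := \rank_{i-1}(j)$; the inequality is trivial when $t = 0$, so assume $t \ge 1$. It suffices to show that $j$ is the leader of its cluster in $\C_i(t-1)$, for then $\rank_i(j) \ge t-1$. Suppose not: some $l < j$ shares $j$'s cluster in $\C_i(t-1)$, and note $l \in [i-1]$ since $l < j \le i-1$. Applying \lref[Lemma]{lem:delayed-merge} with phase $t-1$, the vertices $l$ and $j$ lie in a common cluster of run $\Run_{i-1}$ in $\C_{i-1}(t-1)$ or in $\C_{i-1}(t)$; since the clustering $\C_{i-1}(\cdot)$ only gets coarser as the phase index grows, in either case $l$ and $j$ lie in a common cluster of $\C_{i-1}(t)$. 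But $l < j$ then contradicts $j$ being the leader of its cluster in $\C_{i-1}(t)$, which holds because $\rank_{i-1}(j) = t$.

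Finally, iterating the first inequality over all $i' > j$ shows $i \mapsto \rank_i(j)$ is non-increasing. I do not anticipate a genuine obstacle: the real structural work is already done in \lref[Lemma]{lem:relating} and \lref[Lemma]{lem:delayed-merge}, and the only points needing a little care are (i) using that $\C_{i-1}(\cdot)$ coarsens with the phase index, to replace ``$\C_{i-1}(t-1)$ or $\C_{i-1}(t)$'' by ``$\C_{i-1}(t)$'', and (ii) the boundary cases $t = \rank_{i-1}(j) = 0$ and $j = 0$.
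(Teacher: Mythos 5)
Your proposal is correct and follows essentially the same route as the paper: the first inequality via the refinement structure of \lref[Lemma]{lem:relating}, and the second via \lref[Lemma]{lem:delayed-merge}, with only contrapositive rephrasings (you argue from $t=\rank_{i-1}(j)$ rather than from the phase where $j$ loses leadership in $\Run_i$). If anything, your bookkeeping is slightly tighter than the paper's wording, which loosely states the conclusion of the delayed-merge step as $\rank_{i-1}(j)\le t+1$ where the sharp reading gives $\le t$.
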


\begin{proof}
  If $j$ is no longer the leader of its cluster in $\C_{i-1}(t)$
  (because some $l < j$ lies in its cluster), then since $\C_{i-1}(t)$
  is a refinement of $\C_i(t)$, $l$ lies in $j$'s cluster in $\C_i(t)$
  too. Hence $\rank_i(j) \leq \rank_{i-1}(j)$.

  Now suppose $j$ loses leadership of its cluster during phase $t$ of
  $\Run_i$, i.e.,  $\rank_{i}(j) = t-1$. So there exists a vertex   $l < j$ which lies in
  the cluster of $\C_i(t)$ containing $j$.
  \lref[Lemma]{lem:delayed-merge} says $j, l$ must share a
  cluster in $\C_{i-1}(t+1)$, making $\rank_{i-1}(j) \leq t+1$.
\end{proof}

\begin{claim}[Initial Ranks]
  \label{clm:init-rank}
  If the initial rank of vertex $i$ (i.e., $i$'s rank in run $\Run_i$)
  is $r$, then the distance $d(i, [i-1]) \in [2\alpha^{r+1},
  2\alpha^{r+2})$.
\end{claim}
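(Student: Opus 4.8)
The plan is to follow the singleton cluster $\{i\}$ through the phases of the run $\Run_i$. Since vertex $i$ has the largest index in $[i]$, it is the leader of its cluster in $\C_i(t)$ exactly when that cluster is the singleton $\{i\}$: any cluster containing a second vertex contains one of smaller index. Hence $\rank_i(i)$ is the largest $t$ for which $\{i\}$ is still a cluster of $\C_i(t)$, and --- since clusters only merge, never split, as phases progress --- once $\{i\}$ is absorbed into a larger cluster it stays absorbed. So it suffices to pin down the first phase $s^\star \ge 1$ in which $\{i\}$ is merged; then $\rank_i(i) = s^\star - 1$.

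The key step is to decide, for a phase $s$ at whose start $\{i\}$ is still a singleton, whether $\{i\}$ is merged during phase $s$. At that point every cluster other than $\{i\}$ lies in $[i-1]$, so the distance from $i$ to the nearest other cluster equals $D := d(i,[i-1])$, a quantity that does not depend on $s$. If $D \ge 2\alpha^{s+1}$, then no cluster is within merging distance of $\{i\}$; and since merging clusters $C,C'$ (neither equal to $\{i\}$) yields $C\cup C'$ with $d(i,C\cup C') = \min\{d(i,C),d(i,C')\} \ge D$, no merging partner for $\{i\}$ is created during the phase either, so $\{i\}$ survives as a singleton. If instead $D < 2\alpha^{s+1}$, choose $j^\star \in [i-1]$ with $d(i,j^\star) = D$ and let $C^\star$ be its current cluster: then $d(i,C^\star) \le D < 2\alpha^{s+1}$, and whatever cluster contains $j^\star$ stays within distance $D$ of $i$ for as long as $\{i\}$ remains unmerged; since the phase stops only when no two clusters are within distance $2\alpha^{s+1}$, the cluster $\{i\}$ must get merged before the phase ends. (The ``survival'' direction also follows from \lref[Lemma]{lem:empty} applied with $S = \{i\}$.)

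A short induction on the phase index then shows that $\{i\}$ stays a singleton through exactly the phases $s$ with $D \ge 2\alpha^{s+1}$ --- note $\C_i(0)$ consists of singletons and $D \ge 2\alpha$ by the standing assumption --- and is first merged in the smallest phase $s^\star$ with $D < 2\alpha^{s^\star+1}$. Setting $r := \rank_i(i) = s^\star - 1$, this gives $D < 2\alpha^{r+2}$; and $D \ge 2\alpha^{r+1}$, either because phase $r < s^\star$ failed to merge $\{i\}$ (when $r \ge 1$) or from $D \ge 2\alpha = 2\alpha^{r+1}$ (when $r = 0$). Hence $d(i,[i-1]) \in [2\alpha^{r+1}, 2\alpha^{r+2})$, as claimed. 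The one delicate point --- and the step I expect to need the most care --- is the key step: within a single phase the clusters other than $\{i\}$ may merge among themselves in an arbitrary order before interacting with $\{i\}$, so one must verify that these intermediate merges neither bring $\{i\}$ within merging range when $D$ is large nor prevent it from coming within range when $D$ is small; both are handled by the identity $d(i,C\cup C') = \min\{d(i,C),d(i,C')\}$.
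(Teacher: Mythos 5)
Your proof is correct and follows essentially the same route as the paper: both directions reduce to the same two facts, namely that the end-of-phase invariant (inter-cluster distance at least $2\alpha^{t+1}$) forces $\{i\}$ to be merged when $d(i,[i-1])$ is small, and that isolation (your direct within-phase argument, or equivalently \lref[Lemma]{lem:empty} with $S=\{i\}$) keeps $\{i\}$ a singleton when it is large. Your version is just a more explicit, phase-by-phase rendering of the paper's two-line contradiction argument, with the $r=0$ boundary handled by the standing assumption that all distances are at least $2\alpha$.
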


\begin{proof}
  Let $j$ be the closest vertex in $[i-1]$ to $i$. Then if $d(i,j) <
  2\alpha^{r+1}$, then $i,j$ would be part of the same cluster in
  $\C_i(r)$ and hence $\rank_i(i) < r$. Similarly, if $d(i,j) \geq
  2\alpha^{r+2}$, then \lref[Lemma]{lem:empty} shows that  $i$ would form a singleton cluster in $\C_i(r+1)$
  and hence $\rank_i(i) \geq r+1$.
\end{proof}

\subsection{The Tree-Building Process}
\label{sec:timed-pd}

In this section, we explain the second ingredient of our
algorithm: given the rank function $\rank_i,$ how to build a tree $T_i
= ([n], E_i)$. In fact, we give a more general process that takes a
function from a wider class of ``admissible'' functions and produces a
tree for such a function. We want the trees $T_i$ and $T_{i-1}$ to look
similar, so our tree-building procedure assumes access to $T_{i-1}$ when
building tree $T_i$.


Recall that for a vertex $j \leq i$, $\rank_i(j)$ denotes its rank in
the primal-dual process $\Run_i$. Moreover, define $\Init(j) :=
\rank_j(j)$ to be the initial rank of $j$ (when it arrived in round
$j)$; define $\Init(0) = \infty$. We say that a function $\bb: [i] \to
\Z_{\geq 0}$ is {\em admissible} if $\bb(j) \in [\rank_i(j)\ldots \Init(j)]$
for all $j \in [i]$. (Thus the rank function $\rank_i$ is always
admissible.) For a set $S \subseteq [i]$, the \emph{head of $S$} with
respect to the function $\bb$ is defined to be the vertex $j \in S$ with
highest $\bb(j)$ value (in case of ties, say, choose the lowest
numbered vertex among these to be the head).~\footnote{Since $\bb(0)=\infty$, the root 
  vertex will always be the head of the component containing it.}

A tree $T = ([i], E_T)$ is defined to be {\em valid with respect to
  $\bb$} if we can partition the edge set $E_T$ into sets $E_T^1, E_T^2,
\ldots, E_T^r$ such that the following two conditions are satisfied for
each $l \in [1\ldots r]$:
\begin{OneLiners}
\item[(i)] Let $E_T^{\leq l}$ denote $E_T^1 \cup \cdots \cup E_T^l$. For
  any connected component of $E_T^{\leq l}$, let $j$ be the head of this
  component. Then we require $\bb(j) \geq l$.
\item[(ii)] Each edge in $E_T^l$ has length at most $2 \alpha^{l+1}$.
\end{OneLiners}

\begin{lemma}
  \label{lem:cost}
  Let $T$ be any tree valid with respect to $\bb$. Then the total cost of
  $T$ is at most
     $2 \frac{\alpha^3}{\alpha-1} \cdot \weight_i(\bb)$.
\end{lemma}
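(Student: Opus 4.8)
The plan is to first reduce the cost to a weighted count of edges per class, and then redistribute that weighted count onto vertices via a charging argument that exploits the \emph{heads} of the connected components. Concretely, using validity condition (ii) and the fact that $E_T = E_T^1 \cup \cdots \cup E_T^r$ is a partition, I would start from
\[
\cost(T) \;\le\; \sum_{l=1}^r 2\alpha^{l+1}\,|E_T^l| \;=\; 2\alpha \sum_{e \in E_T} \alpha^{l(e)},
\]
where $l(e)$ is the (unique) class index with $e \in E_T^{l(e)}$. It then suffices to prove $\sum_{e} \alpha^{l(e)} \le \tfrac{\alpha^2}{\alpha-1}\,\weight_i(\bb)$, since multiplying by $2\alpha$ gives the claimed $\tfrac{2\alpha^3}{\alpha-1}\,\weight_i(\bb)$.

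For the charging, I would root $T$ at vertex $0$. Given an edge $e \in E_T^l$, let $v_e$ be its endpoint farther from the root and let $C_e$ be the connected component of $v_e$ in $E_T^{\le l-1}$. Since $e \notin E_T^{\le l-1}$ and $T$ is a tree, $e$ is the unique edge of $T$ leaving $C_e$ toward the root; in particular $0 \notin C_e$, so the head $h_e$ of $C_e$ with respect to $\bb$ is a non-root vertex (hence $\bb(h_e) < \infty$), and by condition (i) (applied to $E_T^{\le l-1}$; the case $l=1$ is trivial since then $C_e = \{v_e\}$) we get $\bb(h_e) \ge l-1$. I charge the amount $\alpha^{l(e)}$ to $h_e$. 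Every edge is charged exactly once, so $\sum_e \alpha^{l(e)}$ equals the total charge received by all non-root vertices.

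It remains to bound the charge received by a fixed non-root vertex $h$. For each class index $l$, the components of $E_T^{\le l-1}$ partition $[i]$, so at most one of them has head $h$; hence at most one edge is charged to $h$ from class $l$, and the edges charged to $h$ sit at pairwise distinct class indices $l_1 < \cdots < l_m$. The largest of these, $l_m$, is realized by an edge whose component $C_e$ lies in $E_T^{\le l_m - 1}$ and has head $h$, so $\bb(h) \ge l_m - 1$, i.e.\ $l_m \le \bb(h)+1$. Therefore the total charge to $h$ is at most $\sum_{l=1}^{\bb(h)+1}\alpha^l < \frac{\alpha^{\bb(h)+2}}{\alpha-1} = \frac{\alpha^2}{\alpha-1}\,\alpha^{\bb(h)}$. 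Summing over all non-root vertices $j \in [1 \ldots i]$ gives $\sum_e \alpha^{l(e)} \le \frac{\alpha^2}{\alpha-1}\sum_{j=1}^i \alpha^{\bb(j)} = \frac{\alpha^2}{\alpha-1}\,\weight_i(\bb)$, and combining with the first display finishes the proof.

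I expect the main obstacle to be purely the bookkeeping in the charging step: verifying that each level-$l$ edge is the ``parent edge'' of a well-defined $E_T^{\le l-1}$-component, that the root vertex (for which $\bb(0)=\infty$) never receives any charge, and that a fixed non-root head receives at most one charge per class so its charges lie at distinct geometrically increasing scales capped by $\alpha^{\bb(h)+1}$. Once those three points are pinned down, the rest is a one-line geometric-series estimate. (A more mechanical alternative is to write $|E_T^l|$ as the drop in the number of components between levels $l-1$ and $l$, bound the number of components at level $l$ by $1+|\{j: \bb(j)\ge l\}|$ via condition (i), and sum by parts; this works but yields a slightly worse constant than $\tfrac{2\alpha^3}{\alpha-1}$, so I would prefer the charging version.)
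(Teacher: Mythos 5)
Your proposal is correct and follows essentially the same argument as the paper: charge each level-$l$ edge to the head of a component of $E_T^{\leq l-1}$ that does not contain the root, use condition~(i) to get $\bb(\text{head}) \geq l-1$ so a vertex $j$ is charged only at levels $l \leq \bb(j)+1$, and sum the geometric series to get the factor $\frac{2\alpha^3}{\alpha-1}$. The only difference is that you make the paper's implicit injection explicit by rooting $T$ at $0$ and charging each edge to the head of the component of its child endpoint, which is a fine (and slightly more careful) way of justifying the same charging.
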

\begin{proof}
  For any $l \in [1 \ldots r]$, there must be at least $|E_T^l|+1$ connected components in $E_T^{\leq (l-1)}$.
  The cost of each edge in $E_T^l$ can be charged to the heads of the
  components of $E_T^{\leq (l-1)}$, except for the root vertex $0$. Each of these
  head vertices have $\bb(j)$ values
  at least $l-1$ by condition~(i) of validity. Now any vertex $j \neq 0$
  is charged by some $E_T^l$ only if $l \leq \bb(j) + 1$, 
and since each edge in $E_T^l$ has length at most $2
  \alpha^{l+1}$ (by condition~(ii) of validity), the total charge to $j$ is at most
  $$ \sum_{l = 1}^{\bb(j) + 1} 2 \alpha^{l+1} \leq 2 \frac{\alpha^3}{\alpha-1} \cdot
  \alpha^{\bb(j)}. $$
Summing over $j \neq
  0$ and using the definition of $\weight_i(\cdot)$ from~(\ref{eq:wt}) completes the proof.
\end{proof}

We now prove a Lipschitz property of the $\bb$ function: namely, if we
decrement some coordinates of an admissible function $\bb$ to get
another admissible function $\bb'$ at $L_1$ distance $\|\bb -
\bb'\|_1$, then we can change a tree $T$ valid for $\bb$ into a tree
$T'$ valid for $\bb'$ by making at most $\|\bb - \bb'\|_1$ 
swaps. 
\begin{lemma}
  \label{lem:modify}
  Let $\bb$ be an admissible function and $T = ([i], E_T)$ be a valid
  tree with partition $(E_T^1, \ldots, E_T^r)$. Let $j^\star \in [i]$,
  and suppose $\bb'$ satisfies $\bb'(j) = \bb(j)$ if $j \neq j^\star$,
  and $\bb'({j^\star}) = \bb({j^\star})-1$. Assume that $\bb'$ is
  also admissible (i.e., $\rank_i(j^\star) \leq \bb'({j^\star})$). Then
  there is a valid tree $T'=([i], E_{T'})$ with respect to $\bb'$ such
  that $|E_{T'} \symdif E_T| \leq 2. $
\end{lemma}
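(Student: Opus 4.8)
The plan is to localize the unique point at which $T$'s given layering can cease to certify validity when $\bb$ is replaced by $\bb'$, and then repair it with at most one swap. Set $l^\star := \bb(j^\star)$, so $\bb'(j^\star) = l^\star - 1$, and let $D$ be the connected component of $E_T^{\le l^\star}$ containing $j^\star$. Since $\bb$ and $\bb'$ differ only at $j^\star$, only condition~(i) can be affected, and only for a component that contains $j^\star$ and in which $j^\star$ is the \emph{unique} $\bb$-maximizer. For a component of $E_T^{\le l}$ with $l < l^\star$ containing $j^\star$, its $\bb$-maximum is at least $l^\star$, hence its $\bb'$-maximum is at least $l^\star - 1 \ge l$; for $l > l^\star$, condition~(i) for $\bb$ already guarantees a vertex of that component with $\bb$-value $\ge l > l^\star = \bb(j^\star)$, so $j^\star$ is not a maximizer there and decrementing it is harmless. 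Thus the only possible failure is at level exactly $l^\star$, for the component $D$.

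This leaves two cases. \textbf{Case A:} some $z \in D \setminus \{j^\star\}$ has $\bb(z) \ge l^\star$. Then $\max_{v \in D} \bb'(v) \ge l^\star$, so the same tree with the same layering is valid for $\bb'$; take $T' = T$, giving $|E_{T'} \symdif E_T| = 0$. \textbf{Case B:} $j^\star$ is the unique vertex of $D$ with $\bb$-value $l^\star$. Now $\max_{v \in D} \bb'(v) = l^\star - 1$, and since every subset of $D$ inherits this bound, no sub-object of $D$ can play the role of a valid level-$l^\star$ component; hence $T$ itself must change, and the natural move is to splice $D$ onto a nearby vertex of $\bb$-value at least $l^\star$.

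So the crux is to produce such a vertex. Admissibility of $\bb'$ gives $\rank_i(j^\star) \le l^\star - 1$; together with Case~B (every $v \in D$, $v\ne j^\star$, has $\bb(v) \le l^\star-1$, hence $\rank_i(v)\le l^\star-1$) this says that \emph{every} vertex of $D$ has $\rank_i$-value at most $l^\star - 1$. Every edge of $E_T^{\le l^\star}$ has length at most $2\alpha^{l^\star+1}$, so in the run $\Run_i$ all of $D$ is absorbed into a single cluster $C$ of $\C_i(l^\star)$; its leader $\lambda$ then has $\rank_i(\lambda) \ge l^\star$, so $\lambda \notin D$ and $\bb(\lambda) \ge \rank_i(\lambda) \ge l^\star$. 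From the way $C$ is assembled in phase $l^\star$ out of clusters lying pairwise within $2\alpha^{l^\star+1}$, together with \lref[Lemma]{lem:empty} and \lref[Claim]{clm:init-rank}, I would extract a vertex $z$ with $\bb(z) \ge l^\star$ and a vertex $a \in D$ with $d(a,z) \le 2\alpha^{l^\star+1}$. Pinning down this geometric fact — a high-$\bb$ vertex within distance $2\alpha^{l^\star+1}$ of $D$ — is the main obstacle, and is exactly where the properties of the clustering are needed.

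Given such $a, z$, let $f := (a,z)$ and let $g$ be the edge at which the unique $a$–$z$ path in $T$ first leaves $D$; then $g$ has layer index $> l^\star$, and since $g$ lies on the cycle that adding $f$ creates in $T$, the graph $T' := T - g + f$ is again a spanning tree with $|E_{T'} \symdif E_T| = 2$. Layer $T'$ as $T$ was, except delete $g$ from its layer and place $f$ in layer $l^\star$ — permissible since $\len(f) \le 2\alpha^{l^\star+1}$ is the length bound for that layer. It remains to check condition~(i) for $\bb'$ level by level: below $l^\star$ the relevant $\bb'$-maxima are still $\ge l^\star - 1 \ge l$ as in the reduction; at level $l^\star$ the component through $j^\star$ is now $D$ together with $z$'s former level-$l^\star$ component, which contains $z$ with $\bb'(z)\ge l^\star$; above $l^\star$ the only change is the matched pair (remove $g$, add $f$), which merges components each already carrying a vertex of value $\ge l$ (the decrement of $j^\star$ being irrelevant at levels $> l^\star$) or else preserves component vertex sets. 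I expect this level-by-level accounting, together with the geometric claim above, to be the only parts requiring real work; everything else is bookkeeping.
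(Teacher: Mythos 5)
There is a genuine gap, and you have correctly located it yourself: the ``main obstacle'' you leave open --- producing a vertex $z$ with $\bb(z) \geq l^\star$ within distance $2\alpha^{l^\star+1}$ of $D$ --- is not only unproven but is a strictly stronger statement than what the clustering gives, and it is likely false in general. Your sketch (all of $D$ lies in one cluster $C$ of $\C_i(l^\star)$, whose leader $\lambda$ has $\rank_i(\lambda)\ge l^\star$, hence $\bb(\lambda)\ge l^\star$) does produce a high-$\bb$ vertex, but gives no control on $d(\lambda, D)$: a cluster of $\C_i(l^\star)$ is built by chaining merges at distance just under $2\alpha^{l^\star+1}$, so $\lambda$ can be arbitrarily far from $D$, and the only vertices within $2\alpha^{l^\star+1}$ of $D$ may be low-$\bb$ ``chain'' vertices (whose $\Init$ values are themselves below $l^\star$ because they arrived close to existing points). (There is also a minor boundary issue in ``absorbed into a single cluster'': an edge of $E_T^{\le l^\star}$ may have length exactly $2\alpha^{l^\star+1}$, while merges require strict inequality.)

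The paper's proof avoids needing a high-valued vertex nearby. It only shows that \emph{some} vertex $z \notin D$ with $d(z,D)\le 2\alpha^{l^\star+1}$ exists: if not, then $B(D,2\alpha^{l^\star+1})\cap[i]=D$, and \lref[Lemma]{lem:empty} (with $t=l^\star+1$) forces the cluster of $\C_i(l^\star)$ containing $j^\star$ to lie inside $D$, so its leader is a vertex of $D$ with rank, hence $\bb'$-value, at least $l^\star$ --- contradicting your Case~B. The high-value witness for condition~(i) at level $l^\star$ then comes for free: $z$ lies in a \emph{different} component of $E_T^{\le l^\star}$, and by validity of $T$ that component already has a head of $\bb$-value $\ge l^\star$ (unaffected by the decrement at $j^\star$); after adding the edge of length $\le 2\alpha^{l^\star+1}$ between $D$ and $z$'s component, the merged component inherits that head. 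Your swap mechanics (delete the edge $g$ on the $a$--$z$ path of layer $>l^\star$, add $f$ at layer $l^\star$) and the level-by-level re-verification are fine and essentially match the paper's; if you replace ``$z$ with $\bb'(z)\ge l^\star$'' by ``the head of $z$'s former level-$l^\star$ component'' in your level-$l^\star$ check, your argument closes once the existence of a nearby $z$ is established as above.
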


\begin{proof}
  For brevity, let $l^\star := \bb({j^\star})=\bb'({j^\star}) + 1$. 
  Let us define the tree $T'$ as
  follows. For values $l \leq l^\star - 1$, define $E_{T'}^l := E_T^l$.
  Condition (ii) remains satisfied for these values of $l$ since the
  edge sets are unchanged; moreover, since $l \leq l^\star - 1
  \Rightarrow l \leq \bb'({j^\star})$, even if $j^\star$
  happened to be the head of a component of $E_{T'}^{\leq l}$,
  condition~(i) would be satisfied.

  Next, we initialize set $E_{T'}^{l^\star}$ to contain all the edges in
  $E_T^{l^\star}$. It may however happen that $j^\star$ was the head
  of a connected component $C$ in $E_{T}^{\leq l^\star}$; since
  $\bb'({j^\star}) < l^\star$, this component $C$ would now violate
  condition~(i). In this case, we claim that there must be some vertex
  $j \notin C$ such that $d(j, C) \leq 2 \alpha^{l^\star + 1}$.  Suppose
  not; then \lref[Lemma]{lem:empty} implies that there is a vertex $v \in
  C$ such that $\rank_i(v) \geq l^\star$, and so $\bb'(v) \geq
  l^\star$. But then $j^\star$ cannot be the head of $C$, a
  contradiction. We now add an edge $e^\star$ between the claimed $j \notin C$ and
  its closest vertex in $C$, with the cost of this edge at most $2
  \alpha^{l^\star + 1}$---this completes the description of
  $E_{T'}^{l^\star}$. Note that this satisfies condition~(i), since the
  vertex $j \notin C$ belonged to a component whose head had $\bb$
  value at least $l^\star$ (by the validity of $T$ with respect to
  $\bb$), and adding the edge between $C$ and that
  component fixes the problem for $C$. 

  For $l > l^\star$, we define the edge sets $E_{T'}^l$ as follows: 
   we add all edges of $E_T^{l+1}$ to $E_{T'}^{l+1}$, except for edges that
  connect two vertices in the same component of $E_{T'}^{\leq l}$. Since
  $E_{T'}^{\leq l^\star}$ has one more edge that $E_{T'}^{\leq l^\star}$, there will be 
  at most one edge from $\cup_{l > l^\star} E_{T}^l$ which does not get added to $T'$. 
  Hence, the symmetric
  difference between $T$ and $T'$ has size at most $2$. 

  Now we show that $T'$ is valid with respect to $\bb'$. Condition (ii) is
  easy to check: the only edge in $E_{T'} \setminus E_T$, namely $e^\star$,
   is added in level $l^\star$ and  has 
  length at most $2 \alpha^{l^\star + 1}$. Moreover, all edges in $E_{T'}$ except perhaps for $e^\star$
  have the same levels as the corresponding edges in $E_T$. 

  It remains to check condition (i) for levels at least $l^\star$. To begin, 
  observe the following invariant for all levels $l \geq l^\star$:~each 
  component of
  $E_{T'}^{\leq l}$ consists of union of some of the components of
  $E_{T}^{\leq l}$. Indeed, this holds at level $l^\star$, and subsequently we 
   add all edges of $T$ to $T'$ except for one edge which forms a cycle. 
   Now at level $l^\star$, condition (i) holds by the construction of
   $E_{T'}^{l^\star}$. Now for a level $l > l^\star$, let $C'$ be any
   component of $E_{T'}^{\leq l}$. The invariant above implies that $C'$
   contains a connected component $C$ of $E_T^{\leq l}$. Since $T$ is
   valid with respect to $\bb$, the leader $i$ of $C$ satisfies $\bb(i)
   \geq l$, and so, $i$ is different from $j^\star$. But then $\bb'(i) =
   \bb(i) \geq l$ and hence condition (i) holds for $C'$ as well. This
   completes the proof of the theorem.
\end{proof}

We now show that a Lipschitz property also holds when adding
 a new vertex.
\begin{lemma}
  \label{lem:addition}
  Suppose $T$ is a valid tree on $[i]$ with respect to $\bb$. Consider a
  new function $\bb':[i+1]\to \Z_{\geq 0}$ defined thus: $\bb'(j) :=
  \bb(j)$ if $j \leq i$, and $\bb'(i+1) := \Init(i+1)$. Then, there is a
  valid tree $T'$ with respect to $\bb'$ such that $|T' \symdif T| = 1$.
  Moreover, if $\bb$ was admissible, then so is $\bb'$.
\end{lemma}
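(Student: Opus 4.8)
The plan is to take the tree $T$ valid for $\bb$, add the single edge from vertex $i+1$ to its nearest neighbor in $[i]$, and place that edge in the right level of the partition; the claim is that this single edge addition already yields a valid tree for $\bb'$. First I would set $r^\star := \Init(i+1) = \rank_{i+1}(i+1)$, and let $j$ be the closest vertex in $[i]$ to $i+1$; by Claim~\ref{clm:init-rank} we have $d(i+1,[i]) \in [2\alpha^{r^\star+1}, 2\alpha^{r^\star+2})$, so in particular $d(i+1,j) < 2\alpha^{r^\star+2}$. I would then define $T' = ([i+1], E_{T'})$ by keeping the partition $(E_T^1,\dots,E_T^r)$ of $T$ unchanged and adding the new edge $\{i+1, j\}$ to the level set $E_{T'}^{r^\star+1}$ (extending the partition to have $\max(r, r^\star+1)$ parts if needed, with empty new levels). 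Clearly $|T' \symdif T| = 1$.

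Next I would verify the two validity conditions. Condition~(ii) for the new edge: it sits in level $r^\star+1$ and has length $d(i+1,j) < 2\alpha^{r^\star+2} = 2\alpha^{(r^\star+1)+1}$, as required; all other edges keep their old levels and lengths, so condition~(ii) is inherited from $T$. For condition~(i), I must check every connected component of every $E_{T'}^{\leq l}$. For $l < r^\star+1$, the new edge is absent, so the components of $E_{T'}^{\leq l}$ are exactly those of $E_T^{\leq l}$ together with the new isolated vertex $\{i+1\}$; the old components keep their heads (and $\bb' = \bb$ on $[i]$), and the singleton $\{i+1\}$ has head $i+1$ with $\bb'(i+1) = r^\star \geq l$ since $l \le r^\star$, so condition~(i) holds. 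For $l \ge r^\star+1$, adding the edge $\{i+1,j\}$ merges the old component of $j$ in $E_T^{\leq l}$ with the vertex $i+1$ (which is isolated at this level in $E_T$, since its only incident edge is the new one in level $r^\star+1 \le l$); the head of this merged component is either its old head $h$ (with $\bb(h) \ge l$ by validity of $T$) or $i+1$ (with $\bb'(i+1) = r^\star < r^\star + 1 \le l$, so $i+1$ can only be the head when the old component's head also had $\bb$-value $\ge l$... ) — more carefully, the merged component's head is the $\bb'$-maximizer among the old component's vertices and $i+1$, and since the old head $h$ has $\bb'(h) = \bb(h) \ge l > r^\star = \bb'(i+1)$, the head remains $h$ with $\bb'(h) \ge l$. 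All other components of $E_{T'}^{\leq l}$ are untouched components of $E_T^{\leq l}$, which are fine. Hence $T'$ is valid with respect to $\bb'$.

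Finally, admissibility of $\bb'$: for $j \le i$ we have $\bb'(j) = \bb(j) \in [\rank_i(j) \ldots \Init(j)]$; since by Corollary~\ref{cor:rank-fall} ranks are non-increasing, $\rank_{i+1}(j) \le \rank_i(j) \le \bb(j)$, and $\Init(j)$ is unchanged, so $\bb'(j) \in [\rank_{i+1}(j)\ldots\Init(j)]$. For $j = i+1$, $\bb'(i+1) = \Init(i+1) = \rank_{i+1}(i+1)$, which trivially lies in $[\rank_{i+1}(i+1)\ldots\Init(i+1)]$. So $\bb'$ is admissible over $[i+1]$.

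The main obstacle I anticipate is the bookkeeping in the head-argument for levels $l \ge r^\star+1$: one must be sure that $i+1$ is genuinely isolated in $E_T^{\leq l}$ before the new edge is added (true, since $E_T$ is a tree on $[i]$ only, so $i+1$ has no incident edges in $T$ at all), and that merging it into $j$'s component via a level-$(r^\star+1)$ edge does not dislodge the old head — which holds precisely because $\bb'(i+1) = r^\star$ is strictly below every level $l$ at which the merge is visible. The degenerate cases ($r = 0$, or $j$'s component being trivial) are handled uniformly by the same reasoning. No new structural lemma beyond Claim~\ref{clm:init-rank} and Corollary~\ref{cor:rank-fall} should be needed.
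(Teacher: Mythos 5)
Your proposal is correct and follows essentially the same route as the paper: set $l^\star = \Init(i+1)$, attach $i+1$ to its nearest neighbor via an edge placed in level $l^\star+1$ (justified by Claim~\ref{clm:init-rank}), and deduce admissibility of $\bb'$ from the monotonicity of ranks. The only difference is that you spell out the validity verification that the paper dismisses as ``easy to verify,'' and your head-of-component argument for levels $l \ge l^\star+1$ (using $\bb'(i+1) = l^\star < l$) is exactly the right bookkeeping.
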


\begin{proof}
  Let $l^\star$ denote $\Init(i+1)$. We know that if $j^\star \in [i]$
  is the closest vertex to the new vertex $i+1$, then $2 \alpha^{l^\star
    + 1} < d(j^\star, i) \leq 2 \alpha^{l^\star +2}$ (\lref[Claim]{clm:init-rank}). 
    For $l \neq
  l^\star+1$, we set $E_{T'}^l := E_{T}^l$, and we define
  $E_{T'}^{l^\star+1} := E_{T}^{l^\star+1} \cup \{ (j^\star, i+1) \}$.
  It is easy to verify that $T'$ is valid with respect to $\bb'$.

  Note that if $\bb$ was admissible, then using the facts that the ranks
  of the vertices can never increase, and that $\bb'(i+1) =
  \rank_{i+1}(i+1)$, we get that $\bb'$ is admissible.
\end{proof}

Observe that rank functions $\{\rank_i\}_i$ produced by the clustering
procedure upon each arrival are always admissible. Hence, starting from
$T^\star_{i-1}$, we can add a single edge (from $i$ to its closest
vertex in $[i-1]$) using \lref[Lemma]{lem:addition}, and then perform at
most $\| \rank_i - \rank_{i-1}\|_1$ edge swaps (using
\lref[Lemma]{lem:modify}) to get the tree $T^\star_i$ valid with respect
to $\rank_i$. This tree is constant competitive, because of
\lref[Lemma]{lem:cost} and \lref[Lemma]{lem:lbd}); moreover, the results
in \lref[Section]{sec:analysis} will show that $\sum_{i = 1}^n \|
\rank_i - \rank_{i-1}\|_1 \leq O(n)$, which gives us another
constant-amortized-swaps algorithm.  However, there may be rounds that
perform a non-constant number of swaps, so this does not give us our
final result. To get both constant-worst-case-swaps and constant
competitiveness, we need another admissible function---the
\emph{virtual rank function}---which we define next.






\subsection{Defining the Virtual Rank Function}
\label{sec:def-vrank}


We now describe how to maintain an (admissible) virtual rank function
$\vrank_i$ for all rounds $i$. We will ensure that the $L_1$ distance between 
$\vrank_i$ and
$\vrank_{i-1}$ is at most a  constant number $K$ ---we
can then use \lref[Lemmas]{lem:modify} and~\ref{lem:addition} to
construct a corresponding valid tree $T_{i}$ which differs from
$T_{i-1}$ only in a constant number of edges.  Furthermore, we need to
keep the cost of $T_{i}$, which is $\approx \sum_{j > 0}
\alpha^{\vrank_i(j)}$, as small as possible. A natural way to
obtain $\vrank_{i}$ from $\vrank_{i-1}$ is to iteratively decrease the virtual
rank of those $K$ vertices (which could be a multiset) 
for which $\vrank_{i-1}(j)$ values are highest
(provided $\vrank_{i-1}(j)$ is strictly larger than $\rank_{i}(j)$).

Motivated by this, we define a total ordering on pairs $(j,k)$, where
$v$ is a vertex and $k$ is an integer: we say that $(j,k) \prec (j',k')$
if either $k < k'$, or else $k=k'$ and $j < j'$. We formally give the
algorithm for maintaining virtual ranks in the figure below. 

  \stocoption{\begin{figure}[ht]
  \centering
    \begin{boxedminipage}{\columnwidth}
      {\bf Virtual Ranks  :} \medskip\\
        \sp 1. Initially, we just have the root vertex 0. Define $\vrank_0(0)=\infty. $\\
        \sp 2. For $i= 1, 2, \ldots$ \\
        \sp \sp (i) Run the clustering algorithm $\Run_i$ to define $\rank_i$. \\
        \sp \sp (ii) Set  $\vrank_{i}(i)$ as $\Init(i)$. \\
         \sp \sp (iii) Define 
\[ Q(i) = \{ (j,k) \mid j \in [i-1], k \in [\rank_{i}(j) \ldots (\vrank_{i-1}(j)-1) ] \}. \] \\
         \sp \sp (iv)  Let $Q_K$ be the set of $K$ highest pairs
         (w.r.t. $\prec$) from $Q$. \\
         \sp \sp (v) Define the first $i-1$ coordinates of $\vrank_{i}$ as follows: \\
          \sp \sp \sp \sp $\vrank_{i}(j) := \left\{ \begin{array}{cc}\vrank_{i-1}(j) &
              \mbox{if }  (j, \star ) \notin Q_K \\  \min\{ k \mid (j,
              k) \in Q_K \} &
              \mbox{if } (j, \star) \in Q_K \end{array} \right.$ \\
      \end{boxedminipage}
      \label{fig:recourse}
      \caption{Algorithm maintaining virtual ranks; $K = 2 \alpha^2$.}
  \end{figure}}
  {\begin{figure}[ht]
  \centering
    \begin{boxedminipage}{\textwidth}
      {\bf Virtual Ranks  :} \medskip\\
        \sp \sp 1. Initially, we just have the root vertex 0. Define $\vrank_0(0)=\infty. $\\
        \sp \sp 2. For $i= 1, 2, \ldots$ \\
        \sp \sp \sp \sp (i) Run the clustering algorithm $\Run_i$ to define the rank function $\rank_i$. \\
        \sp \sp \sp \sp (ii) Set  $\vrank_{i}(i)$ as $\Init(i)$. \\
         \sp \sp \sp \sp (iii) Define $Q(i) = \{ (j,k) \mid j \in [i-1], k \in [\rank_{i}(j) \ldots (\vrank_{i-1}(j)-1) ] \}. $ \\
         \sp \sp \sp \sp (iv)  Let $Q_K$ be the set of the $K$ highest pairs
         (w.r.t. $\prec$) from $Q(i)$. \\
         \sp \sp \sp \sp (v) Define the first $i-1$ coordinates of $\vrank_{i}$ as follows: \\
          \sp \sp \sp \sp \sp \sp \sp $\vrank_{i}(j) := \left\{ \begin{array}{cc}\vrank_{i-1}(j) &
              \mbox{if }  (j, \star ) \notin Q_K \\  \min\{ k \mid (j,
              k) \in Q_K \} &
              \mbox{if } (j, \star) \in Q_K \end{array} \right.$ \\
      \end{boxedminipage}
      \label{fig:recourse}
      \caption{Algorithm maintaining virtual ranks; $K = 2 \alpha^2$.}
    \end{figure}}

    An important observation about the definition of $\vrank_i$: the set
    $Q_K$ might contain both tuples $(j,k+1), (j, k)$ for some $j \in
    [i]$ and $k \geq 0$. But if $Q_K$ contains $(j,k)$, then it will also
    contain $(j,k+1), \ldots, (j,\vrank_i(j)-1)$.  In case $|Q(i)| < K$, we
    will set $Q_K$ to be equal to $Q(i)$. It is easy to see that 
    $\|\vrank_i - \vrank_{i-1} \|_1$ is at most $K$.

\subsection{The Final Algorithm}

The final constant-budget algorithm is the following. Initially, $T_0$
is just the root vertex $0$.  Given a valid tree $T_{i-1}$ with respect
to the admissible virtual rank function $\vrank_{i-1}$, we obtain $T_i$
as follows. We first run the clustering algorithm to get $\rank_i$. Then
we construct the virtual rank function $\vrank_i$ as described in the
previous section, and finally construct a valid tree $T_i$ with respect to
$\vrank_i$.  
\lref[Lemma]{lem:addition} and \lref[Lemma]{lem:modify} imply that we
can construct $T_{i}$ from $T_{i-1}$ by adding one edge and swapping at
most $K$ edges---indeed, we can go from $\vrank_{i-1}$ to $\vrank_{i}$ 
by decrementing (by one) at most $K$ coordinates iteratively, and adding one new 
coordinate for the arriving vertex. 
The algorithm outputs $T_i$ at the end of each round
$i$.

\section{Analysis}
\label{sec:analysis}

The constant number of swaps is enforced by the very definition of the
virtual rank function, so it remains show that for each $i$, the
cost of the tree $T_i$ is close to the cost of the optimal Steiner tree
at the end of round $i$, i.e., $\weight_i(\vrank_i) \approx
\weight_i(\rank_i)$. One approach is to ensure the functions $\vrank_i$
and $\rank_i$ remain close \emph{coordinate-wise close}. We do not know
how to ensure this, but we do achieve closeness in cost.

Let us first give an overview of the proof.
The main problem is that on arrival of vertex $(i+1)$, there may be many
vertices in $[i]$ whose ranks decrease, i.e., $\rank_{i+1}(j) =
\rank_i(j)-1$. Since $\vrank_i$ and $\vrank_{i+1}$ can differ in only a
constant number of locations, the virtual ranks now trail the actual
ranks in many places. And this could potentially happen repeatedly.  Our
first technical result is that such bad events cannot happen in quick
succession. We show that if rank of a vertex $j$ decreases by at least
$2$ between two rounds $i$ and $i'$---i.e., $\rank_{i'}(j) \leq
\rank_i(j) - 2$---then many arrivals must have happened {\em close} to
$j$ after round $i$. We charge the rank decrease of $j$ to one such
arrival after round $i$, and prove that this charging can be done so
that any arrival gets charged only a constant number of times.  More
formally, for every pair $(j,k)$ which denotes that the rank of vertex
$j \in [n]$ has fallen to $k \in [(\rank_n(j)-2)\, \ldots\, (\Init(j) -
1)],$ we {\em charge} a vertex $l$ that arrives after the rank of $j$
drops to $k$; moreover, each vertex $l$ gets charged only $K$ times. The
proof of this charging lemma appears in
\lref[Section]{sec:charging-argument}.

How can we use this charging argument to define the $\vrank_i$ function
for each round $i$? As a thought experiment, suppose we were allowed a
small amount of look-ahead. Then we could proceed as follows~: suppose
$\rank_i(j)$ decreases by 1 in round $i+1$, i.e., $\rank_{i+1}(j) =
\rank_{i}(j)-1$, and we {\em know} that the rank of $j$ will decrease by
at least 1 more in future.  If $\rank_i(j) < \vrank_i(j)$, we add $(j,
\rank_i(j)), \ldots, (j, \vrank_i(j)-1)$ to a queue. 
In round $i+1$, we pick any $K$ pairs from the
queue; if we pick $(j, k)$, we decrease the virtual rank of $j$ to $k$.
Using the above charging argument, we can show that following such a
strategy means the functions $\vrank_n$ and $\rank_n$ differ by at most
two (additively) in each coordinate. And \lref[Lemma]{lem:cost} then
implies that the cost of our tree is within a constant of the optimal
Steiner tree on $[n]$.

Unfortunately, we do not have the luxury of this look-ahead, and so we
instead follow a greedy strategy: in any round, among all the pairs $(j,k)$
in the queue, we pick the ones with highest $k$ (thereby decreasing the
cost of the tree by the maximum possible). By a careful matching-based
argument given in \lref[Section]{sec:endgame}, we show that this
strategy indeed works. We look at an arbitrary round $n$ for rest of the
analysis---our goal is to compare the cost of the tree $T_n$ constructed by our algorithm
at the end of round $n$, and the optimal cost $\Opt([n])$.

\subsection{The Charging Argument}
\label{sec:charging-argument}

In this section, we describe the charging scheme, which charges every
rank decrease of a vertex (except the two most recent rank decreases for each
vertex) to one of the subsequent arrivals, such that each arrival is
charged at most $K = 2\alpha^2$ times.
Formally, we will prove the
following result.

\begin{theorem}
  \label{thm:charge-main}
   Let $L$ be the set of the (``not so
  recent'') rank decreases until this point, and defined as follows:
  \[ L := \bigcup_{j \in [n]} \{ j^{(k)} \mid k \in [\, (\rank_n(j) + 2)\, \ldots
  \, (\Init(j)-1)\,] \} \]
  Then there is a map $F : L \to [n]$ assigning the rank changes to rounds
   such that
  \begin{OneLiners}
    \item[(a)] (constant budget) at most $K = 2\alpha^2$ rank changes from $L$ map to any
      round $i \in [n]$,

    \item[(b)] (feasibility) if $F(j^{(k)}) = i$, then $j$'s rank dropped to $k$ at or before
      round $i$ (i.e., $\rank_{i}(j) \leq k$), and

    \item[(c)] (monotonicity) if $j^{(k)}, j^{(k-1)}$ both lie in $L$, then 
      \stocoption{\\}{}  $F(j^{(k)}) \leq F(j^{(k-1)})$.
  \end{OneLiners}
\end{theorem}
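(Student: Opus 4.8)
The plan is to build the map $F$ by processing rank decreases in a suitable order and using a "region-growing" charge: each time a vertex $j$'s rank drops by one step (beyond the last two drops), we argue that a new vertex must have arrived \emph{close} to $j$ since the previous drop, and we charge the current drop to that arrival. To make this precise, first I would fix a vertex $j$ and a value $k \in [(\rank_n(j)+2)\ldots(\Init(j)-1)]$, and let $i_k$ be the round in which $\rank(j)$ first becomes $\le k$; by \lref[Corollary]{cor:rank-fall}, ranks fall by at most one per round, so $\rank_{i_k}(j) = k$ and $\rank_{i_k - 1}(j) = k+1$. The key geometric claim is: between the round $i_{k+1}$ (when $j$'s rank became $\le k+1$) and round $i_k$, some new vertex $l$ with $i_{k+1} < l \le i_k$ arrives within distance roughly $2\alpha^{k+2}$ of $j$. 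The reason is that for $\rank(j)$ to drop from $k+1$ to $k$, in run $\Run_{i_k}$ some vertex $l' < j$ must enter $j$'s cluster in $\C_{i_k}(k+1)$ that was not there in $\C_{i_{k+1}}(k+1)$ (using \lref[Lemma]{lem:delayed-merge} to control the off-by-one between runs); since clusters at phase $k+1$ have diameter $O(\alpha^{k+1})$ and minimum inter-cluster separation was $2\alpha^{k+2}$ before the new merge, the only way a new connection within $2\alpha^{k+2}$ can be created near $j$ is if a newly arrived vertex sits in the relevant region $B(j, O(\alpha^{k+2}))$. I would then define $F(j^{(k)})$ to be (say) the \emph{earliest} such arrival $l$; feasibility (b) is immediate since $l \le i_k$ and $\rank_{i_k}(j) \le k$.

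For the constant-budget bound (a), the main point is a packing argument: suppose round $i$ is charged by many pairs $j_1^{(k_1)}, \ldots, j_m^{(k_m)}$. Each $j_t$ lies within $O(\alpha^{k_t+2})$ of vertex $i$, and moreover — this is where I would need to be careful — at the round just before $j_t$'s rank dropped to $k_t$, vertex $j_t$ was a cluster leader at phase $k_t+1$, so the vertices $\{j_t\}$ with the same charged level $k$ must be pairwise separated by at least $2\alpha^{k+1}$ in the metric (leaders of distinct phase-$k$ clusters). Balls of radius $\Theta(\alpha^{k+1})$ around points that are $2\alpha^{k+1}$-separated and all within $O(\alpha^{k+2})$ of a common point $i$: a volume/packing count in this ultrametric-like clustering gives at most $O(\alpha)$ of them per level $k$. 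Summing over the $O(\alpha)$ relevant levels (the levels $k$ for which $i$ can plausibly be charged form a short interval, since $d(i,\cdot)$ must be both $\ge$ the minimum inter-vertex distance and $\le O(\alpha^{k+2})$ while $j_t$ still has rank $\ge k+2$ at the end) yields the bound $K = 2\alpha^2$; I expect the constants to be chosen exactly to make this count go through with $\alpha \ge 6$.

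For monotonicity (c), I would use that $F(j^{(k)})$ was chosen as the earliest arrival near $j$ witnessing the drop to level $k$: since the drop to level $k$ happens no earlier than the drop to level $k-1$ (ranks are monotone, \lref[Corollary]{cor:rank-fall}), and the witnessing region for level $k$ (radius $O(\alpha^{k+2})$) contains the witnessing region for level $k-1$, any arrival that witnesses the level-$(k)$ drop and occurs before round $i_k$ also lies in the larger region, so the earliest witness for $k$ occurs no later than the earliest witness for $k-1$ — hence $F(j^{(k)}) \le F(j^{(k-1)})$. One has to make sure the "earliest witness" is well-defined and that the nesting of regions is literal; if the naive nesting fails by a constant factor I would instead charge $j^{(k)}$ to the \emph{last} arrival before round $i_k$ lying in a fixed ball $B(j, c\alpha^{k+2})$, and re-derive (a) and (c) with that choice.

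\textbf{Main obstacle.} The hard part is part (a): turning the vague statement "a new vertex must have arrived nearby" into a charging with a \emph{hard} constant $K$ independent of everything, which forces the packing argument to be tight and the choice of which arrival to charge to be compatible simultaneously with feasibility (b) and monotonicity (c). In particular, nailing down the exact radius of the "witness region" as a function of the level (so that regions for consecutive levels nest, and so that the separation of leaders at a level exactly cancels the radius in the packing count) is the delicate step, and I expect this is precisely why the paper needs $\alpha \ge 6$ and sets $K = 2\alpha^2$.
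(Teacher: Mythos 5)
There is a genuine gap, and it sits exactly where you flagged it: part (a). Your budget argument rests on a volume/packing count --- ``balls of radius $\Theta(\alpha^{k+1})$ around points that are $2\alpha^{k+1}$-separated and all within $O(\alpha^{k+2})$ of a common point $i$ number only $O(\alpha)$'' --- but the metric space here is arbitrary, with no doubling property, so no such count holds. Concretely, take a star: many old vertices $j_1,\dots,j_m$, pairwise at distance $2\alpha^{k+2}$, each a leader of rank $\ge k+1$, and let a single new vertex $i$ arrive at distance $\alpha^{k+2}$ from all of them. In that one round all of $j_2,\dots,j_m$ lose leadership at phase $k+1$, i.e.\ one arrival ``causes'' $m-1$ simultaneous same-level drops, and $m$ is unbounded. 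So charging each one-step drop to the arrival that triggered the merge cannot give a hard per-round constant. A second, related inaccuracy: clusters at phase $k+1$ can have diameter far exceeding $O(\alpha^{k+1})$ (long chains of nearby points), so the triggering arrival is only guaranteed to be within $2\alpha^{k+2}$ of $j$'s \emph{cluster}, not of $j$; this breaks both your separation-plus-radius bookkeeping and the literal nesting of witness regions used for (c). Your proposal also never actually uses the ``$+2$'' slack in the definition of $L$ in its core argument, whereas in the star example the drops just produced are \emph{not yet} in $L$ --- they enter $L$ only if further arrivals later come near each $j_t$, and it is those future arrivals that must absorb the charge.

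The paper's proof is built precisely to handle this, and it is quantitatively different from yours: it proves a \emph{weighted} witness lemma (Lemma~\ref{lem:arrival} and Corollary~\ref{cor:combo-deal}) saying that if $\rank_i(j)\ge k$ but $\rank_n(j)\le k-2$, then the arrivals landing within $\alpha^{k+1}$ of $j$'s cluster $\kappa_{i,k}(j)$ after round $i$ have total weight $\sum_l \alpha^{\Init(l)} \ge \alpha^{k-2}$, each with $\Init(l)\le k$; these cluster-balls are disjoint across the relevant $j$'s (Lemma~\ref{lem:disj}). It then builds, level by level, a fractional $b$-matching in the bipartite graph between drops $j^{(k)}$ and rounds $i$, where round $i$ has capacity $\alpha^{2+\Init(i)}/\alpha^k$ at level $k$ (Lemma~\ref{lem:fract-per-k}, proved greedily in reverse order of when vertices attained rank $k$); summing the geometric series over levels $k\ge\Init(i)$ gives total capacity $\alpha^3/(\alpha-1)\le 2\alpha^2$, and integrality of the $b$-matching polytope yields the map $F$. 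This weighting is essential: an arrival with tiny $\Init(i)$ may be near clusters at many levels, but at level $k$ it can absorb only an $\alpha^{\Init(i)-k}$-fraction of a unit charge, which is what replaces the (unavailable) packing bound. Finally, monotonicity (c) is not obtained by nesting regions at all --- it is enforced afterwards by swapping the images $F(j^{(k)})$ and $F(j^{(k-1)})$ whenever they violate the order, which preserves (a) and (b). So while your high-level instinct (charge drops to nearby arrivals) matches the paper, the specific mechanism you propose would not close part (a), and the missing ideas are the two-level slack, the $\alpha^{\Init(\cdot)}$-weighted capacities, and the matching/integrality step.
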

(Note that $j^{(k)}$ is a syntactic object, not $j$ raised to the
power of $k$.)
The proof of this theorem is by constructing a ($b$-)matching in a
suitable bipartite graph.   In
\lref[Section]{sec:witness-lemmas}, we give some technical results which
show that when the rank of a vertex decreases by at least two,
then many new arrivals will happen close to this vertex. 
In \lref[Section]{sec:fract-match}, we describe the bipartite graph, and use the
technical results to prove the existence of a fractional
$b$-matching, and hence an integral $b$-matching, in this graph.

\subsubsection{Disjoint Balls and Witnessing Rank Decreases}
\label{sec:witness-lemmas}

Consider a round $i \in [n]$ and integer $k \geq 0$. Let $A_{ik}$ be the set of vertices $j \in
[n]$ which satisfy one of the following two conditions: either (i)~$j
\in [i]$ and $\rank_i(j) \geq k$, or (ii)~$j \in [n]\setminus[i]$ and
$\Init(j) \geq k$. I.e., $A_{ik}$ has all those nodes that have arrived
by round $i$ and have rank at least $k$ in that round, or will have an
initial rank at least $k$ when they arrive in the future.

For vertex $v \in [i]$, define $\kappa_{i,t}(v)$ to be the cluster in
$\C_i(t)$ containing $v$. We extend this definition to the nodes arriving after round $i$ by defining $\kappa_{i,t}(v)$ for such a node $v \in [n]\setminus[i]$ to be the singleton set $\{v\}$.

\begin{lemma}
  \label{lem:disj}
  For round $i$ and rank $k$, the balls \stocoption{\\}{} $B(\kappa_{i,k}(j),
  \alpha^{k+1})$ for $j \in A_{ik}$ are disjoint.
\end{lemma}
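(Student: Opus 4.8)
The plan is to show disjointness by contradiction: suppose two balls $B(\kappa_{i,k}(j), \alpha^{k+1})$ and $B(\kappa_{i,k}(j'), \alpha^{k+1})$ share a vertex $w$, where $j, j' \in A_{ik}$ are distinct. Then by the triangle inequality $d(\kappa_{i,k}(j), \kappa_{i,k}(j')) \le 2\alpha^{k+1}$ (here I read $d$ of two sets as the minimum distance between them, and use that $w$ is within $\alpha^{k+1}$ of each set). First I would handle the generic case where both $j$ and $j'$ lie in $[i]$: then $\kappa_{i,k}(j)$ and $\kappa_{i,k}(j')$ are clusters in $\C_i(k)$, and since the clustering $\C_i(k)$ has minimum inter-cluster distance $2\alpha^{k+1}$ by construction, the only way $d(\kappa_{i,k}(j), \kappa_{i,k}(j')) \le 2\alpha^{k+1}$ — in fact we need to be careful about the strict-versus-weak inequality, so I will track constants — is if they are the same cluster. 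But $j, j' \in A_{ik}$ means both have rank $\ge k$ in round $i$, i.e., each is a leader of its cluster in $\C_i(k)$; two distinct leaders cannot share a cluster, giving a contradiction.

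Next I would deal with the mixed and future cases. If $j \in [n]\setminus[i]$, then $\kappa_{i,k}(j) = \{j\}$ by definition, and the condition $j \in A_{ik}$ reads $\Init(j) \ge k$; by \lref[Claim]{clm:init-rank}, $d(j, [i]) \ge d(j,[j-1]) \ge 2\alpha^{\Init(j)+1} \ge 2\alpha^{k+1}$ — wait, I should double-check: $\Init(j)$ is $j$'s rank in $\Run_j$, so the nearest earlier vertex is at distance $\ge 2\alpha^{\Init(j)+1} \ge 2\alpha^{k+1}$, hence $j$ is at distance $> 2\alpha^{k+1}$ from everything in $[i] \subseteq [j-1]$, in particular from any cluster of $\C_i(k)$ or any other future singleton. (For two future vertices $j, j'$ both after round $i$: whichever arrived later, say $j' > j$, has its nearest earlier vertex — which includes $j$ — at distance $\ge 2\alpha^{\Init(j')+1} \ge 2\alpha^{k+1}$.) In all these cases $d(\kappa_{i,k}(j), \kappa_{i,k}(j')) > 2\alpha^{k+1}$, contradicting the overlap bound $\le 2\alpha^{k+1}$.

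The one subtlety to nail down is the boundary between strict and non-strict inequalities: the overlap of two balls of radius $\alpha^{k+1}$ whose centers-sets are at distance exactly $2\alpha^{k+1}$ is not automatically empty, since the balls are closed. So I expect the main (minor) obstacle to be verifying that the relevant distances are in fact \emph{strictly} greater than $2\alpha^{k+1}$ wherever needed — which they are: the clustering invariant after phase $k$ gives minimum inter-cluster distance $2\alpha^{k+1}$, but for distinct clusters that remained unmerged in phase $k$ we actually had $d(C,C') \ge 2\alpha^{k+1}$ and then these were \emph{not} merged, meaning... hmm, actually the merge condition is $d(C,C') < 2\alpha^{k+1}$, so unmerged clusters have $d(C,C') \ge 2\alpha^{k+1}$, which is only weak. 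To get strictness I instead observe that if $w \in B(\kappa_{i,k}(j),\alpha^{k+1}) \cap B(\kappa_{i,k}(j'),\alpha^{k+1})$ then $d(\kappa_{i,k}(j),w) \le \alpha^{k+1}$ and $d(w, \kappa_{i,k}(j')) \le \alpha^{k+1}$, so $d(\kappa_{i,k}(j),\kappa_{i,k}(j')) \le 2\alpha^{k+1}$; and since these were unmerged clusters of $\C_i(k)$, they in fact satisfy $d \ge 2\alpha^{k+1}$, forcing $d = 2\alpha^{k+1}$ exactly and $w$ on both boundaries — but then $w$ lies in some cluster of $\C_i(k)$, and $d(w, \kappa_{i,k}(j)) \le \alpha^{k+1} < 2\alpha^{k+1}$ forces $w \in \kappa_{i,k}(j)$ (else these two clusters would have merged), and likewise $w \in \kappa_{i,k}(j')$, so $\kappa_{i,k}(j) = \kappa_{i,k}(j')$, and we again conclude $j = j'$ since each is the unique leader of its cluster. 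This closes the argument.
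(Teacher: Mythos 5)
Your proof is correct and takes essentially the same route as the paper's: for $j,j' \in [i]$ you note that two distinct vertices of rank at least $k$ lead distinct clusters of $\C_i(k)$, which are at distance at least $2\alpha^{k+1}$ by the clustering invariant, and when one of $j,j'$ arrives after round $i$ you invoke \lref[Claim]{clm:init-rank} exactly as the paper does. The closed-ball boundary case you fuss over is one the paper silently ignores; just note that your patch assumes the shared point $w$ lies in $[i]$ (so that it sits in some cluster of $\C_i(k)$), whereas the balls are taken over all of $[n]$, so a \emph{future} arrival at distance exactly $\alpha^{k+1}$ from both clusters would evade your argument (and likewise the strictness you claim in the mixed case is not actually provided by \lref[Claim]{clm:init-rank} when $\Init(j)=k$) --- an edge case that neither you nor the paper truly needs to resolve for the downstream use.
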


\begin{proof}
  Let $j, l \in A_{ik}$. First assume that $j, l\in [i]$. Then the clusters in
  $\C_{i}(k)$ containing $j$ and $l$ respectively are disjoint (because
  they both have rank at least $k$, by the definition of $A_{ik}$), and
  hence the distance between them is at least $2 \alpha^{k+1}$.

  Now suppose at least one of $j,l$ does not lie in $[i]$, and say $l$
  arrives after $j$. The result follows directly from
  \lref[Claim]{clm:init-rank}.
\end{proof}

The next lemma says that for a vertex $j$ with ``high'' rank $k$ in some
round such that its rank subsequently falls below $k$, this decrease in
rank is witnessed by arrivals close to $j$'s cluster, whose initial
ranks (collectively) are large.
\begin{lemma}
  \label{lem:arrival}
  For round $i$ and rank $k$, suppose $j \in A_{ik}$. Moreover, suppose
  $\rank_n(j) \leq k-2$. Let $X := \{i+1, \ldots, n\} \setminus \{j\}$,
  and let $Y := X \cap B(\kappa_{i,k}(j), \alpha^{k+1})$ be the points
  in $X$ that are within distance $\alpha^{k+1}$ of $j$'s component in
  $\C_i(k)$. Then,
  \begin{gather}
    \sum_{l \in Y} \alpha^{\Init(l)} \geq \alpha^{k-2}. \label{eq:charge}
  \end{gather}
  Furthermore, for any vertex $l \in Y$, $\Init(l) \leq k$.
\end{lemma}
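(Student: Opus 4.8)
The plan is to argue that, because the cluster containing $j$ has rank $k$ in round $i$ but $j$'s rank eventually falls to at most $k-2$, the cluster $\kappa_{i,k}(j)$ must eventually get absorbed into a strictly larger cluster in phase $k$ (or even phase $k-1$) of some later run, and that for this merging to occur a lot of ``mass'' (in the sense of $\alpha^{\Init(\cdot)}$) must arrive near $j$'s cluster. First I would fix the round $i' > i$ in which $j$ first loses leadership in some phase $\le k-1$, i.e., the first round where $\rank_{i'}(j) \le k-2$; let $i^\flat$ be its predecessor, so $\rank_{i^\flat}(j) \ge k-1$ but $\rank_{i'}(j)\le k-2$, forced by \lref[Corollary]{cor:rank-fall} (ranks drop by at most one per round). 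By \lref[Lemma]{lem:relating} and \lref[Lemma]{lem:delayed-merge}, the only way the cluster of $j$ can shrink in rank from one round to the next is via the newly arrived vertex $i'$: so $i'$ must satisfy $d(i', \kappa_{i^\flat, k-1}(j)) \le 2\alpha^{k}$, placing $i'$ within the ball $B(\kappa_{i,k}(j), \alpha^{k+1})$ (using that $\kappa_{i^\flat,k-1}(j)\subseteq \kappa_{i,k}(j)$ since ranks only fall, plus the triangle inequality and $\alpha\ge 6$). The key point is that a single arrival lowers the rank by at most one, and we need it to drop by at least one below $k-1$; but more carefully, I want to collect \emph{all} the arrivals responsible for the total drop and show their initial ranks sum to at least $\alpha^{k-2}$.

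The cleaner route is a potential/weight argument. Consider the set $A_{ik}$ and the disjoint balls $B(\kappa_{i,k}(v), \alpha^{k+1})$ guaranteed by \lref[Lemma]{lem:disj}. In round $i$, the cluster $\kappa_{i,k}(j)$ is a single cluster of $\C_i(k)$; in round $n$ (after all of $Y$ has arrived), $\rank_n(j) \le k-2$ means that in run $\Run_n$, already by the end of phase $k-1$ the vertex $j$ is not a leader — so in $\C_n(k-1)$ there is some vertex $l^\star < j$ in $j$'s cluster that was not in $\kappa_{i,k}(j)$ (else $j$'s rank would still be $\ge k$ in $\Run_i$, contradicting nothing — I need $l^\star \notin \kappa_{i,k}(j)$, which follows because within $\kappa_{i,k}(j)$ vertex $j$ was already the leader at phase $k$ in $\Run_i$, and leadership within a fixed vertex set is determined by indices). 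The merging of $\kappa_{i,k}(j)$ with the cluster of $l^\star$ in phase $\le k-1$ of $\Run_n$ means the chain of clusters bridging them all have pairwise gaps $< 2\alpha^{k}$, hence they all lie within $B(\kappa_{i,k}(j), \alpha^{k+1})$ once we account for the geometric series of hops (here $\alpha \ge 6$ makes $\sum_{t\le k-1} 2\alpha^{t} \le \alpha^{k+1}/2$ comfortably, so all intermediate clusters stay inside the ball). All vertices of $[n]$ in these bridging clusters other than those in $\kappa_{i,k}(j)$ itself belong to $Y$ (they are not in $[i]$, since any vertex of $[i]\setminus\kappa_{i,k}(j)$ that is this close would already have merged in $\Run_i$ by phase $k-1 \le k$, contradicting that $\kappa_{i,k}(j)$ is a full cluster of $\C_i(k)$ — this is exactly \lref[Lemma]{lem:empty} applied with the set $S = \kappa_{i,k}(j)$ and the fact that $B(\kappa_{i,k}(j), 2\alpha^{k})\cap[i] = \kappa_{i,k}(j)$ since it is a cluster of $\C_i(k-1)$ too).

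For the quantitative bound \eqref{eq:charge}, I would apply \lref[Lemma]{lem:lbd}-style reasoning locally: run the clustering process restricted to $Y \cup \kappa_{i,k}(j)$ (or just track it inside $\Run_n$) and observe that since all these vertices end up in one cluster by phase $k-1$, there must be at least one vertex of $Y$ — in fact enough of them. Concretely, the ``head'' vertex $j$ of $\kappa_{i,k}(j)$ has $\Init(j) \ge k$, wait — that's the wrong direction. Better: consider the vertex in $Y$ with smallest index, call it $l_0$; inside $\Run_n$, at the phase where $l_0$'s cluster first touches $\kappa_{i,k}(j)$, either $l_0$ became non-leader (so $\Init(l_0)$ could be small) — so a single vertex need not carry $\alpha^{k-2}$. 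Instead, the honest argument: the cluster $\kappa_{i,k}(j)$ has diameter $\le 2\alpha^k$ (bounded by the phase-$k$ merge radius summed geometrically), so $d(j, \kappa_{i,k}(j)\setminus\{j\})$ is small, but $\rank_i(j) \ge k$ means $d(j,\cdot)$ is genuinely at scale $\alpha^k$ to \emph{outside}; I will instead show $|Y| \ge 1$ with $\Init(l) $ large for \emph{some} $l\in Y$, then bootstrap. Honestly, the slick version is: by \lref[Claim]{clm:init-rank}, each $l \in Y$ has $d(l,[l-1]) < 2\alpha^{\Init(l)+2}$; I will pick a spanning structure on $Y\cup\{j\text{'s cluster}\}$ within $B(\kappa_{i,k}(j),\alpha^{k+1})$ that connects everything by phase $k-1$, note it has $\ge |Y|$ edges each of length $\le 2\alpha^{k}$, and that the ``first'' merge reaching scale $k$ needs an arrival of initial rank $\approx k$ (giving one term $\alpha^{k-2}$ after losing constant factors), so $\sum_{l\in Y}\alpha^{\Init(l)} \ge \alpha^{\Init(l_{\max})} \ge \alpha^{k-2}$ where $l_{\max}$ is the arrival whose appearance first allowed a phase-$(k-1)$ merge spanning out of $\kappa_{i,k}(j)$. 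The ``furthermore'' clause, $\Init(l)\le k$ for $l\in Y$, is immediate: if some $l\in Y$ had $\Init(l)\ge k+1$ then $d(l, [l-1]) \ge 2\alpha^{k+2} > 2\alpha^{k+1} \ge d(l, \kappa_{i,k}(j))$ (the last since $l\in B(\kappa_{i,k}(j),\alpha^{k+1})$ and the cluster's own span is $\le 2\alpha^k$, all fitting in $2\alpha^{k+1}$ for $\alpha\ge 2$), and since some vertex of $\kappa_{i,k}(j)$ arrived before $l$ this contradicts \lref[Claim]{clm:init-rank}.

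\textbf{Main obstacle.} The delicate part is the lower bound \eqref{eq:charge}: one must argue not merely that \emph{some} vertex arrived near $j$, but that the arrivals collectively carry weight $\ge \alpha^{k-2}$ in the $\sum \alpha^{\Init(\cdot)}$ measure — i.e., tying a rank \emph{decrease} by two full levels to a genuine quantitative influx of nearby mass, rather than a single low-rank arrival. I expect the right tool is a local instance of the dual-fitting / weak-duality bound from \lref[Lemma]{lem:lbd} applied to the point set $Y$ together with a representative of $\kappa_{i,k}(j)$, combined with the fact that connecting a rank-$\ge k$ cluster to anything outside costs at least $\sim \alpha^k$, forcing $\weight(Y)\cdot(\alpha-1)$-type quantities up to scale $\alpha^{k-2}$ after absorbing the $(\alpha-1)$ and small constant-power losses. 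Getting the exponent exactly $k-2$ (rather than $k-O(1)$) is where the constant $\alpha \ge 6$ and careful bookkeeping of the geometric hop-sums will be needed.
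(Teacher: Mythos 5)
Your proposal correctly sets up the geometry (the old vertices outside $\kappa_{i,k}(j)$ are at distance at least $2\alpha^{k+1}$, any bridging vertices inside $B(\kappa_{i,k}(j),\alpha^{k+1})$ must be post-round-$i$ arrivals and hence lie in $Y$, and the ``furthermore'' clause $\Init(l)\le k$ follows from \lref[Claim]{clm:init-rank}), but it fails at the one step that is the actual content of the lemma: the quantitative bound~(\ref{eq:charge}). The version you finally commit to --- that the single arrival $l_{\max}$ ``whose appearance first allowed a phase-$(k-1)$ merge spanning out of $\kappa_{i,k}(j)$'' must satisfy $\Init(l_{\max})\ge k-2$, so that one term of the sum already gives $\alpha^{k-2}$ --- is false. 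The gap of width $\ge 2\alpha^{k+1}$ around $j$'s cluster can be bridged by a long chain of arrivals, each landing within distance $O(\alpha^{3})$ of a previously arrived bridge vertex and therefore having initial rank $0$ or $1$; in particular the arrival that completes the bridge (and thereby triggers the low-phase merge with a smaller-index old vertex) can itself have tiny initial rank. No single element of $Y$ need carry weight $\alpha^{k-2}$; only the \emph{sum} does, because spanning a distance of order $\alpha^{k+1}$ with hops of length $\approx \alpha^{\Init(\cdot)+2}$ forces $\sum_{l\in Y}\alpha^{\Init(l)}\gtrsim \alpha^{k-1}$. You even note mid-proof that ``a single vertex need not carry $\alpha^{k-2}$,'' but the ``slick version'' you then adopt is exactly that single-vertex claim, and the fallback (``a local instance of the dual-fitting bound from \lref[Lemma]{lem:lbd}'') is only gestured at, never carried out.

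The paper closes this gap with an accumulation argument you would need some substitute for: order the arrivals of $Y$ as $l_1,\dots,l_s$, let $\Delta_u=\sum_{u'\le u}\alpha^{\Init(l_{u'})}$, and prove by induction (\lref[Claim]{clm:annulus}) that the annulus $B(C,2\alpha^2\Delta_u,\alpha^{k+1}-2\alpha^2\Delta_u)$ around $C=\kappa_{i,k}(j)$ stays empty of these arrivals --- each new arrival of initial rank $r$ must land within $2\alpha^{r+2}$ of an existing vertex (\lref[Claim]{clm:init-rank}), so it can only erode the empty annulus by $2\alpha^2\alpha^{r}$ from either side. If the total weight were below $\alpha^{k-2}$, the moat $B(C,2\alpha^k,\alpha^{k+1}-2\alpha^k)$ (nonempty since $\alpha\ge 6$) would remain vertex-free through round $n$, and then \lref[Lemma]{lem:empty} applied to $C'=B(C,2\alpha^k)$ shows $j$ is still the leader of its cluster in $\C_n(k-1)$, contradicting $\rank_n(j)\le k-2$. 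This is the mechanism converting a two-level rank drop into a lower bound on accumulated nearby mass; your bridging-chain picture is compatible with it but does not by itself yield the bound. (A smaller omission: the lemma also covers $j\notin[i]$ with $\Init(j)\ge k$, which your argument, phrased via $l^\star<j$ with $l^\star\in[i]$, does not address; the paper handles it by starting the same argument at round $j$.)
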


\begin{proof}
  First consider the case when $j \in [i]$, and hence $X = [n] \setminus
  [i]$. Let $C$ represent $j$'s component $\kappa_{i,k}(j)$ in
  $\C_i(k)$. Since $\rank_i(j) \geq k$, $j$ is the leader of $C$, and
  hence it arrived before all other vertices in $C$. For a set $S$ and
  parameters $b,b'$, define the {\em annulus} $B(S,b,b')$ to be $\{ l
  \mid d(l,S) \in (b, b'] \}$---note the half-open interval in the
  definition. In case $b \geq b'$, note that the annulus $B(S,b,b')$ is
  empty.

  Observe that $B(C, 0, 2 \alpha^{k+1}) \cap [i] = \emptyset$, just
  because if there were any other cluster within distance $2\alpha^{k+1}$
  of $C$, we would have merged $C$ with this cluster during phase $k$ of
  $\Run_i$. Let the vertices
  in $Y$---i.e., those which arrive after round $i$ in $B(C,
  \alpha^{k+1})$ be $l_1, l_2, \ldots, l_s$. Let us use $l_0$ to denote
  the vertex $i$. For an index $u \leq s$, let $\Delta_u$ denote the
  cumulative value $\sum_{u'=1}^u \alpha^{\Init(l_{u'})}$ (we define $\Delta_0$
  as 0). Let $l_{[u]}$ denote the set of vertices $\{l_0, l_1, \ldots, l_u \}$. 

  \begin{claim}
    \label{clm:annulus}
    For all $u \in [0\ldots s]$, the annulus \stocoption{\\}{} $B(C, 2\alpha^2 \Delta_u,
    \alpha^{k+1} - 2\alpha^2 \Delta_u)$ does not contain any vertex from
	$l_{[u]}$.
  \end{claim}

  \begin{proof}
    The proof is by induction on $u \in [0\ldots s]$. The base case is
    when $u = 0$, where the claim follows from $B(C, 0, 2 \alpha^{k+1})
    \cap [i] = \emptyset$. Now suppose the claim is true for some $u <
    s$. The next vertex to arrive after $l_u$ in $B(C, 0, \alpha^{k+1})$
    is $l_{u+1}$. By induction hypothesis,  at the beginning of round $l_{u+1}$, the annulus
    $B(C, 2\alpha^2 \Delta_u, \alpha^{k+1} - 2\alpha^2 \Delta_u)$ is
    still empty. For a contradiction, suppose $l_{u+1}$ lies in the
    smaller annulus $B(C, 2\alpha^2 \Delta_{u+1}, \alpha^{k+1} -
    2\alpha^2 \Delta_{u+1})$, then the ball of radius $2 \alpha^2
    (\Delta_{u+1} - \Delta_u) = 2 \alpha^{2+ \Init(l_{u+1})}$ around
    $l_{u+1}$ would be empty in round $l_{u+1}$. But
    this contradicts \lref[Claim]{clm:init-rank}.  This proves the
    claim for $u+1$, and hence for all $u \in [0\ldots s]$.
  \end{proof}

  Observe that the inequality (\ref{eq:charge}) asks us to show that
  $\Delta_s \geq \alpha^{k-2}$. For the sake of contradiction, suppose
  $\Delta_s < \alpha^{k-2}$. In this case \lref[Claim]{clm:annulus} says
  that the annulus \stocoption{\\}{} $B(C, 2 \alpha^k, \alpha^{k+1} - 2 \alpha^k)$ is
  empty at the end of round $l_s$; since there are no further arrivals
  in $B(C, \alpha^{k+1})$, the annulus $B(C, 2 \alpha^k, \alpha^{k+1} - 2
  \alpha^k)$ is also empty after round $n$. Since $\alpha \geq \const$, this
  means that $B(C, 2\alpha^k, 4\alpha^k)$ is empty. If $C'$ denotes the ball
  $B(C, 2 \alpha^k)$, the following two properties hold:
  \begin{OneLiners}
  \item The set $C'$ does not contain any vertex from $[i] \setminus C$.
    This is because $B(C, 2 \alpha^{k+1})$ does not contain any vertex
    from $[i] \setminus  C$, proved in the base case of \lref[Claim]{clm:annulus}.
    Moreover, since $j$ was the leader of the cluster $C$ in round $i$,
    $j$ is the earliest vertex in $C'$ as well.
  \item $B(C', 2 \alpha^k) = C'$, just because the annulus \stocoption{\\}{} $B(C,
    2\alpha^k, 4\alpha^k)$ was empty.
  \end{OneLiners}
  Using the latter property and applying \lref[Lemma]{lem:empty} with
  set $S$ set to $C'$, and round $i$ set to $n$, we infer that $j$'s
  cluster $\kappa_{n,k-1}(j)$ must be contained within $C'$, and $j$ is
  the leader of this cluster $\kappa_{n,k-1}(j).$ But this contradicts
  the assumption that $\rank_n(j) \leq k-2$. 

  To show that the initial ranks of $l_1, \ldots, l_s$ are at most $k$,
  observe that $d(l_j, C) \leq \alpha^{k+1}$ by the definition of
  $Y$. Consequently, in $\Run_{l_u}$, $l_u$ must share a
  cluster with at least one vertex of $C$ in the clustering $\C_{l_u}(k)$.
Since all nodes in $C$ are
  from $[i]$ and arrive before $l_u$, $l_u$ cannot be the leader of its
  component. This completes the proof for the case $j \in [i]$.

  The other case is when $j \not \in [i]$. Since $j \in A_{ik}$, its
  initial rank $\Init(j) \geq k$. This means $B(j, 2\alpha^{k+1})$ does
  not contain any vertex from $[j]$ other than $j$ itself---in other
  words, $\kappa_{j, k}(j) = \{j\}$. Now we can use the same arguments
  as above, just starting from round $j$ (since there are no arrivals in
  $B(\{j\}, 2 \alpha^{k+1})$ during rounds $i$ to $j$).
\end{proof}

We can extend \lref[Lemma]{lem:arrival} to subsets of $A_{ik}$ as follows.
\begin{corollary}
  \label{cor:combo-deal}
  For round $i$ and integer $k$, let $S \subseteq A_{ik}$. Moreover, for
  each $j \in S$, assume $\rank_n(j) \leq k-2$. Let $X = [i+1\ldots n]
  \setminus S$, and let $Y := X \cap (\cup_{j \in S} B(\kappa_{i,k}(j),
  \alpha^{k+1}))$. Then,
  \begin{gather}
    \sum_{l \in Y} \alpha^{\Init(l)} \geq |S| \cdot \alpha^{k-2}. \label{eq:charge2}
  \end{gather}
  Furthermore, for any vertex $l \in Y$, $\Init(l) \leq k$.
\end{corollary}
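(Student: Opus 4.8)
The plan is to reduce to Lemma~\ref{lem:arrival} applied one vertex at a time, and then assemble the per-vertex bounds using the disjointness of dual balls from Lemma~\ref{lem:disj}. Fix $j \in S$ and set $X_j := \{i+1,\ldots,n\}\setminus\{j\}$ and $Y_j := X_j \cap B(\kappa_{i,k}(j),\alpha^{k+1})$. Since $S \subseteq A_{ik}$ we have $j \in A_{ik}$, and by hypothesis $\rank_n(j) \leq k-2$, so \lref[Lemma]{lem:arrival} applies and gives both $\sum_{l \in Y_j}\alpha^{\Init(l)} \geq \alpha^{k-2}$ and $\Init(l) \leq k$ for every $l \in Y_j$.

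The two things that need checking are that the sets $\{Y_j\}_{j\in S}$ are pairwise disjoint and that each $Y_j$ is contained in $Y$. For disjointness: every $j \in S$ lies in $A_{ik}$, so \lref[Lemma]{lem:disj} says the balls $B(\kappa_{i,k}(j),\alpha^{k+1})$, $j\in S$, are pairwise disjoint, and since $Y_j \subseteq B(\kappa_{i,k}(j),\alpha^{k+1})$ the $Y_j$ are disjoint as well. For the containment $Y_j \subseteq Y$: we already have $Y_j \subseteq \{i+1,\ldots,n\}$ and $Y_j \subseteq \cup_{j'\in S}B(\kappa_{i,k}(j'),\alpha^{k+1})$, so it only remains to verify $Y_j \cap S = \emptyset$. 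Any $l \in Y_j$ satisfies $l \neq j$ by the definition of $X_j$; and if some $l \in S$ with $l \neq j$ belonged to $Y_j \subseteq B(\kappa_{i,k}(j),\alpha^{k+1})$, then since also $l \in \kappa_{i,k}(l) \subseteq B(\kappa_{i,k}(l),\alpha^{k+1})$ (with $\kappa_{i,k}(l)=\{l\}$ if $l$ arrives after round $i$), we would contradict the disjointness just established. Hence $Y_j \subseteq [i+1\ldots n]\setminus S = X$, so $Y_j \subseteq Y$. This bookkeeping step---making sure that intersecting with $[i+1\ldots n]\setminus S$ rather than deleting only the single vertex $j$ does not shrink the sets---is really the only place any care is needed, and it is exactly what Lemma~\ref{lem:disj} buys us; there is no deeper obstacle.

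Putting it together, since all terms $\alpha^{\Init(l)}$ are nonnegative and the $Y_j$ are disjoint subsets of $Y$,
\[ \sum_{l \in Y}\alpha^{\Init(l)} \ \geq\ \sum_{j \in S}\ \sum_{l \in Y_j}\alpha^{\Init(l)} \ \geq\ \sum_{j \in S}\alpha^{k-2} \ =\ |S|\cdot\alpha^{k-2}, \]
which is~(\ref{eq:charge2}). Finally, for the last assertion: any $l \in Y$ lies in $B(\kappa_{i,k}(j),\alpha^{k+1})$ for some $j \in S$, and since $l \in [i+1\ldots n]\setminus S$ while $j \in S$ we have $l \neq j$, so $l \in Y_j$; \lref[Lemma]{lem:arrival} then yields $\Init(l) \leq k$. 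This completes the proof of \lref[Corollary]{cor:combo-deal}.
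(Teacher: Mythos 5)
Your proof is correct and follows essentially the same route as the paper: apply \lref[Lemma]{lem:arrival} to each $j \in S$ separately and combine the per-vertex bounds using the disjointness of the balls from \lref[Lemma]{lem:disj}. The only difference is that you explicitly verify the bookkeeping point that excluding all of $S$ (rather than just $j$) does not shrink the sets $Y_j$, a step the paper's one-line proof leaves implicit but which follows, as you note, from the same disjointness.
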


\begin{proof}
  By \lref[Lemma]{lem:disj}, the balls $B(\kappa_{i,k}(j), \alpha^{k+1})$
  are disjoint, so we can define $Y_j$ as $X \cap B(\kappa_{i,k}(j),
  \alpha^{k+1})$ for each $j \in S$, and apply \lref[Lemma]{lem:arrival}
  to each one of them separately.
\end{proof}




\subsubsection{Constructing the Mapping $F$ via a Matching}
\label{sec:fract-match}

We construct a bipartite graph $H=(L,R = [n],E)$. Here the set $L$ is as
described in the statement of \lref[Theorem]{thm:charge-main}; i.e.,
they are of the form $j^{(k)}$ indicating that the rank of $j$ fell to $k$
at some round in the past, and has subsequently fallen to $k-2$ or lower
by the end of round $n$.  The nodes in $R$ simply represent arrivals
$[n]$.  The edge set $E$ is constructed as follows: we have edge
$(j^{(k)},i)$ if $\rank_i(j) \leq k$. (Note that this condition is same as
the feasibility condition of the map $F$ in
\lref[Theorem]{thm:charge-main}). We say that edge $(j^{(k)}, i)$'s rank is
$k$.


Some
notation: let $\Gamma(v)$ denote the set of neighbors of a node $v$ in
$L \cup R$, and $E(v)$ denote the set of edges incident to $v$. The main
result of this section is the following:

\begin{theorem}
  \label{thm:fract-match}
  There exists an assignment of non-negative values $\{x_e\}_{e \in E}$
  to the edges of $H$ such that
  \begin{OneLiners}
  \item[(a)] for any node $j^{(k)} \in L$, $\sum_{e \in E(j^{(k)})} x_e = 1$, and
  \item[(b)] for any node $i \in R$, $\sum_{e \in E(i)} x_e \leq 2 \alpha^2$.
  \end{OneLiners}
  Moreover, if $\alpha$ is integer, these $x_e$ can be chosen to be in $\{0,1\}$.
\end{theorem}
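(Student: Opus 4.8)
The plan is to realize the desired $\{x_e\}$ as a fractional $b$-matching in $H$ that saturates every left vertex (with value $1$) subject to a capacity of $K=2\alpha^2$ on every right vertex, and then to appeal to total unimodularity for the integral statement. By the max-flow/min-cut theorem (equivalently, a $b$-matching analogue of Hall's theorem), such a fractional assignment exists if and only if $|S|\le K\cdot|\Gamma(S)|$ for every $S\subseteq L$. The key simplification is that the neighborhoods in $H$ are \emph{suffixes of rounds}: by monotonicity of ranks (\lref[Corollary]{cor:rank-fall}), for each $j^{(k)}\in L$ there is a first round $d(j,k)$ at which $\rank(j)$ drops to $k$ or below, and the edge $(j^{(k)},i)$ is present exactly when $i\ge d(j,k)$, so $\Gamma(j^{(k)})=[d(j,k)\ldots n]$ and $\Gamma(S)=[\min_{j^{(k)}\in S}d(j,k)\ldots n]$. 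Consequently, for a fixed value of $\min_{j^{(k)}\in S}d(j,k)$ the most demanding choice of $S$ is all of $L$ lying in that suffix, and the entire Hall condition reduces to showing, for every round $i^{\star}$,
\[
 N(i^{\star})\;:=\;\#\{\,j^{(k)}\in L : d(j,k)>i^{\star}\,\}\;\le\;K\,(n-i^{\star}).
\]

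To bound $N(i^{\star})$, first note that $d(j,k)>i^{\star}$ says that at round $i^{\star}$ vertex $j$ either has not yet arrived (so $\Init(j)\ge k+1$, which holds since $j^{(k)}\in L$) or has rank still at least $k+1$; in both cases $j\in A_{i^{\star},k+1}$. Grouping the vertices counted by $N(i^{\star})$ according to $k$, set $S^{(k)}:=\{\,j : j^{(k)}\in L,\ j\in A_{i^{\star},k+1}\,\}$, so $N(i^{\star})=\sum_{k}|S^{(k)}|$. Each $S^{(k)}\subseteq A_{i^{\star},k+1}$, and $j^{(k)}\in L$ forces $\rank_n(j)\le k-2\le (k+1)-2$, so the hypotheses of \lref[Corollary]{cor:combo-deal} hold for round $i^{\star}$ and rank $k+1$. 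This yields a set $Y_k\subseteq[i^{\star}+1\ldots n]$ of arrivals with $\sum_{l\in Y_k}\alpha^{\Init(l)}\ge|S^{(k)}|\,\alpha^{k-1}$ and with $\Init(l)\le k+1$ for every $l\in Y_k$; in particular $|S^{(k)}|\le\alpha^{1-k}\sum_{l\in Y_k}\alpha^{\Init(l)}$. Summing this over $k$ and exchanging the order of summation gives
\[
 N(i^{\star})\;\le\;\sum_{l\in[i^{\star}+1\ldots n]}\alpha^{\Init(l)}\sum_{k\,:\,l\in Y_k}\alpha^{1-k}.
\]
Because $l\in Y_k$ forces $k\ge\Init(l)-1$, the inner sum is dominated by the geometric series $\sum_{k\ge\Init(l)-1}\alpha^{1-k}=\tfrac{\alpha}{\alpha-1}\,\alpha^{2-\Init(l)}$, so each term of the outer sum is at most $\tfrac{\alpha^{3}}{\alpha-1}$ and hence $N(i^{\star})\le\tfrac{\alpha^{3}}{\alpha-1}\,(n-i^{\star})$. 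Since $\tfrac{\alpha^{3}}{\alpha-1}\le 2\alpha^{2}$ exactly when $\alpha\ge 2$, and $\alpha\ge\const$, we obtain $N(i^{\star})\le K(n-i^{\star})$, establishing the Hall condition and hence parts (a) and (b).

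For the integral statement I would observe that the feasible region for $\{x_e\}$ is cut out by nonnegativity, one equality constraint per vertex of $L$, and one $\le K$ constraint per vertex of $R$; its constraint matrix is the vertex--edge incidence matrix of the bipartite graph $H$ and hence totally unimodular, and the right-hand side is integral whenever $K=2\alpha^{2}$ is an integer. A nonempty totally unimodular polytope with integral right-hand side has an integral vertex, and the equality constraints force every coordinate of such a point into $\{0,1\}$. I expect the main obstacle to be bookkeeping rather than a new idea: all the geometry is already packaged in \lref[Corollary]{cor:combo-deal} (which in turn rests on the disjointness of \lref[Lemma]{lem:disj} and the annulus argument of \lref[Lemma]{lem:arrival}), so the work is to make the reduction to a single per-round inequality precise --- in particular the uniform treatment via $A_{i^{\star},k+1}$ of vertices $j$ that have not yet arrived by round $i^{\star}$, the verification that $Y_k\subseteq[i^{\star}+1\ldots n]$ so the charge lands on genuine right vertices, and the closing of the geometric tail at $\tfrac{\alpha^{3}}{\alpha-1}$, which is exactly what pins down the swap budget $K=2\alpha^{2}$ used in \lref[Theorem]{thm:weaker}.
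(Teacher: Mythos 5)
Your proof is correct, and it reaches the theorem by a route that is recognizably different in structure from the paper's, even though both rest on the same geometric engine. The paper never verifies a Hall condition directly: it splits the edges of $H$ by rank into graphs $H_k$, and inside each $H_k$ it \emph{constructs} the fractional assignment greedily, processing the left vertices in non-increasing order of the round $\Round_k(\cdot)$ at which they attained rank $k$, against per-vertex capacities $\Deficit_k(i) = \alpha^{2+\Init(i)}/\alpha^{k}$; \lref[Corollary]{cor:combo-deal} (applied at rank $k$ to the prefix $\{j_1,\dots,j_u\}\subseteq A_{r^\star k}$) shows the available potential always exceeds the demand, and the factor $2\alpha^2$ emerges only at the end, when the per-class bound~(\ref{eq:strong}) is summed over $k \ge \Init(i)$ as a geometric series giving $\alpha^3/(\alpha-1) \le 2\alpha^2$. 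You instead work on $H$ as a whole, observe that rank monotonicity (\lref[Corollary]{cor:rank-fall}) makes every neighborhood a suffix of rounds, reduce the deficiency/Hall condition to the single-parameter family $N(i^\star)\le K(n-i^\star)$, and verify each such inequality by grouping the pending pairs by rank and invoking \lref[Corollary]{cor:combo-deal} once per rank class --- correctly shifted to rank $k+1$, since ``not yet dropped to $k$'' means rank at least $k+1$, which is also why your count of $A_{i^\star,k+1}$ covers the not-yet-arrived vertices uniformly. The geometric tail $\sum_{k\ge \Init(l)-1}\alpha^{1-k}$ plays exactly the role of the paper's sum over $k\ge\Init(i)$ of the capacities, and lands on the same constant $\alpha^3/(\alpha-1)$, so the two arguments are quantitatively identical; the difference is packaging. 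What your route buys is brevity and a cleaner existence statement: the suffix structure collapses all of Hall's condition to $n$ scalar inequalities, and integrality comes from total unimodularity/flow integrality exactly as the paper's appeal to the $b$-matching polytope. What the paper's route buys is the explicit per-class guarantee~(\ref{eq:strong}), i.e., a constructive fractional matching with a capacity profile indexed by $\Init(i)$, which is a slightly stronger statement than the aggregate degree bound $2\alpha^2$ even though nothing downstream in the paper ends up needing that extra precision.
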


We first note how \lref[Theorem]{thm:fract-match} implies \lref[Theorem]{thm:charge-main}.
\begin{proofof}{\lref[Theorem]{thm:charge-main}}
If $x_e = 1$ for some edge $e=(j^{(k)}, i)$, we define $F(j^{(k)}) = i$. It is easy to check that the
mapping $F$ satisfies the first two requirements of \lref[Theorem]{thm:charge-main}. It remains
to ensure that $F$ satisfies the monotonicity property. Suppose $F(j^{(k)}) > F(j^{(k-1)})$ for some
pair $j,k$. Then by swapping the values of $F(j^{(k)})$ and $F(j^{(k-1)})$, we 
ensure that $F(j^{(k)}) \leq F(j^{(k-1)})$; moreover, this swap preserves the first two properties of $F$.
We iteratively fix all such violations of the monotonicity property this
way.
\end{proofof}

To prove \lref[Theorem]{thm:fract-match}, we partition the edges of $H$
into subgraphs depending on their rank, and set the $x_e$ values for
each of these subgraphs independently. Specifically, for $k \geq 0$,
define the bipartite graphs $H_k = (L_k, R_k, E_k)$ where $L_k = \{ j
\in [n] \mid j^{(k)} \in L \}$, and $R_k = \{ i \in [n] \mid \Init(i) \leq k
\}$. For an edge $(j^{(k)}, i) \in E$: if
$\Init(i) \leq k$ then we get a corresponding edge $(j, i) \in E_k$, else this edge is simply dropped. We now prove the
following stronger lemma about each $H_k$.

\begin{lemma}
  \label{lem:fract-per-k}
  For each $k$, there exists an assignment of non-negative values
  $\{x_e\}_{e \in E_k}$ to the edges of $H_k$ such that
  \begin{OneLiners}
  \item[(a)] for any node $j \in L_k$, if $E_k(j)$ is the set of edges
    incident to $j$, $\sum_{e \in E_k(j)} x_e = 1$, and
  \item[(b)] for any node $i \in R_k$,
	\begin{eqnarray}
      \label{eq:strong}
      \sum_{e \in E_k(i)} x_e \leq \frac{\alpha^{2+ \Init(i)}}{\alpha^k}.
    \end{eqnarray}
  \end{OneLiners}
\end{lemma}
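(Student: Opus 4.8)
I want to prove Lemma~\ref{lem:fract-per-k} by exhibiting a fractional $b$-matching in the bipartite graph $H_k$ and invoking Hall's theorem (in its fractional/deficiency version) to certify its existence. The natural quantity to check is a defect-type condition: for every subset $S \subseteq L_k$, the total "demand" $|S|$ should not exceed the total "capacity" of its neighborhood $\Gamma_k(S) = \bigcup_{j \in S} \Gamma(j) \subseteq R_k$, where vertex $i \in R_k$ has capacity $\alpha^{2+\Init(i)}/\alpha^k$. Concretely, I will show
\begin{gather*}
  |S| \;\le\; \sum_{i \in \Gamma_k(S)} \frac{\alpha^{2+\Init(i)}}{\alpha^k}
  \qquad\text{for every } S \subseteq L_k. \label{eq:hall-target}
\end{gather*}
Once this is established, a standard max-flow/min-cut (or LP-duality) argument on the bipartite graph with unit demands on the $L_k$ side and the stated capacities on the $R_k$ side produces the desired fractional assignment $\{x_e\}$.

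\textbf{Carrying out the Hall check.} Fix $S \subseteq L_k$. Each $j \in S$ satisfies $j^{(k)} \in L$, meaning $j$'s rank fell to $k$ at some round and subsequently dropped to $k-2$ or lower by round $n$; in particular $\rank_n(j) \le k-2$. I want to pick a single round $i^\star$ at which to apply Corollary~\ref{cor:combo-deal}. The right choice is the \emph{earliest} round by which every $j \in S$ already had rank $\le k$ but such that all of $S$ consists of vertices that have "arrived and reached rank $\le k$" — more carefully, for each $j \in S$ let $r_j$ be the round at which $\rank(j)$ first became $\le k$ (this is well-defined: either $j$'s initial rank is already $\le k$, or some later arrival pushed it down, and by Corollary~\ref{cor:rank-fall} ranks only decrease, one step at a time, so this transition is crisp). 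Set $i^\star := \max_{j\in S} r_j$, and note that every $j \in S$ lies in $A_{i^\star k}$ (it has arrived by round $i^\star$ — or will be handled by the "$j\notin[i^\star]$" branch of Lemma~\ref{lem:arrival} — and has rank, or initial rank, $\ge k$ at the appropriate time, or exactly $k$... this is the delicate bookkeeping point). Applying Corollary~\ref{cor:combo-deal} with $i = i^\star$ and this $S$: the set $Y := ([i^\star+1\ldots n] \setminus S) \cap \bigcup_{j\in S} B(\kappa_{i^\star,k}(j), \alpha^{k+1})$ satisfies $\sum_{l\in Y}\alpha^{\Init(l)} \ge |S|\cdot \alpha^{k-2}$, and moreover $\Init(l) \le k$ for every $l \in Y$, so $Y \subseteq R_k$. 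Finally I must argue $Y \subseteq \Gamma_k(S)$: each $l \in Y$ is within $\alpha^{k+1}$ of some $j$'s cluster $\kappa_{i^\star,k}(j)$, so in run $\Run_l$ the vertex $l$ shares a $\C_l(k)$-cluster with a vertex of $C$ and cannot be the leader, hence $\rank_l(j') \le k$ for the relevant $j'$ — pinning down that this yields the edge $(j^{(k)}, l) \in E_k$ (i.e., $\rank_l(j) \le k$ for exactly this $j \in S$, not merely for some vertex in its cluster) is the crux. Granting it, $|S| \cdot \alpha^{k-2} \le \sum_{l\in Y}\alpha^{\Init(l)} \le \sum_{i \in \Gamma_k(S)}\alpha^{\Init(i)}$, and dividing by $\alpha^{k-2}$ gives $|S| \le \sum_{i\in\Gamma_k(S)}\alpha^{2+\Init(i)}/\alpha^k$, which is exactly the Hall condition.

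\textbf{From the Hall condition to the assignment.} With the deficiency inequality in hand, I consider the bipartite flow network: a source $s$ connected to each $j \in L_k$ by an arc of capacity $1$; each edge $(j,i) \in E_k$ as an arc of capacity $+\infty$ (or $1$, it does not matter); each $i \in R_k$ connected to a sink $t$ by an arc of capacity $\alpha^{2+\Init(i)}/\alpha^k$. By max-flow/min-cut, a flow saturating all the source arcs exists if and only if every $s$-$t$ cut has capacity $\ge |L_k|$; an arbitrary cut that does not cut some source arc $(s,j)$ must (since the $E_k$ arcs have infinite capacity) cut the sink arcs of all of $\Gamma_k$ of the uncut $j$'s, and the Hall inequality above shows this is $\ge$ the number of uncut source arcs, so the cut has capacity $\ge |L_k|$. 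The resulting flow, restricted to the $E_k$ arcs, gives $x_e \ge 0$ with $\sum_{e\in E_k(j)} x_e = 1$ for all $j$ and $\sum_{e\in E_k(i)} x_e \le \alpha^{2+\Init(i)}/\alpha^k$ for all $i$, as required.

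\textbf{Main obstacle.} I expect the real work to be the bookkeeping in the Hall check: choosing the round $i^\star$ correctly so that $S \subseteq A_{i^\star k}$ while the $\kappa_{i^\star,k}(j)$ clusters are still the "right" clusters for applying Corollary~\ref{cor:combo-deal}, and then verifying that membership of $l$ in $Y$ genuinely produces the specific edge $(j^{(k)}, l) \in E_k$ rather than an edge to some other vertex of $j$'s cluster. A subtlety is that different $j \in S$ may have their critical rounds $r_j$ at very different times, so a single $i^\star$ may be "too late" for the clusters of some $j$ — one may instead need to bucket $S$ by the value of $r_j$ (or apply the corollary at a round $i^\star$ that is early enough that $B(\kappa_{i^\star,k}(j),\cdot)$ still has no interlopers from $[i^\star]$), and this is precisely the scenario Lemma~\ref{lem:arrival}'s two-case structure ($j\in[i]$ versus $j\notin[i]$) is designed to absorb. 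Getting this alignment exactly right is the heart of the argument; the flow-theoretic wrap-up is routine.
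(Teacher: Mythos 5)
Your overall route (a capacitated Hall/deficiency condition plus max-flow) is legitimate and genuinely different from the paper's proof, which is a constructive greedy: the paper processes the vertices of $L_k$ in non-increasing order of $\Round_k(\cdot)$, maintains potentials $\Deficit_k(i)=\alpha^{2+\Init(i)}/\alpha^k$ on the right, and at each step applies Corollary~\ref{cor:combo-deal} at the round $r^\star=\Round_k(j_u)$ of the \emph{current} vertex, which is the \emph{minimum} of $\Round_k$ over the prefix processed so far; this is in effect a deficiency check for prefixes only, which suffices for the greedy to never get stuck. But your execution of the Hall check has a genuine gap, and it is exactly at the point you flag with ``Granting it.'' Choosing $i^\star=\max_{j\in S} r_j$ does \emph{not} give $S\subseteq A_{i^\star k}$: any $j\in S$ with $r_j<i^\star$ that has arrived may already have $\rank_{i^\star}(j)<k$ (indeed by round $n$ its rank is $\le k-2$, and nothing stops part of that drop from happening before $i^\star$), so $j\notin A_{i^\star k}$ and Corollary~\ref{cor:combo-deal} cannot be invoked for your $S$ at that round. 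As written, the central inequality of your Hall check is unsupported.

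The fix is to take the \emph{minimum}, $i^\star=\min_{j\in S} r_j$: then every $j\in S$ either has arrived and still has rank $\ge k$ at round $i^\star$, or has not arrived and has $\Init(j)\ge k$, so $S\subseteq A_{i^\star k}$ and the corollary applies. Your second worry---that $l\in Y$ only guarantees proximity to \emph{some} cluster $\kappa_{i^\star,k}(j)$ and hence might not yield the specific edge $(j^{(k)},l)$---then dissolves for a reason you did not identify: you do not need the edge to go to the particular $j$ whose cluster $l$ is near. Every $l\in Y$ satisfies $l\ge i^\star+1>r_{j_0}$ for the minimizing vertex $j_0$, so by monotonicity of ranks $\rank_l(j_0)\le k$, giving the edge $(j_0^{(k)},l)\in E$; together with $\Init(l)\le k$ this places $l$ in $\Gamma(j_0)\subseteq\Gamma_k(S)\cap R_k$, and the deficiency bound $|S|\le\sum_{i\in\Gamma_k(S)}\alpha^{2+\Init(i)}/\alpha^k$ follows. (This is the same mechanism the paper uses: there the edges are only ever needed from the single vertex $j_u$ with minimal $\Round_k$ to all of $Y$.) With the min in place of the max your argument goes through and the flow wrap-up is fine; with the max it does not, so as submitted the heart of the proof is missing.
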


\begin{proof}
  The proof is constructive. We start with $x_e = 0$ for all $e \in
  E_k$. For each right node $i \in R_k$, define its initial potential
  $\Deficit_k(i) = \frac{\alpha^{2+ \Init(i)}}{\alpha^k}$; this
  potential will measure how much $x_e$ value can be assigned in the
  future to edges in $E_k(i)$. Moreover, recall that $i \in R_k \iff
  \Init(i) \leq k$.

  For a vertex $j \in L_k$, let $\Round_k(j)$ be the first round in
  which rank of $j$ becomes $k$---i.e., $\Round_k(j) = \min \{ i
  \mid \rank_i(j) = k\}$. Order the vertices in $L_k$ in
  \emph{non-increasing} order of their $\Round_k(j)$ values---let this
  ordering be $j_1, \ldots, j_s$. Hence $j_1$ is the last vertex to
  achieve rank $k$, and $j_s$ is the first vertex to do so. The
  algorithm in \lref[Figure]{fig:alg} greedily sets the $x_e$ values of
  the edges incident to the vertices in this order.
  \stocoption{\begin{figure}[ht]
      \centering
      \begin{boxedminipage}{\columnwidth}
        {\bf Algorithm Fractional-Matching($H_k$) :} \medskip\\
        \sp Let $j_1, j_2, \ldots, j_s$ be the vertices in $L_k$ in \\
\sp\sp\sp\sp\sp\sp\sp\sp\sp\sp non-increasing order of $\Round_k(\cdot)$. \\
        \sp For $u = 1, \ldots, s$ \\
        \sp \sp Find values of $x_e$ for edges $e \in E_k(j_u)$ such that \\
        \sp \sp \sp(i) $\sum_{e \in E_k(j_u)} x_e = 1$, and \\
        \sp \sp \sp (ii) if $e \in E_k(j_u)$ has right end-point $i$ in
        $R_k$, \\
\sp\sp\sp\sp\sp\sp\sp\sp\sp\sp then $x_e \leq \Deficit_k(i)$.  \\
        \sp \sp For each $(j_u, i) \in E_k(j_u)$, decrease $\Deficit_k(i)$ by $x_e$.
      \end{boxedminipage}
      \label{fig:alg}
      \caption{The fractional matching algorithm for a fixed $k$}
  \end{figure}}
  {\begin{figure}[ht]
      \centering
      \begin{boxedminipage}{\textwidth}
        {\bf Algorithm Fractional-Matching($H_k$) :} \medskip\\
        \sp \sp Let $j_1, j_2, \ldots, j_s$ be the vertices in $L_k$ in non-increasing order of $\Round(\cdot)$. \\
        \sp \sp For $u = 1, \ldots, s$ \\
        \sp \sp  \sp \sp Find values of $x_e$ for edges $e \in E_k(j_u)$ such that \\
        \sp \sp \sp \sp \sp \sp(i) $\sum_{e \in E(j_u)} x_e = 1$, and \\
        \sp \sp \sp \sp \sp \sp (ii) if $e \in E_k(j_u)$ has right end-point $i$ in $R_k$, then $x_e \leq \Deficit_k(i)$.  \\
        \sp \sp \sp \sp For each $(j_u, i) \in E_k(j_u)$, decrease $\Deficit_k(i)$ by $x_e$.
      \end{boxedminipage}
      \label{fig:alg}
      \caption{The fractional matching algorithm for a fixed value of $k$}
  \end{figure}}

  Observe that if the algorithm terminates successfully, we have an
  assignment of $x_e$ values satisfying properties~(a) and~(b). Indeed,
  property~(a) is guaranteed by property~(i) of the algorithm, and the
  inequality~(\ref{eq:strong}) follows from the fact that the potential
  $\Deficit_k(i)$ captures exactly how much the $x_e$ values can be
  decreased without violating~(\ref{eq:strong}), and these potentials
  never become negative. So it suffices to show that for any vertex $j_u
  \in L_k$, the algorithm can find values $\{x_e\}_{e \in E_k(j_u)}$
  satisfying properties~(i) and~(ii) during iteration~$u$.

  The proof is by induction on $u$. Suppose the algorithm has
  successfully completed the steps for $j_1, \ldots, j_{u-1}$, and we
  are considering $j_u$. Consider $\Round_k(j_u)$, the round in which
  rank of $j_u$ first becomes $k$. For the sake of brevity, let $r^\star :=
  \Round_k(j_u)$. Consider any node $j_p$ with $p \leq u$: such a node
  is either $j_u$ itself, or it has already been processed by the
  algorithm. There are two cases:
  \begin{OneLiners}
  \item Case I: $j_p \leq r^\star$; i.e., $j_p$ arrived at or before the
    round in which $j_u$ attained rank $k$. Since $p \leq u$, our choice
    of the ordering on nodes of $L_k$ ensures that $j_p$ itself attains
    rank $k$ in or after this round $r^\star$. Hence its rank in round $r^\star$
    must be at least $k$, and hence $j_p \in A_{r^\star k}$.~\footnote{Recall
      that $A_{ik}$ was defined immediately after
      \lref[Lemma]{lem:empty}.}
  \item Case II: $j_p > r^\star$; i.e., in the round where $j_u$ attained
    rank $k$, $j_p$ has not arrived at all. However, we know $j_p$
    eventually attains rank $k$, so its initial rank $\Init(j_p)$ is at
    least $k$. Consequently, $j_p \in A_{r^\star k}$ in this case as well.
  \end{OneLiners}
  Look at the set $S = \{j_1, j_2, \ldots, j_u\} \subseteq A_{r^\star
    k}$. By the construction of the graph $H$, the final rank of each
  node in $L_k$ is at most $k-2$. Now \lref[Corollary]{cor:combo-deal}
  implies the existence of the set $Y$ of vertices with $\sum_{l \in Y}
  \alpha^{\Init(l)} \geq u \cdot \alpha^{k-2}$. Moreover, it ensures
  that each vertex $l \in Y$ has $\Init(l) \leq k$, and hence belongs to
  $R_k$. Finally, this set $Y \sse [r^\star+1 \ldots n]$, and so $j_u$ has
  edges to all these nodes in $Y \sse R_k$---this follows from the
  observation that $j_u$ achieved rank $k$ in round $r^\star$, and hence
  edges $(j_u^{(k)}, i)$ were added to $E$ for all $i \geq
  r^\star$. Combining all these facts, we infer that the initial
  potential of nodes in the neighborhood of $j_u$ in the graph $H_k$ is
  at least
  \[ \sum_{l \in Y} \frac{\alpha^{2+\Init(l)}}{\alpha^k} \geq
  \frac{u\cdot \alpha^{k-2} \cdot \alpha^2}{\alpha^k} \geq u. \] Since
  each preceding $j_p$ results in a unit decrease in potential, the
  total decrease in the potential of these nodes in $Y$ in previous
  steps is at most $u-1$. Hence, in iteration $u$, the remaining
  potential of the neighbors of $j_u$ must be at least $1$, which means
  the algorithm can always define the $x_e$ values satisfying
  properties~(i) and~(ii).
\end{proof}

Finally, we use \lref[Lemma]{lem:fract-per-k} to complete the
proof of \lref[Theorem]{thm:fract-match}.
\begin{proofof}{\lref[Theorem]{thm:fract-match}}
  Given the graph $H = (L, R, E)$, each rank-$k$ edge $(j^{(k)}, i) \in E$ either
  gives rise to an edge $(j,i) \in H_k$, or is ignored. Independently
  apply \lref[Lemma]{lem:fract-per-k} to each $H_k$ to get an assignment
  of $x_e$ to each edge in $\cup E_k$, and let each edge in $E$ inherit
  the $x_e$ value of its corresponding edge in $\cup E_k$. (If there is
  no such corresponding edge, set $x_e = 0$.)  Since each node $j^{(k)}$
  corresponds to exactly one node  $j \in L_k$,
  \lref[Lemma]{lem:fract-per-k}(a) implies
  \lref[Theorem]{thm:fract-match}(a).

  Now for the fractional degrees of nodes on the right. For each node $i
  \in R$, note that this node $i \in R_k$ only if $\Init(i) \leq k$. All
  edges incident to $i$ that do not belong to any of these graphs have
  $x_e = 0$. Hence, adding~(\ref{eq:strong}) for all values $k \geq
  \Init(i)$, we get 

  \stocoption {
    \begin{align*}
    \sum_{e \in E(i)} x_e & = \sum_{k \geq \Init(i)} \sum_{e \in E_k(i)} x_e \leq \alpha^2 (1 + 1/\alpha + 1/\alpha^2 + \cdots) \\  
                          & = \frac{\alpha^3}{\alpha - 1} \leq 2 \alpha^2.  \end{align*} }
  {
   \[ \sum_{e \in E(i)} x_e = \sum_{k \geq \Init(i)}
  \sum_{e \in E_k(i)} x_e \leq \alpha^2 (1 + 1/\alpha + 1/\alpha^2 +
  \cdots) = \frac{\alpha^3}{\alpha - 1} \leq 2 \alpha^2. \] }
  The last inequality uses the fact that
  $\alpha \geq 2$.  Finally, the statement about $\{0,1\}$ $x_e$ values follows
  from the integrality of the $b$-matching polytope.
\end{proofof}

\subsection{Bounding the Cost}
\label{sec:endgame}

We now show that for any $n \geq 0$, the cost of a valid tree with respect to
the virtual rank function $\vrank_n$ is within a constant of
$\Opt([n])$. Recall the weight function $\weight_n(\cdot)$ defined in~(\ref{eq:wt}).
\begin{theorem}
\label{thm:greedy}
For any round $n \geq 0$, $\weight_n(\vrank_n) \leq \alpha^2 \cdot \weight(\rank_n).$
\end{theorem}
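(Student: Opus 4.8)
The plan is to compare $\weight_n(\vrank_n) = \sum_{j>0}\alpha^{\vrank_n(j)}$ with $\weight_n(\rank_n)=\sum_{j>0}\alpha^{\rank_n(j)}$ coordinate by coordinate, organized by the value of the virtual rank. For each vertex $j$ with $\vrank_n(j) = k > \rank_n(j)$, the pair $(j,k-1)$ (and all $(j,k')$ for $\rank_n(j)\le k'\le k-1$) ``should have been decremented but wasn't'': it was available in some round's set $Q(i)$ but lost out to $K$ strictly higher pairs that round. The greedy rule for $\vrank$ means that whenever a pair $(j,k')$ with $k'$ large survives a round, it is because $K$ pairs with value $\ge k'$ were decremented instead. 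So the ``excess'' mass $\alpha^{\vrank_n(j)} - \alpha^{\rank_n(j)}$ concentrated at high levels must be paid for by many actual decrements that happened at comparably high levels, and those decrements are in turn governed by \lref[Theorem]{thm:charge-main}: every rank decrease except the two most recent per vertex is charged to a distinct-ish arrival, with each arrival charged at most $K$ times.

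Concretely, I would fix a threshold level $k$ and look at the set of vertices $j$ with $\vrank_n(j)\ge k$; call the ``overflow'' at level $k$ the quantity $O_k := \#\{j : \vrank_n(j)\ge k\} - \#\{j:\rank_n(j)\ge k+2\}$ (the $+2$ absorbing the two-most-recent slack from \lref[Theorem]{thm:charge-main}). The key inequality to establish is that $O_k$ is bounded in terms of the number of ``level-$\ge k$'' rank decrements $j^{(k')}$ with $k'\ge k$ that lie in $L$, which by \lref[Theorem]{thm:charge-main}(a) and (c) (monotonicity ensures the charges for a fixed $j$ at decreasing levels go to increasing rounds, so they don't all pile onto early rounds) is at most $K$ times the number of rounds, i.e. $K n$ — but more usefully, bounded against $\#\{j:\rank_n(j)\ge k+2\}$ itself via the witnessing balls. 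Here \lref[Lemma]{lem:arrival} / \lref[Corollary]{cor:combo-deal} give the crucial geometric input: each such high-level rank drop is witnessed by arrivals near $j$'s cluster whose initial ranks sum to at least $\alpha^{k-2}$ per dropped vertex, and these witnessing regions are disjoint across vertices by \lref[Lemma]{lem:disj}. Summing $\sum_k \alpha^k\, O_k$ telescopes to $\weight_n(\vrank_n)$ up to the $\weight_n(\rank_n)$ term, and the disjointness plus the $\alpha^{k-2}$ lower bound converts the overflow mass at level $k$ into a charge against $\weight_n(\rank_n)$, losing a factor of roughly $K\alpha^{-2}\cdot\alpha^2\cdot(\text{geometric series}) = O(\alpha^2)$.

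The main obstacle I anticipate is making the greedy-vs-lookahead argument precise: the $\vrank$ update is greedy (it decrements the \emph{currently highest} available pairs, not the ones a hypothetical charging scheme would pick), so a pair $(j,k')$ can survive in $Q(i)$ for many rounds while other, always-higher pairs keep arriving and getting served. I would handle this with a matching/Hall-type argument (as the paper signals in \lref[Section]{sec:endgame}): build a bipartite graph between ``undecremented pairs'' $\{(j,k') : \rank_n(j)\le k' < \vrank_n(j)\}$ and the decrements actually performed by the algorithm at level $\ge k'$, show by the greedy invariant that any set $U$ of undecremented pairs at levels $\ge k$ has at least $|U|$ performed decrements at levels $\ge k$ available to it (because in every round either all of $U$'s pairs were still pending — contributing nothing to the violation — or $K$ higher pairs were killed), invoke Hall to get an injection, and then feed that injection through \lref[Theorem]{thm:charge-main} to land on $\Opt([n])$ via \lref[Lemma]{lem:lbd}. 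Once the matching is in hand, the bookkeeping is a geometric sum and the stated constant $\alpha^2$ should drop out after tracking the $\alpha^{k-2}$ vs $\alpha^k$ discrepancy against the $K=2\alpha^2$ budget.
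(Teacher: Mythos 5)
Your overall instinct---compare the algorithm's greedy decrements against the schedule promised by \lref[Theorem]{thm:charge-main} via a bipartite-matching structure, with the ``two most recent'' exclusions supplying the $\alpha^2$ slack---is in the spirit of the paper, but the two concrete steps you propose in place of the paper's argument do not hold. The threshold Hall condition (``any set $U$ of undecremented pairs at levels $\ge k$ has at least $|U|$ performed decrements at levels $\ge k$ available'') is false. Take $m \gg K$ vertices, pairwise at distance exactly $2\alpha^{k+2}$, so that all but the lowest-indexed one have $\rank = \vrank = k+1$, and let the arrival $n$ sit at distance $2\alpha^{k+2}-\epsilon$ from each of them: in run $\Run_n$ they all merge with $n$ at phase $k+1$, so $m-1$ vertices drop to rank $k$ in this single round (each drop is by one, per \lref[Corollary]{cor:rank-fall}, but arbitrarily many vertices can drop simultaneously). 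This creates $m-1$ pending pairs at level $k$, while the algorithm performs only $K$ decrements in round $n$ and may never have performed any decrement at level $\ge k$ before; so no injection of the kind you want exists, and likewise $|\{j:\vrank_n(j)\ge k\}|$ is not bounded by any constant times $|\{j:\rank_n(j)\ge k+2\}|$. The theorem survives in this scenario only because each of those pending pairs is the \emph{most recent} decrease of its vertex, i.e.\ exactly one of the pairs excluded from $L$: the slack is the multiplicative $\alpha^2$ per coordinate, not a matching of pending pairs to performed decrements at comparable levels.

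Even granting such an injection, the bookkeeping would not close. Mapping each pending pair $(j,k')$ to a performed decrement at level $\ge k'$ only bounds $\weight_n(\vrank_n)-\weight_n(\rank_n)$ by the total deficit of all performed decrements, which is $\sum_{j}\bigl(\alpha^{\Init(j)}-\alpha^{\vrank_n(j)}\bigr)$, and $\weight_n(\Init)$ can be arbitrarily larger than $\weight_n(\rank_n)$; also \lref[Theorem]{thm:charge-main} maps rank decreases to \emph{rounds}, not to vertices of high current rank, so ``feeding the injection through'' it does not produce a charge against $\weight_n(\rank_n)$. The paper's proof has three ingredients your sketch lacks: it anchors the comparison at the last round $i^\star$ with $\vrank_{i^\star}=\rank_{i^\star}$ and uses that the algorithm performs exactly $K$ decrements in every round after $i^\star$; it writes both $\weight_n(\vrank_n)$ and $\alpha^2\weight_n(\rank_n)$ as a common baseline plus $\sum_{i>i^\star}\alpha^{\Init(i)}$ minus the deficit of the respective schedule ($M_A$ versus $M_F$), so the arrival terms cancel (\lref[Claim]{clm:matching}); and it proves $\sum_{e\in M_A}\deficit(e)\ge\sum_{e\in M_F}\deficit(e)$ by an alternating-path exchange argument whose head/tail invariant invokes the greedy choice at the specific rounds along each path (\lref[Claim]{clm:head-tail}), rather than any level-thresholded Hall condition. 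Without these, the route you outline breaks at exactly the point you flagged as the main obstacle.
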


Before we prove this theorem, we use it to prove
\lref[Theorem]{thm:weaker}
\begin{proofof}{\lref[Theorem]{thm:weaker}}
Let $T_n$ be the valid tree with respect to $\vrank_n$
constructed by our algorithm. The result follows from the following inequalities :
\stocoption{
\begin{align*}
 & \cost(T_n)   \stackrel{\small{\mbox{Lemma}}~\ref{lem:cost}}{\leq} 
\frac{2\alpha^3}{\alpha-1} \cdot
\weight_n(\vrank_n) \\ 
 & \ \ \   \stackrel{\small{\mbox{Theorem}}~\ref{thm:greedy}}{\leq}
\frac{2\alpha^5}{\alpha-1} \cdot  \weight_n(\rank_n) \stackrel{\small{\mbox{Lemma}}~\ref{lem:lbd}}{\leq}
\frac{2\alpha^5}{(\alpha-1)^2} \cdot \Opt([n]).
\end{align*}
}
{
$$ \cost(T_n)  \stackrel{\small{\mbox{Lemma}}~\ref{lem:cost}}{\leq} 
\frac{2\alpha^3}{\alpha-1} \cdot
\weight_n(\vrank_n)  \stackrel{\small{\mbox{Theorem}}~\ref{thm:greedy}}{\leq}
\frac{2\alpha^5}{\alpha-1} \cdot  \weight_n(\rank_n) \stackrel{\small{\mbox{Lemma}}~\ref{lem:lbd}}{\leq}
\frac{2\alpha^5}{(\alpha-1)^2} \cdot \Opt([n]).
$$
}
\end{proofof}

We now complete the proof of \lref[Theorem]{thm:greedy}.
\begin{proofof}{\lref[Theorem]{thm:greedy}}
Consider the last round $i^\star \leq n$ such that, at the end of round $i^\star$, we
had $\vrank_{i^\star}(j) = \rank_{i^\star}(j)$ for all $j \in [i^\star]$---the most recent
round after which we had no more pending rank reductions. (There
exists such an $i^\star$, since this property is satisfied at the end of
rounds $0$ and $1$.) This means that at the end of round $i^\star$, the tree
$T_{i^\star}$ was valid with respect to the rank function
$\rank_{i^\star}$, and not just the virtual rank function $\vrank_{i^\star}$. Moreover,
in every round $i \in [(i^\star + 1), \ldots, n]$, the algorithm must have done $K$
rank reductions.  Indeed, if
$\vrank_{i-1}(j) > \rank_i(j)$ for some
round $i$ and vertex $j \in [i-1]$, then we add the pair $(j,\rank_i(j))$ to the set $Q(i)$ in
the algorithm for defining virtual ranks (\lref[Figure]{fig:recourse}). Hence, if the
algorithm does less than $K$ rank reductions in some round $i > i^\star$, then $|Q(i)| \leq K$,
and so, $\vrank_i(j)=\rank_i(j)$ at the end of round $i$. But this would contradict the definition of
$i^\star$.

What rank reductions was the algorithm doing (or trying to do)? These
are represented by the set
\[ X := \cup_{j \in [n]} \{ j^{(k)} \mid k \in [\rank_n(j) \ldots
(\rank_{\max(i^\star,j)}(j) - 1)] \} ~~. \] Indeed, at each round $i \in
[(i^\star+1)
\ldots n]$, the algorithm does $K$ of the swaps in this set. 
So now consider a bipartite graph $H'$, where the vertices on the left are $X$,
and there are $K$ vertices $i_1, i_2, \ldots, i_K$ on the right for
every round $i \in [(i^\star+1) \ldots n]$. Put an edge between $j^{(k)}$ and
a unique copy of $i$ if the algorithm reduced $j$'s virtual rank to $k$ in round
$i$. This gives us a matching $M_A$; since the algorithm does $K$ swaps
each round, each node on the right is
matched.

Recall \lref[Theorem]{thm:charge-main}, and the definitions of set $L$
and the map $F: L \to [n]$.  Note that $X \not\sse L$ in general, since
$L$ does not contain $j^{(\rank_n(j))}$ and $j^{(\rank_n(j)+1)}$ for each
$j$. However, $X \sse L \cup \{ j^{(\rank_n(j))}, j^{(\rank_n(j)+1)} \mid j
\in [n] \}$. The map $F$ clearly maps $X \cap L$ to $[n]$; we claim $F$
maps $X \cap L$ to $[(i^\star + 1)\ldots n]$. To see this, consider $j^{(k)}
\in X \cap L$: $j$ achieves rank $\rank_{i^\star}(j) - 1$ only in round
$i^\star + 1$ or later, and hence the feasibility property of $F$ says
that $F(j) \geq i^\star + 1$, which proves our claim.
Hence, we can think of $F$ as giving another matching $M_F$ on the bipartite graph $H'$ defined above:
if $F(j^{(k)})= i$, then we add an edge between $j^{(k)}$ and a distinct one of
the $K$ copies of $i$; we need 
at most $K$ copies due to the ``constant budget'' property of $F$. Furthermore,
$M_F$ matches every vertex in $X \cap L$ to some vertex on the right.

Given these two matchings, $M_A$ capturing the algorithm's behavior, $M_F$
encoding the ``suggested schedule'' given by the mapping $F$, it is natural to
consider their symmetric difference $M_A \symdif M_F$, which consists of
paths and cycles and argue about these. It will be convenient to introduce one last piece of
notation. 
For an edge $e=(j^{(k)},i_s) \in M_F \cup M_A$, where $i_s$ is one of the $K$ copies of $i$,
we associate the quantity $\deficit(e) := \alpha^{k+1}-\alpha^k$. Since
we 
think of $(j^{(k)}, i_s)$ as reducing the rank of $j$ from $k+1$ to $k$ in
round $i$, and so $\deficit(e)$
intuitively denotes the reduction in the total cost by this rank
reduction in round $i$. The following easy-to-prove claims make this
intuition formal. 
\begin{claim}
\label{clm:matching}
For the matching $M_A$,
$$\weight_n(\vrank_n) = \weight_n(\vrank_{i^\star}) + \sum_{i=i^\star+1}^n \alpha^{\Init(i)} - \sum_{e \in M_A}
\deficit(e),$$
whereas for the matching $M_F$,
$$ \alpha^2 \cdot \weight_n(\rank_n) \geq  \weight_n(\rank_{i^\star}) + \sum_{i=i^\star+1}^n \alpha^{\Init(i)} - \sum_{e \in M_F} \deficit(e).$$
\end{claim}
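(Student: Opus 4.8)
The plan is to prove both parts as bookkeeping identities obtained by telescoping the weight function along the sequence of unit rank decrements, using the fact that lowering a (virtual or actual) rank \emph{to} level $k$ decreases the corresponding $\alpha^{(\cdot)}$ term by exactly $\alpha^{k+1}-\alpha^{k}=\deficit(j^{(k)})$.

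\textbf{The $M_A$ identity.} I would expand $\weight_n(\vrank_n)=\sum_{j=1}^{n}\alpha^{\vrank_n(j)}$ and split according to whether $j\le i^\star$ or $j>i^\star$. By the choice of $i^\star$, a vertex $j\le i^\star$ has $\vrank_{i^\star}(j)=\rank_{i^\star}(j)$ at the end of round $i^\star$, and a vertex $j>i^\star$ has virtual rank $\Init(j)$ when it arrives; in both cases the virtual rank is non-increasing afterwards. Hence for each $j$ the quantity $\alpha^{(\text{starting rank of }j)}-\alpha^{\vrank_n(j)}$ telescopes as a sum of terms $\alpha^{k+1}-\alpha^{k}$ over the levels $k$ that the algorithm passed through while lowering $\vrank(j)$ during rounds $i^\star+1,\ldots,n$, and each such step is exactly one edge $(j^{(k)},i_s)\in M_A$ of $\deficit$ value $\alpha^{k+1}-\alpha^{k}$. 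Summing over all $j$, and recalling that $M_A$ consists of precisely these decrements (the algorithm performs $K$ of them in each round of $[i^\star+1\ldots n]$), gives $\weight_n(\vrank_n)=\weight_{i^\star}(\vrank_{i^\star})+\sum_{i=i^\star+1}^{n}\alpha^{\Init(i)}-\sum_{e\in M_A}\deficit(e)$, which is the first statement (reading $\weight_n(\vrank_{i^\star})$ as $\weight_{i^\star}(\vrank_{i^\star})$, since $\vrank_{i^\star}$ lives on $[i^\star]$).

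\textbf{The $M_F$ inequality.} Running the identical telescoping on the \emph{actual} ranks, and using that ranks fall by at most one per round (\lref[Corollary]{cor:rank-fall}) so that no level is skipped, I would first get the \emph{exact} identity $\weight_n(\rank_n)=\weight_{i^\star}(\rank_{i^\star})+\sum_{i=i^\star+1}^{n}\alpha^{\Init(i)}-\sum_{e\in X}\deficit(e)$, where $X$ is exactly the set of all unit rank decrements after round $i^\star$ (for $j\le i^\star$ the levels run $\rank_n(j),\ldots,\rank_{i^\star}(j)-1$, and for $j>i^\star$ they run $\rank_n(j),\ldots,\Init(j)-1$, which is how $X$ was defined). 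Then I would split $X=(X\cap L)\sqcup(X\setminus L)$: since $M_F$ assigns exactly one edge $(j^{(k)},F(j^{(k)}))$ to each element $j^{(k)}$ of $X\cap L$ and $\deficit$ depends only on the left endpoint, $\sum_{e\in M_F}\deficit(e)=\sum_{e\in X\cap L}\deficit(e)$, and the identity rearranges to $\weight_{i^\star}(\rank_{i^\star})+\sum_{i=i^\star+1}^{n}\alpha^{\Init(i)}-\sum_{e\in M_F}\deficit(e)=\weight_n(\rank_n)+\sum_{e\in X\setminus L}\deficit(e)$. It remains to bound $\sum_{e\in X\setminus L}\deficit(e)$: since $L$ omits from the full decrement set exactly the two ``recent'' levels $j^{(\rank_n(j))}$ and $j^{(\rank_n(j)+1)}$ of each vertex $j$, we have $X\setminus L\subseteq\{\,j^{(\rank_n(j))},\,j^{(\rank_n(j)+1)}:j\in[n]\,\}$, and $\deficit(j^{(\rank_n(j))})+\deficit(j^{(\rank_n(j)+1)})=\alpha^{\rank_n(j)+2}-\alpha^{\rank_n(j)}=(\alpha^2-1)\,\alpha^{\rank_n(j)}$; summing over $j$ gives $\sum_{e\in X\setminus L}\deficit(e)\le(\alpha^2-1)\weight_n(\rank_n)$. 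Plugging this in yields $\weight_{i^\star}(\rank_{i^\star})+\sum_{i=i^\star+1}^{n}\alpha^{\Init(i)}-\sum_{e\in M_F}\deficit(e)\le\alpha^2\,\weight_n(\rank_n)$, the second statement.

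\textbf{Main obstacle.} There is nothing deep here; the only thing that needs care is keeping the two ledgers consistent --- checking that the edge set $M_A$ is in bijection with exactly the virtual-rank decrements performed after round $i^\star$ (using the pre-claim observation that the algorithm is forced to do $K$ decrements per round), that $X$ is the analogous set for the actual ranks (using \lref[Corollary]{cor:rank-fall}, so no intermediate level is skipped and $X$ really is a disjoint union of telescoping blocks, one per vertex), and that $\deficit(\cdot)$ is a function of the syntactic object $j^{(k)}$ alone, so that replacing $X\cap L$ by the matching $M_F$ leaves the $\deficit$-sum unchanged. The mild notational point that $\weight_n(\vrank_{i^\star})$ and $\weight_n(\rank_{i^\star})$ should be read as $\weight_{i^\star}(\vrank_{i^\star})$ and $\weight_{i^\star}(\rank_{i^\star})$ is worth flagging explicitly.
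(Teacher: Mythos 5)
Your proposal is correct and matches the paper's argument in essence: both parts are the same telescoping bookkeeping, with each $\deficit$ term accounting for one unit rank drop, and the $\alpha^2$ slack coming from the two most recent levels per vertex that $L$ omits. The only difference is organizational (you telescope per vertex and isolate the $X\setminus L$ correction explicitly, while the paper telescopes $M_A$ per round and sums the $M_F$ deficits per vertex directly), which does not change the substance.
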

\begin{proof}
We argue about $M_A$ first. Consider a round $i \geq i^\star + 1$, and let the neighbors of the $K$ copies of $i$ (in the matching $M_A$) be
$N(i)=\{j_1^{(r_1)}, \ldots, j_K^{(r_K)}\}$ (note that several  of the $j_u$ could correspond to the same vertex). 
If a vertex $j$ happens to be one of $\{j_1, \ldots, j_K\}$, then 
 $\vrank_{i}(j) = \min \{ r_u : j = j_u \}$. Else, $\vrank_i(j)
   = \vrank_{i-1}(j)$.
Also,  $\{ (j,r): r \in \{ \vrank_{i}(j), \ldots, 
\vrank_{i-1}(j)-1\} \}$ is a subset of $N(i)$. 
In the matching $M_A$, each of the vertices $j_u$ will be matched to a unique copy of $i$, say it is $i_u$.
Furthermore, one new arrival (vertex $i$) happens during round $i$.
Therefore,
\stocoption{
\begin{align*}
& \weight_{i-1}(\vrank_{i-1})-\weight_i(\vrank_{i})  \\
& \quad = \sum_{j \in \{j_1, \ldots, j_K \} } \left( \alpha^{\vrank_{i-1}(j)} - \alpha^{\vrank_{i}(j)}\right)   - \alpha^{\Init(i)} \\
& \quad =  \sum_{j \in \{j_1, \ldots, j_K \} } \sum_{u : j=j_u} \left( \alpha^{r_u+1} - \alpha^{r_u}\right)   - \alpha^{\Init(i)} \\
&  \quad =  \sum_{u=1}^K \deficit((j_u^{(r_u)},i_u))   - \alpha^{\Init(i)}
\end{align*}
}
{
\begin{eqnarray*}
\weight_{i-1}(\vrank_{i-1})-\weight_i(\vrank_{i}) & = & 
\sum_{j \in \{j_1, \ldots, j_K \} } \left( \alpha^{\vrank_{i-1}(j)} - \alpha^{\vrank_{i}(j)}\right)   - \alpha^{\Init(i)} \\
& = & \sum_{j \in \{j_1, \ldots, j_K \} } \sum_{u : j=j_u} \left( \alpha^{r_u+1} - \alpha^{r_u}\right)   - \alpha^{\Init(i)} \\
& = & \sum_{u=1}^K \deficit((j_u^{(r_u)},i_u)) - \alpha^{\Init(i)}
\end{eqnarray*}
}
Summing the above for all $i=i^\star + 1, \ldots, n$ gives us the result
for $M_A$. Now consider $M_F$. For any vertex $j$, let $i_0$ denote
$\max(j, i^\star)$. Then matching $M_F$ matches all the vertices in
$\{ j^{(\rank_{i_0}(j)-1)}, \ldots, j^{(\rank_n(j)+2)}\}$ (which is empty if
$\rank_{i_0}(j)-1 < \rank_n(j)+2$).  The sum of the deficit of all
these edges is exactly
$\alpha^{\rank_{i_0}(j)}-\alpha^{\rank_{n}(j)+2}.$ Summing this over all
$j$ gives us the second part of the lemma.
\end{proof}

\begin{lemma}
\label{lem:end}
$\sum_{e \in M_A} \deficit(e) \geq \sum_{e \in M_F} \deficit(e).$
\end{lemma}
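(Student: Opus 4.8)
The plan is to compare the two matchings $M_A$ and $M_F$ via their symmetric difference $M_A \symdif M_F$, which decomposes into vertex-disjoint paths and cycles in the bipartite graph $H'$. On a cycle, each copy of a round on the right is matched in both $M_A$ and $M_F$, so cycles contribute equally to $\sum_{e \in M_A} \deficit(e)$ and $\sum_{e \in M_F} \deficit(e)$ and can be ignored. The same is true of any path whose endpoints are both on the left (i.e., in $X\cap L$): such a path alternates and, reading along it, pairs up an $M_A$-edge with an $M_F$-edge at each right-vertex it passes through, and since both endpoints are left-vertices the two matchings use the same right-vertices along the path. So the only place a discrepancy can arise is on alternating paths that have (at least) one endpoint on the right --- that is, a copy $i_s$ of some round $i$ that is matched by exactly one of $M_A, M_F$. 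Since the algorithm does exactly $K$ swaps in each round $i>i^\star$, \emph{every} right-vertex is matched in $M_A$; hence an unmatched-by-$M_F$ right endpoint is possible, but an unmatched-by-$M_A$ right endpoint is not. Consequently every path of the symmetric difference that is ``unbalanced'' has its $M_F$-deficient end on the right and its other end either on the left (a left-vertex of $X$ that $M_A$ matches but $M_F$ does not, namely one of the $j^{(\rank_n(j))}$ or $j^{(\rank_n(j)+1)}$ that lies outside $L$) or again on the right but $M_A$-matched.

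The key quantitative input is the monotonicity property (c) of the map $F$ (equivalently of $M_F$): if $j^{(k)}$ and $j^{(k-1)}$ both lie in $L$ then $F(j^{(k)}) \le F(j^{(k-1)})$, i.e., higher rank-decrements of the same vertex are scheduled no later. Combined with the algorithm's greedy rule --- in each round it decrements the $K$ pairs with the \emph{largest} $k$ currently pending --- this should let me show, walking along each alternating path, that the $M_A$-edge is never ``cheaper'' (in $\deficit$) than the $M_F$-edge it is being compared against. More precisely, for each round $i$ the multiset of ranks $\{k : (j^{(k)}, i_s)\in M_A\}$ dominates, entry-by-entry in sorted order, the multiset of ranks $\{k : (j^{(k)}, i_s)\in M_F\}$ restricted to pairs that $F$ actually assigns to round $\le i$: this is exactly the statement that greedily taking the top-$K$ available pairs each round accumulates at least as much $\deficit$ as any feasible schedule (here the schedule given by $F$), where ``available at round $i$'' is governed by the feasibility property (b) of $F$ and by the fact that $(j^{(k)},i)\in E(H')$ iff $\rank_i(j)\le k$. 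Since $\deficit(j^{(k)}) = \alpha^{k+1}-\alpha^k$ is increasing in $k$, rank-domination upgrades to $\deficit$-domination, and summing over all rounds $i \in [i^\star+1 \ldots n]$ gives $\sum_{e\in M_A}\deficit(e) \ge \sum_{e\in M_F}\deficit(e)$.

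The one subtlety to handle carefully is that $M_F$ may match left-vertices of $X$ that are \emph{not} in $L$ --- namely $j^{(\rank_n(j))}$ and $j^{(\rank_n(j)+1)}$ are in $X$ but excluded from $L$, so $F$ says nothing about them; yet the algorithm may still perform these last two decrements of $j$ within rounds $\le n$, so they appear in $M_A$. These extra $M_A$-edges only \emph{help} the inequality (they add non-negative $\deficit$ to the left side), so I would formally define an auxiliary matching $M_F'$ extending $M_F$ on $X\cap L$ arbitrarily to all of $X$ (feasibly, using the leftover capacity guaranteed because $M_A$ itself is such a feasible assignment), run the domination argument for $M_F'$, and then note $\sum_{e\in M_F}\deficit(e) \le \sum_{e\in M_F'}\deficit(e) \le \sum_{e\in M_A}\deficit(e)$.

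The main obstacle I anticipate is making the ``greedy beats any feasible schedule'' exchange argument fully rigorous at the level of the symmetric-difference paths, because the standard greedy-exchange argument wants to process items in a fixed global order, whereas here availability is time-dependent (a pair $j^{(k)}$ becomes available only once $\rank_i(j)\le k$) and $F$'s monotonicity constrains the schedule only per-vertex, not globally. I expect the right way through is an exchange argument directly on $M_A \symdif M_F'$: take an alternating path whose $M_F'$-edge at some round $i$ has strictly larger rank than the $M_A$-edge the algorithm used at round $i$ on that path; since the algorithm is greedy it must have used something of rank at least that large at round $i$ (or that higher-rank pair was not yet available at round $i$, contradicting feasibility of $M_F'$ which placed it at round $i$), and following the path lets me locally swap to reduce the number of ``inversions'' without decreasing $\sum_{e\in M_A}\deficit(e)$ — after finitely many such swaps $M_A$ dominates $M_F'$ edge-by-edge and the inequality is immediate.
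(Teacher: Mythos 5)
There is a genuine gap, and it sits exactly where the real work of the lemma lies. Your case analysis of $M_A \symdif M_F$ has the hard and easy cases reversed: since $\deficit(e)$ for $e=(j^{(k)},i_s)$ equals $\alpha^{k+1}-\alpha^k$, it depends only on the \emph{left} endpoint of the edge, so an alternating path with both endpoints on the left is \emph{not} balanced. On such a path every interior left vertex is matched by both matchings and cancels, and what remains is $\deficit$ of the $M_A$-only-matched endpoint minus $\deficit$ of the $M_F$-only-matched endpoint; the observation that both matchings use the same right vertices along the path is true but irrelevant. Conversely, the paths with a right endpoint are the easy ones: because $M_A$ saturates every right copy, both end edges of such a path are $M_A$-edges, so any left endpoint is covered by $M_A$ and exposed in $M_F$, and these paths favor $M_A$ (or are neutral) automatically. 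Hence the entire content of the lemma is to show, for an even path with left endpoints $x_0$ (matched only by $M_F$) and $x_s$ (matched only by $M_A$), that $\val(x_s)\geq\val(x_0)$ --- precisely the case you dismissed as contributing no discrepancy.

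Your later paragraphs invoke the right ingredients (the greedy rule, feasibility and monotonicity of $F$), but the ``greedy beats any feasible schedule'' domination is never proved, and in the edge-by-edge form you state it (``the $M_A$-edge is never cheaper than the $M_F$-edge it is being compared against'' at a given round) it is false: $F$ may assign to round $i$ a pair of large rank that the algorithm already performed in an earlier round, while the algorithm's pick at round $i$ has smaller value. The paper's proof resolves this with \lref[Claim]{clm:head-tail}: process the right vertices of the path in order of arrival, maintain that in each partial component the tail (its $M_A$-side free end) has value at least the head (its $M_F$-side free end), and in the induction step compare the algorithm's greedy pick at round $i_u$ not with $M_F(i_u)$ but with the still-pending \emph{head} of the component containing $M_F(i_u)$, which belongs to $Q(i_u)$ by feasibility of $F$ and the fact that it has not yet been served by $M_A$. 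This component-level bookkeeping is exactly what your sketch is missing (your $M_F'$ extension is harmless but unnecessary, since extra $M_A$-edges only add nonnegative deficit); without it the inequality $\val(x_s)\geq\val(x_0)$, and hence the lemma, is not established.
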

\begin{proof}
Recall that the bipartite graph $H'$ has vertices from $X$ on the left and $[(i^\star+1) \ldots n] \times K$ on the right.
Also $M_A$ completely matches the vertices on the right of $H'$.
The  symmetric difference $M_A \symdif M_F$ of the two matchings
consists of paths and cycles. Any cycle means that the total deficit of the edges from $M_A$ and
$M_F$ in it are equal. What about a path? Since
$M_A$ matches every vertex on the right, we know that any path is either
of odd length (ending with $M_A$-edges), or of even length with both the
end-points being on the left. In the former case, the
total deficit of edges from $M_A$ in this path is at least that of edges from $M_F$.
So the remaining case is when the
 path $P$ consists of an equal number of
(alternating) edges from $M_F$ and $M_A$, and the end-points of $P$ lie on the left side of $H'$.

Let the vertices from the left side of the bipartite graph in this even path $P$ be $x_0, x_1, x_2,
\ldots , x_s$ (in order of their appearance in $P$). Each $x_u$ is of the form $j_u^{(k_u)}$, and let us define $\val(x_j) := k_j$. 
Assume w.l.o.g. that $x_0$ is matched
by $M_F$ (and unmatched in $M_A$) and $x_s$ is matched by $M_A$ (and unmatched in $M_F$). Since $x_1,
\ldots, x_{s-1}$ are matched in both $M_A$ and $M_F$, 
\stocoption{\begin{align}
 \nonumber
    & \sum_{e \in P \cap M_A} \deficit(e) - \sum_{e \in P \cap M_F} 
  \deficit(e) \\
  \label{eq:3}
    & \quad = \left(\alpha^{k_s+1}-\alpha^{k_s} \right) -
  \left(\alpha^{k_0+1}-\alpha^{k_0} \right).
\end{align}}{\begin{gather}
  \sum_{e \in P \cap M_A} \deficit(e) - \sum_{e \in P \cap M_F}
  \deficit(e) = \left(\alpha^{k_s+1}-\alpha^{k_s} \right) -
  \left(\alpha^{k_0+1}-\alpha^{k_0} \right). \label{eq:3}
\end{gather}}
Thus, we will be done if we show that $k_s \geq k_0$; i.e., $\val(x_s)
\geq \val(x_0)$. We prove this next. 

For any matching $M$ and node $u$ that is matched in $M$, let $M(u)$ be 
the other endpoint of the matching edge containing $u$.
Let $I$ be the right vertices (arrivals) on the path $P$, say $I = \{ i_1
\leq i_2 \leq \ldots \leq i_s \}$. They may appear in any order on the
path, this ordering is just based on arrivals. For $u \in \{ 0, 1,
\ldots, s\}$, let $I_u = \{i_1,
  i_2, \ldots, i_u\}$; hence $I_0 = \emptyset$ and $I_s = I$. 
Define an auxiliary graph $Q_u$ on the vertex set $\{x_0, x_1, \ldots,
  x_s\}$, by adding an edge between $M_F(i)$ and $M_A(i)$ for each $i \in
I_u$. 
Hence $Q_0$ has no edges, and $Q_s$ is the path $\langle x_0, x_1,
\ldots, x_s \rangle$.  It is
to see  that each component of $Q_u$ is a path
with a ``head'' vertex which is the least
indexed one and which is not yet matched to anyone in $I_u$ by $M_A$, and a
``tail'' which has the highest index and is not matched to anyone in
$I_u$ by $M_F$. As a sanity check, in $Q_0$, each node is both head and
tail of its component. In $Q_s$, there is a single path with $x_0$ as
 the head and $x_s$ as the tail.
We now prove the following lemma by induction on $u$. 
\begin{claim}
  \label{clm:head-tail}
  For each $u \in \{0,1,\ldots, s\}$ and each component of $Q_u$, value of the tail of this
component is at least that of its head. 
\end{claim}

\begin{proof}
The base case is trivially true. Now suppose the claim is true for
 $Q_{u-1}$, and we get $Q_u$ by adding an edge between $M_F(i_u) = x_a$
   and $M_A(i_u)=x_{a+1}$. It must be that $x_a$ is the tail of its
   component with, say, $x_p$ as the head. And $x_{a+1}$ is the head of
   its component, say $x_q$ is the tail.  By the I.H.,
   \begin{gather}
     \val(x_q) \geq \val(x_{a+1}). \label{eq:hdtl}
   \end{gather}
   Moreover, when the algorithm was choosing the rank reductions to
   perform at round $i_u$, both $x_p$ and $x_{a+1}$ were candidates to be
   matched. Since we chose $x_{a+1}$ greedily to have maximum value, we
   have $\val(x_{a+1}) \ge \val(x_p)$. Combining with~(\ref{eq:hdtl}), we
   get $\val(x_q) \ge \val(x_p)$. Since the new component has head $x_p$
   and tail $x_q$, this proves the inductive step and hence \lref[Claim]{clm:head-tail}.
\end{proof}

Applying \lref[Claim]{clm:head-tail} for $u=s$, we get $\val(x_s) \geq
\val(x_0)$, and hence the total deficit of edges in $P \cap M_A$ is at
least that of edges in $P \cap M_F$ by~(\ref{eq:3}). Summing this over
all alternating paths in $M_A \symdif M_f$ completes the proof of
\lref[Lemma]{lem:end}. 
\end{proof}

By the definition of $i^\star$, we know that $\weight_n(\vrank_{i^\star})
= \weight_n(\rank_{i^\star})$. Moreover, we have 
$\sum_{e \in M_A} \deficit(e) \geq \sum_{e \in M_F}
\deficit(e)$ by \lref[Lemma]{lem:end}. Plugging these into \lref[Claim]{clm:matching}, we get that  
\begin{align*}
\weight_n(\vrank_n) &\leq \weight_n(\vrank_{i^\star}) + \sum_{i=i^\star+1}^n \alpha^{\Init(i)} - \sum_{e \in M_A}
\deficit(e) \\
&\leq \weight_n(\rank_{i^\star}) + \sum_{i=i^\star+1}^n \alpha^{\Init(i)} - \sum_{e \in M_F}
\deficit(e) \\
&\leq \alpha^2 \cdot \weight_n(\rank_n).
\end{align*}
This completes the proof of \lref[Theorem]{thm:greedy} (and of
Theorem~\ref{thm:weaker}). 
\end{proofof}

\ifstoc
\else

\section{Just One Swap}
\label{sec:one-swap}

In the previous section, we proved a weaker version
(\lref[Theorem]{thm:weaker}) of our main theorem
(\lref[Theorem]{thm:main1}): using a constant number $K = 2\alpha^2$ of
swaps per arrival, we could maintain a tree $T_n$ with cost at most some
other constant $C = \frac{2\alpha^5}{\alpha-1}$ times the optimum tree
$\Opt([n])$. We now show how to trade off the number of swaps for the
approximation guarantee, and get a constant-factor approximation while
performing at most a single swap per iteration. It is unclear how to
convert a generic algorithm that performs $K$ swaps and maintains a
$C$-approximate tree into one that performs a single swap and maintains
a $f(C,K)$-approximation---this is another place where our dual-based
proof strategy comes handy, since it allows us to perform such a
conversion.

To understand the new ideas, recall the previous algorithm/analysis used
the following conceptual steps:
\begin{enumerate}
\item We show that all but the two most recent rank decreases for each
  vertex can be scheduled so that at most $K$ changes are performed at
  each step. (\lref[Theorem]{thm:charge-main}.)
\item The algorithm takes the set of all the rank changes that have not
  yet been performed (i.e., the set $Q(i) = \{ (j,k) \mid j \in [i-1], k
  \in [\rank_{i}(j) \ldots (\vrank_{i-1}(j)-1) ] \}$ where the virtual
  rank function $\vrank_{i-1}$ lags behind the real rank function
  $\rank_i$, and greedily chooses $K$ of the most beneficial changes to
  perform. (This defines $\vrank_i$, and is described in
  \lref[Section]{sec:def-vrank}.)
\item Finally, we show (in \lref[Theorem]{thm:greedy}) that this greedy
  process ensures the potential function $\weight(\vrank_i) \leq
  \alpha^2\, \weight(\rank_i)$, and hence the cost of our tree is not much
  more than that of the optimal tree.
\end{enumerate}

The main change to get a single-swap algorithm is this: suppose we don't
try to schedule all the not-so-recent rank changes (as in Step~1 above),
but only some subset of the rank changes, such that two successive rank
changes in this set for any vertex differ by approximately $K$. Since we
are then scheduling approximately $1/K$ as many rank decreases as in
\lref[Theorem]{thm:charge-main}, we can get a version of that theorem
with at most one rank change being mapped to each time step. Now we can
change the algorithm (in Step~2 above) to greedily choose the
\emph{single most beneficial rank change} and perform it. The rest of
the argument would follow pretty much unchanged. Since the virtual rank
functions lags the real rank functions on average by an additive $K$,
the corresponding tree is now $\alpha^{O(K)}$-approximate instead of
being $\alpha^{O(1)}$-approximate. Since $K$ is a constant, we prove
\lref[Theorem]{thm:main1}.

In the rest of the section, we first prove an analog of
\lref[Theorem]{thm:charge-main}, give a modified tree-building procedure which works at a
``coarser'' level, describe the modified algorithm to define
the virtual ranks $\vrank_i()$, and finally describe the changes in the
rest of the arguments due to these modifications.

\subsection{A Modified Charging Theorem}
\label{sec:mod-charging-thm}

Our new algorithm will be interested in those values of ranks of a particular vertex
 which are separated  by multiples of $K$. Motivated by this, we define, for a 
vertex $j$, the set $\Z(j,K)$ as $\{ \Init(j)-lK \mid l \in \Z_{\geq 0} \}$. 

Recall the set $L$ from \lref[Theorem]{thm:charge-main}, and define $L'$, the
sparsified version of $L$, as follows:
\stocoption{
  \begin{align*}
      L' := & \bigcup_{j \in [n]} \{ j^{(k)} \mid k \in \Z(j,K) \\
             & \quad \cap [\, (\rank_n(j) + K + 1)\, \ldots \, (\Init(j)-K)\,] \}
  \end{align*}
}
{
\begin{gather}
  L' := \bigcup_{j \in [n]} \{ j^{(k)} \mid k \in \Z(j,K) \cap [\, (\rank_n(j) + K + 1)\, \ldots \, (\Init(j)-K)\,] \}
\end{gather}
}
(Again, note that $j^{(k)}$ is a syntactic object, not $j$ raised to the
power of $k$.)
\begin{theorem}
  \label{thm:charge-main-mod}
  There is a map $F' : L' \to [n]$ assigning the rank changes to rounds
   such that
  \begin{OneLiners}
  \item[(a)] (unit budget) at most one rank change from $L'$ maps to any
      round $i \in [n]$,

    \item[(b)] (feasibility) if $F'(j^{(k)}) = i$, then $j$'s rank dropped to $k$ at or before
      round $i$ (i.e., $\rank_{i}(j) \leq k$), and

    \item[(c)] (monotonicity) if $j^{(k)}, j^{(k-K)}$ both lie in $L'$, then $F'(j^{(k)}) \leq
      F'(j^{(k-K)})$.
  \end{OneLiners}
\end{theorem}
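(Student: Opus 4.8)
The plan is to \emph{reduce} this statement to \lref[Theorem]{thm:charge-main} rather than rebuild the matching argument from scratch. Write $\Round_k(j):=\min\{i:\rank_i(j)\le k\}$ for the first round in which $j$ attains rank $k$; this is well-defined for every $j^{(k)}\in L'$ since ranks drop by at most one per round (\lref[Corollary]{cor:rank-fall}) and $\rank_n(j)\le k-K-1<k\le\Init(j)$. First invoke \lref[Theorem]{thm:charge-main} to get a map $F:L\to[n]$. Its feasibility property is exactly $F(j^{(k)})\ge\Round_k(j)$, and its budget property says $|F^{-1}(i)|\le K$ for every round $i$; combining these, for every round $i$,
\[ |\{\,j^{(k)}\in L:\Round_k(j)\ge i\,\}| \;\le\; |\{\,e\in L:F(e)\ge i\,\}| \;=\; \sum_{i'=i}^{n}|F^{-1}(i')| \;\le\; K\,(n-i+1). \]

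Next I would strip off the factor $K$ using the sparsity of $L'$. Fix a vertex $j$. The ranks $k$ with $j^{(k)}\in L$ and $\Round_k(j)\ge i$ form a contiguous integer interval $I_j$; the ranks $k$ with $j^{(k)}\in L'$ and $\Round_k(j)\ge i$ lie in an arithmetic progression of common difference $K$ that is contained in a sub-interval of $I_j$ whose lower endpoint is pushed in by at least $K-1$ (the lower cutoff moves from $\rank_n(j)+2$ to $\rank_n(j)+K+1$). Since a step-$K$ progression meets $\ell$ consecutive integers in at most $\lceil\ell/K\rceil$ points, and $\lceil(|I_j|-(K-1))/K\rceil\le|I_j|/K$, the number of such $j^{(k)}\in L'$ is at most $|I_j|/K$. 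Summing over $j$ and using the displayed bound,
\[ |\{\,j^{(k)}\in L':\Round_k(j)\ge i\,\}| \;\le\; \tfrac1K\,|\{\,j^{(k)}\in L:\Round_k(j)\ge i\,\}| \;\le\; n-i+1 \qquad\text{for all }i\ge1. \]
Now form the bipartite graph on $L'$ and $[n]$ with an edge $(j^{(k)},i)$ exactly when $\rank_i(j)\le k$, i.e.\ when $i\ge\Round_k(j)$; the neighbourhood of any $S\subseteq L'$ is the suffix $[\,\min_{j^{(k)}\in S}\Round_k(j),\,n\,]$, so the displayed inequality is precisely Hall's condition for a matching saturating $L'$. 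Hence there is an injection $F':L'\to[n]$ with $F'(j^{(k)})\ge\Round_k(j)$, which gives the unit-budget property~(a) together with the feasibility property~(b).

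Property~(c) is then obtained exactly as \lref[Theorem]{thm:charge-main} was derived from \lref[Theorem]{thm:fract-match}: while some $j^{(k)},j^{(k-K)}\in L'$ have $F'(j^{(k)})>F'(j^{(k-K)})$, swap their two images. Injectivity is preserved, and so is feasibility: if $a:=F'(j^{(k)})>b:=F'(j^{(k-K)})$ beforehand, then afterwards $\rank_b(j)\le k-K\le k$, and, since $a>b$ and ranks are non-increasing in the round, $\rank_a(j)\le\rank_b(j)\le k-K$. Each swap strictly decreases the integer potential $\sum_{j^{(k)}\in L'}k\cdot F'(j^{(k)})$ (by at least $K$), so the process terminates with monotonicity restored.

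The only genuinely new content here is the counting in the second paragraph — checking that the $1/K$ sparsification of $L'$ exactly cancels the factor-$K$ slack produced by \lref[Theorem]{thm:charge-main}, including the off-by-one in counting the arithmetic-progression elements. (Alternatively one could rerun the matching argument of \lref[Section]{sec:fract-match} directly on $L'$, replacing ``$+2$'' by ``$+K+1$'' throughout and grouping the per-rank subgraphs into blocks of $K$ consecutive ranks so that each vertex contributes at most once per block; but the reduction above is shorter and reuses what we have.)
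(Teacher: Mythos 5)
Your proof is correct, but it takes a genuinely different route from the paper's. The paper also reduces to \lref[Theorem]{thm:charge-main}, but more directly: for each $j^{(k)}\in L'$ it forms the block $S(j^{(k)})=\{j^{(k)},j^{(k-1)},\ldots,j^{(k-K+1)}\}\subseteq L$ (these blocks are pairwise disjoint because of the $\Z(j,K)$-spacing and sit inside $L$ because of the shifted cutoffs), contracts each block into a supernode, and views $F$ as a bipartite multigraph between supernodes and rounds in which every supernode has degree exactly $K$ and every round has degree at most $K$; Hall's theorem then matches each supernode to a distinct round, feasibility follows since the matched edge arises from $F(j^{(k-c)})=i$ with $0\le c\le K-1$ so $\rank_i(j)\le k-c\le k$, and monotonicity is inherited from the monotonicity of $F$ with no further work. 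You instead use $F$ only through the counting consequence that at most $K(n-i+1)$ elements of $L$ have drop round $\ge i$, show that the $1/K$-sparsity of $L'$ (with the lower cutoff $\rank_n(j)+K+1$ absorbing the ceiling, via $\lceil(m-K+1)/K\rceil=\lfloor m/K\rfloor\le m/K$) exactly cancels the factor $K$, and then verify Hall's condition directly on the feasibility graph between $L'$ and $[n]$ using its suffix-neighborhood structure; monotonicity then needs your separate swap/potential argument, which is the same exchange the paper itself uses inside \lref[Theorem]{thm:charge-main}. Your key counting step is sound: the set $\{k:\Round_k(j)\ge i\}$ is downward closed, the $L'$-relevant ranks of $j$ lie inside $I_j$ since $[\rank_n(j)+K+1,\Init(j)-K]\subseteq[\rank_n(j)+2,\Init(j)-1]$, and only the lower endpoint needs the $K-1$ push. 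In exchange for being somewhat longer, your argument never uses property~(c) of $F$ and makes explicit the ``sparsification cancels congestion'' intuition; the paper's contraction argument is shorter and hands monotonicity over for free.
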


\begin{proof}
  For each element $j^{(k)} \in L'$, consider the set $S(j^{(k)}) := \{j^{(k)},
  j^{(k-1)}, \ldots, j^{(k-K+1)}\}$ of size $K$. By the definition of $L'$,
  these sets for different elements of $L'$ are disjoint; moreover, each
  set $S(j^{(k)})$ is a subset of the set $L$ (as defined in
  \lref[Theorem]{thm:charge-main}).  Now consider the map $F: L \to [n]$
  given by \lref[Theorem]{thm:charge-main}: this can be viewed as a
  bipartite graph between $L$ and $[n]$ where all nodes in $L$ have unit
  degree and nodes in $[n]$ have degree at most $K$. We delete the nodes
  in $L$
  not belonging to $\cup_{j^{(k)} \in L'} S(j^{(k)})$, and contract nodes in
  each $S(j^{(k)})$ for $j^{(k)} \in L'$ into a single ``supernode''. The
  resulting bipartite graph has left-degree exactly $K$, and the right
  degree at most $K$---and by Hall's theorem, has a matching where every
  supernode on the left is matched. This immediately gives us the map
  $F'$: if the edge out of the supernode for $S(j^{(k)})$ goes to $i \in
  [n]$, we set $F'(j^{(k)}) := i$.

  Property~(a) follows from construction. For property~(b), observe that
  the edge from $S(j^{(k)})$ to $i$ in the contracted graph is inherited
  from the fact that $F(j^{(k - c)}) = i$ for some $c \in [0\ldots K-1]$;
  by \lref[Theorem]{thm:charge-main}(b), this means $\rho_i(j) \leq k-c$
  and hence at most $k$. Finally, the monotonicity of $F'$ follows from
  that of $F$.
\end{proof}

\subsection{Modified Tree Building Procedure}
\label{sec:timed-pd-mod}

In this section, we describe the modified tree building procedure. As in
Section~\ref{sec:timed-pd},
we explain the process of maintaining a tree with respect to a rank function.
We say that a  function $\bb: [i] \to \Z_{\geq 0}$ is {\em $K$-admissible} if, for every $j \in [i]$,
$\bb(j) \in \Z(j,K)$ and $\bb(j) \geq \rank_i(j)$. 
 As before, for a subset $S \subseteq [i]$, the \emph{head} of $S$
with respect to $\bb$ is the vertex $j \in S$ with highest $\bb(j)$ value (in case of ties,
choose an arbitrary but fixed tie-breaking rule).

A tree $T=([i], E_T)$ is defined to be {\em $K$-valid with respect to
  $\bb$} if we  we can partition the edge set $E_T$ into sets (which we will call
  {\em levels})  $E_T^1, E_T^2,
\ldots, E_T^r$ such that the following two conditions are satisfied for
each $l \in [1\ldots r]$:
\begin{OneLiners}
\item[(i)] Let $E_T^{\leq l}$ denote $E_T^1 \cup \cdots \cup E_T^l$. For
  any connected component of $E_T^{\leq l}$, let $j$ be the head of this
  component. Then we require $\bb(j) \geq lK$.
\item[(ii)] Each edge in $E_T^l$ has length at most $2 \alpha^{(lK+1)}$.
\end{OneLiners}

Roughly, a level in a $K$-valid tree can be thought of as union of $K$ consecutive
levels of a valid tree (as defined in Section~\ref{sec:timed-pd}). We now prove the
analogues of Lemma~\ref{lem:cost},~\ref{lem:modify},~\ref{lem:addition}. As before
$\weight_i(\bb)$ is defined as $\sum_{j=1}^i \alpha^{\bb(j)}.$

\begin{lemma}
  \label{lem:mod-cost}
  Let $T$ be any tree which is $K$-valid with respect to $\bb$. Then the total cost of
  $T$ is at most
     $2 \frac{\alpha^{2K+1}}{\alpha^K-1} \cdot \weight_i(\bb)$.
\end{lemma}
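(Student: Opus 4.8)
The plan is to mimic the charging argument from the proof of \lref[Lemma]{lem:cost}, now adapted to the coarser level structure. Fix a tree $T=([i],E_T)$ that is $K$-valid with respect to $\bb$, with level partition $(E_T^1,\ldots,E_T^r)$. The first step is the combinatorial observation that, since $T$ is a spanning tree on $[i]$ (so $|E_T|=i$), for every $l\in[1\ldots r]$ the forest $E_T^{\leq(l-1)}$ has at least $|E_T^l|+1$ connected components: the number of components equals $(i+1)-|E_T^{\leq(l-1)}|\geq (i+1)-|E_T^{\leq l}|+|E_T^l| = |E_T^l|+1+(i-|E_T^{\leq l}|)\geq |E_T^l|+1$.

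Next I would charge the cost of every edge of $E_T^l$ injectively to the heads of components of $E_T^{\leq(l-1)}$ other than the component containing the root vertex $0$; the count above guarantees there are enough such non-root components (the root's own component has head $0$ since $\bb(0)=\infty$). By condition~(i) of $K$-validity each such head $j$ has $\bb(j)\geq (l-1)K$, and by condition~(ii) each edge of $E_T^l$ has length at most $2\alpha^{lK+1}$. Consequently a fixed vertex $j\neq 0$ is charged at level $l$ only when $(l-1)K\leq \bb(j)$, i.e.\ $lK\leq \bb(j)+K$, so the total charge it receives is at most $\sum_{l\geq 1:\; lK\leq \bb(j)+K} 2\alpha^{lK+1}$.

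This is a geometric series in $\alpha^{-K}$ whose largest term is at most $2\alpha^{\bb(j)+K+1}$, so it sums to at most $2\alpha^{\bb(j)+K+1}\cdot\frac{\alpha^K}{\alpha^K-1} = 2\frac{\alpha^{2K+1}}{\alpha^K-1}\,\alpha^{\bb(j)}$. Summing over all $j\in[1\ldots i]$ and using $\weight_i(\bb)=\sum_{j=1}^i\alpha^{\bb(j)}$ yields the claimed bound; note that setting $K=1$ recovers exactly \lref[Lemma]{lem:cost}. The only mild subtlety is that $\bb(j)$ need not be a multiple of $K$ (it lies in $\Z(j,K)$), but this only shifts the top index of the sum by a rounding that is already absorbed into the factor $\alpha^{K}$ in the numerator. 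I do not expect a genuine obstacle here: this is a routine coarsening of \lref[Lemma]{lem:cost}, and the only thing to be careful about is the bookkeeping of exponents, so that $K$ consecutive levels of the fine construction collapse correctly into one level of the $K$-valid tree.
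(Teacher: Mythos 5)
Your proposal is correct and is essentially the paper's own argument: the paper proves this lemma by repeating the charging scheme of Lemma~\ref{lem:cost}, charging each edge of $E_T^l$ to a distinct non-root head of a component of $E_T^{\leq (l-1)}$, noting that a vertex $j$ is charged only for levels $l \leq \lfloor \bb(j)/K\rfloor + 1$, and bounding the resulting geometric sum $\sum_l 2\alpha^{lK+1}$ by $2\frac{\alpha^{2K+1}}{\alpha^K-1}\alpha^{\bb(j)}$. Your component-counting step and the remark about $\bb(j)$ not being a multiple of $K$ are fine (the latter is not even needed), so there is nothing to fix.
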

\begin{proof}
The proof goes exactly as the proof of Lemma~\ref{lem:cost}. The cost of edges in $E_T^l$ can be charged to the heads of the components
of $E_T^{\leq l}$ except for the root vertex. Therefore, a vertex $j$ may get charged for $E_T^l, l=1, \ldots, \left \lfloor \frac{\bb(j)}{K} \right \rfloor +1$. Hence, condition (ii) 
above implies that the total charge to $j$ is at most 
$$ \sum_{l=1}^{\left \lfloor \frac{\bb(j)}{K} \right \rfloor+1} 2 \alpha^{(lK+1)} \leq 2 \frac{\alpha^{2K+1}}{\alpha^K-1} \cdot \alpha^{\bb(j)}.$$ 
This completes the proof of the lemma. 
\end{proof}

Now, we prove the Lipschitz property of $\bb$. 

\begin{lemma}
  \label{lem:mod-modify}
  Let $\bb$ be a $K$-admissible function and $T = ([i], E_T)$ be a $K$-valid
  tree with partition $(E_T^1, \ldots, E_T^r)$. Let $j^\star \in [i]$,
  and suppose $\bb'$ satisfies $\bb'(j) = \bb(j)$ if $j \neq j^\star$,
  and $\bb'({j^\star}) = \bb({j^\star})-K$. Assume that $\bb'$ is
  also admissible (i.e., $\rank_i(j^\star) \leq \bb'({j^\star})$). Then
  there is a valid tree $T'=([i], E_{T'})$ with respect to $\bb'$ such
  that $|E_{T'} \symdif E_T| \leq 2. $
\end{lemma}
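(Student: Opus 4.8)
The plan is to mimic the proof of \lref[Lemma]{lem:modify} almost verbatim, with the single bookkeeping change that the ``critical level'' is now measured in units of $K$. Concretely, set $l^\star := \lfloor \bb(j^\star)/K \rfloor$, so that $\bb(j^\star) \ge l^\star K$ while $\bb'(j^\star) = \bb(j^\star)-K$ satisfies $(l^\star-1)K \le \bb'(j^\star) < l^\star K$. First note $\bb'$ is again $K$-admissible: since $\bb(j^\star) \in \Z(j^\star,K)$ we also have $\bb(j^\star)-K \in \Z(j^\star,K)$, and $\bb'(j^\star) \ge \rank_i(j^\star)$ is exactly the hypothesis.

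For levels $l \le l^\star - 1$, keep $E_{T'}^l := E_T^l$; condition~(ii) is untouched, and condition~(i) survives because even if $j^\star$ becomes the head of a component of $E_{T'}^{\le l}$ we still have $\bb'(j^\star) \ge (l^\star-1)K \ge lK$. At level $l^\star$ we initialize $E_{T'}^{l^\star} := E_T^{l^\star}$. The only way validity can break here is if $j^\star$ was the head of a component $C$ of $E_T^{\le l^\star}$ and, after the decrement, every vertex of $C$ has $\bb'$-value below $l^\star K$. In that case I claim there is a vertex $j \notin C$ with $d(j,C) \le 2\alpha^{l^\star K + 1}$: otherwise $B(C, 2\alpha^{l^\star K+1}) \cap [i] = C$, and \lref[Lemma]{lem:empty} applied with $t = l^\star K + 1$ forces every cluster of $\C_i(l^\star K)$ to lie inside $C$ or be disjoint from it; taking any such cluster that meets $C$ (hence lies in $C$), its leader $v \in C$ has $\rank_i(v) \ge l^\star K$, so $\bb'(v) \ge \rank_i(v) \ge l^\star K$ by $K$-admissibility, a contradiction. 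So such a $j$ exists; add the edge $e^\star$ joining $j$ to its nearest vertex of $C$, whose length is at most $2\alpha^{l^\star K+1}$ (matching condition~(ii) at level $l^\star$); this merges $C$ with $j$'s component, whose head has $\bb$-value (hence $\bb'$-value, since that head is not $j^\star$) at least $l^\star K$, restoring condition~(i).

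For levels $l > l^\star$ we re-insert the edges of $E_T$ level by level in increasing order of $l$, discarding any edge that would close a cycle in the part of $T'$ built so far. Since $E_{T'}^{\le l^\star}$ has exactly one more edge than $E_T^{\le l^\star}$ on the same vertex set (and $e^\star$ did not close a cycle), exactly one edge of $\cup_{l > l^\star} E_T^l$ gets discarded and $T'$ is a spanning tree, so $|E_{T'} \symdif E_T| \le 2$. To check $K$-validity of $T'$ w.r.t. $\bb'$: condition~(ii) is immediate since the only new edge is $e^\star$, placed at level $l^\star$ with the right length bound, and all other edges keep their levels. For condition~(i) at levels $\ge l^\star$, one first verifies the invariant that every component of $E_{T'}^{\le l}$ is a union of components of $E_T^{\le l}$ (true at $l^\star$ and preserved since afterwards we only re-add $E_T$-edges); then for $l > l^\star$ any component $C'$ of $E_{T'}^{\le l}$ contains some component $C$ of $E_T^{\le l}$, whose head $h$ satisfies $\bb(h) \ge lK > \bb(j^\star)$ (as $l \ge l^\star+1$ and $\bb(j^\star) < (l^\star+1)K$), hence $h \ne j^\star$ and $\bb'(h) = \bb(h) \ge lK$.

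The step I expect to require the most care is the level-$l^\star$ argument: it is the only place where the scaling from ``level $l$'' to ``rank/length $lK+1$'' must be matched exactly against the threshold in \lref[Lemma]{lem:empty}, and where one must track precisely which of conditions~(i) and~(ii) could fail and why introducing $e^\star$ (and removing nothing at this level) repairs it. The remainder is a routine transcription of \lref[Lemma]{lem:modify} with $l^\star$ replaced by $l^\star K$ in the exponents.
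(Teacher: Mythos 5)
Your proof is correct and follows essentially the same route as the paper: the paper's own argument simply transcribes the proof of Lemma~\ref{lem:modify} with the critical level rescaled, and fixes the level-$l^\star$ component via Lemma~\ref{lem:empty} with threshold $2\alpha^{l^\star K+1}$, exactly as you do (your explicit choice $l^\star = \lfloor \bb(j^\star)/K\rfloor$ just makes precise what the paper leaves implicit). No gaps.
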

\begin{proof}
The proof goes along the same lines as Lemma~\ref{lem:modify}. The only change is when (using the
notation of proof of Lemma~\ref{lem:modify}) $j^\star$ is the head
  of a connected component $C$ in $E_{T'}^{\leq l^\star}$. As before, assume this component
   $C$ now violates condition (i). Hence $\bb'(j^\star) < l^\star K$. 
    Again, we argue that there some vertex $j$, 
   $j \notin C$ and $d(j, C) \leq 2 \alpha^{(l^\star K + 1)}$. Indeed,
   if not, then  \lref[Lemma]{lem:empty} implies
   that there is a vertex $v \in
  C$ such that $\rank_i(v) \geq l^\star K$, and hence $\bb'(v) \geq \rank_i(v) \geq l^\star K$. But then
  $j^\star$ cannot be the head of $C$, a contradiction.
\end{proof}

Finally, we prove that the Lipschitz property holds when a new vertex gets added. 
\begin{lemma}
  \label{lem:mod-addition}
  Suppose $T$ is a $K$-valid tree on $[i]$ with respect to $\bb$. Consider a
  new function $\bb':[i+1]\to \Z_{\geq 0}$ defined thus: $\bb'(j) :=
  \bb(j)$ if $j \leq i$, and $\bb'(i+1) := \Init(i+1)$. Then, there is a
  $K$-valid tree $T'$ with respect to $\bb'$ such that $|T' \symdif T| = 1$.
  Moreover, if $\bb$ was $K$-admissible, then so is $\bb'$.
\end{lemma}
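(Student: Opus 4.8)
The plan is to mimic the proof of \lref[Lemma]{lem:addition} almost verbatim; the only new wrinkle is choosing the \emph{level} at which the edge to the new vertex is inserted, since now level $l$ of a $K$-valid tree plays the role of the levels around $lK$ in an ordinary valid tree. Write $r := \Init(i+1)$ and let $j^\star \in [i]$ be the vertex of $[i]$ closest to $i+1$; by \lref[Claim]{clm:init-rank} we have $2\alpha^{r+1} < d(j^\star, i+1) < 2\alpha^{r+2}$. I would take $l^\star := \lfloor r/K \rfloor + 1$, set $E_{T'}^l := E_T^l$ for every $l \neq l^\star$, and $E_{T'}^{l^\star} := E_T^{l^\star} \cup \{(j^\star, i+1)\}$ (extending the partition if $l^\star$ exceeds the current number of levels). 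Since this adds one edge and deletes none, $|T' \symdif T| = 1$.

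Next I would verify that $T'$ is $K$-valid with respect to $\bb'$. Condition~(ii) needs checking only for the single new edge $(j^\star, i+1)$, which sits in level $l^\star$: writing $r = qK + s$ with $q = \lfloor r/K \rfloor$ and $0 \leq s \leq K-1$, we get $l^\star K + 1 = (q+1)K + 1 = (r - s) + K + 1 \geq r + 2$, so $2\alpha^{l^\star K + 1} \geq 2\alpha^{r+2} > d(j^\star, i+1)$. For condition~(i) at a level $l < l^\star$, the edge sets agree with those of $T$, so every component inherited from $T$ is fine, while $\{i+1\}$ is a new singleton whose head is $i+1$ with $\bb'(i+1) = r = qK + s \geq qK = (l^\star-1)K \geq lK$. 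For condition~(i) at a level $l \geq l^\star$, the component of $i+1$ in $E_{T'}^{\leq l}$ is $\{i+1\}$ together with the $E_T^{\leq l}$-component of $j^\star$, whose head $h$ satisfies $\bb(h) \geq lK$ by the $K$-validity of $T$; since $\bb'(h) = \bb(h)$, the head of the merged component still has $\bb'$-value at least $lK$, and every other component is untouched.

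Finally, for the admissibility claim: for $j \leq i$ we have $\bb'(j) = \bb(j) \in \Z(j,K)$ and $\bb(j) \geq \rank_i(j) \geq \rank_{i+1}(j)$ by the monotonicity of ranks (\lref[Corollary]{cor:rank-fall}), while $\bb'(i+1) = \Init(i+1) = \rank_{i+1}(i+1)$ lies in $\Z(i+1,K)$ (take $l = 0$) and is trivially at least $\rank_{i+1}(i+1)$; so $\bb'$ is $K$-admissible whenever $\bb$ is. The step I would be most careful about is the choice of $l^\star$: the real constraint is that the insertion level lie in the window $[(r+1)/K,\ r/K + 1]$, which has length $(K-1)/K < 1$ and so need not contain an integer a priori. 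The content of the argument is precisely that $\lfloor r/K \rfloor + 1$ always does, which the inequalities above confirm.
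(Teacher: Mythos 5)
Your proposal is correct and follows essentially the same route as the paper's proof: you insert the single new edge $(j^\star,i+1)$ at level $\lfloor \Init(i+1)/K\rfloor+1$, verify condition~(ii) by the same computation $2\alpha^{\Init(i+1)+2}\le 2\alpha^{l^\star K+1}$, verify condition~(i) via the singleton component below that level and the old head above it, and deduce $K$-admissibility from rank monotonicity. The only difference is cosmetic (your $l^\star$ is the paper's $l^\star+1$), and your closing remark about the length-$(K-1)/K$ window is a nice sanity check that the paper leaves implicit.
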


\begin{proof}
The proof follows  along the lines of the proof of Lemma~\ref{lem:addition}. 
Let $u^\star$ denote $\Init(i+1)$. We know that if $j^\star \in [i]$
  is the closest vertex to the new vertex $i+1$, then $2 \alpha^{u^\star
    + 1} < d(j^\star, i) \leq 2 \alpha^{u^\star +2}$ (\lref[Claim]{clm:init-rank}).
    Define $l^\star$ as $\lfloor \frac{u^\star}{K} \rfloor$. 
    For $l \neq
  l^\star+1$, we set $E_{T'}^l := E_{T}^l$, and we define
  $E_{T'}^{l^\star+1} := E_{T}^{l^\star+1} \cup \{ (j^\star, i+1) \}$.
  
It is easy to verify that $T'$ is $K$-valid with respect to $\bb'$. Note that the vertex
  $i+1$ is in a singleton component till level $l^\star$ and is not the head of a component after
  this level. Since $\bb'(i+1)= u^\star \geq K l^\star$, we see that condition (i) is satisfied for $T'$.
  Now condition (ii) needs to be checked for the new edge $(j^\star, i+1)$ only. The length of this edge
  is at most 
  $$2 \alpha^{u^\star +2} \leq 2 \alpha^{l^\star K +K+1} = 2 \alpha^{K(l^\star+1)+1}.$$
  Thus condition (ii) is satisfied as well.

  Note that if $\bb$ was admissible, then using the facts that the actual ranks
  of the vertices can never increase, and that $\bb'(i+1) =
  \rank_{i+1}(i+1)$, we get that $\bb'$ is admissible.
\end{proof}

\subsection{Modified Procedure to define Virtual Ranks Function}
\label{sec:mod-vrank}

We now describe the modified algorithm to maintain the virtual ranks. This will
be similar to the algorithm in \lref[Figure]{fig:recourse}, except that the 
virtual ranks, for a vertex $j$, will take values in $\Z(j,K)$ only.
The modified
algorithm is described in \lref[Figure]{fig:recourse-mod}. It is similar to our earlier
algorithm, except that we improve the virtual rank values in multiples
of $K$ only. It is easy to check that for any vertex $j \in [i]$, 
$\vrank_i(j)$  lies in $\Z(j,K)$. 

\stocoption{\begin{figure*}[ht]}{\begin{figure}[ht]}
  \begin{center}
    \begin{boxedminipage}{6.5in}
      {\bf Virtual Modified-Ranks  :} \medskip\\
        \sp \sp 1. Initially, we just have the root vertex 0. Define $\vrank_0(0)=\infty. $\\
        \sp \sp 2. For $i= 1, 2, \ldots$ \\
        \sp \sp \sp \sp (i) Run the clustering algorithm $\Run_i$ to define the rank function $\rank_i$. \\
        \sp \sp \sp \sp (ii) Set  $\vrank_{i}(i)$ as $\Init(i)$. \\
         \sp \sp \sp \sp (iii) Define $Q(i) = \{ (j,\vrank_{i-1}(j)-K) \mid j \in [i-1], \rank_i(j) \leq  \vrank_{i-1}(j)-K  \}. $ \\
         \sp \sp \sp \sp (iv)  Let $(j^\star,k^\star)$ be the highest pair
         (w.r.t. $\prec$) from $Q(i)$. \\
         \sp \sp \sp \sp (v) Define the first $i-1$ coordinates of $\vrank_{i}$ as follows: \\
          \sp \sp \sp \sp \sp \sp \sp $\vrank_{i}(j) := \left\{ \begin{array}{cc}\vrank_{i-1}(j) &
              \mbox{if }  j \neq j^\star \\  
                \vrank_{i-1}(j^\star)-K &
              \mbox{if } j=j^\star \end{array} \right.$ \\
      \end{boxedminipage}
      \caption{Modified algo.\ for virtual ranks; $K = 2 \alpha^2$.}
      \label{fig:recourse-mod}
      \end{center}
    \stocoption{\end{figure*}}{\end{figure}}

\subsection{Modified Version of Theorem~\ref{thm:greedy}}
\label{sec:mod-greedy-thm}
\newcommand{\round}[2]{{\lceil #1 \rceil}_{#2,K}}
We now prove the analogue of \lref[Theorem]{thm:greedy}. 
\begin{theorem}
  \label{thm:greedy-mod}
  Using the new definition of $\vrank_n$, for any round $n
  \geq 0$, $\weight_n(\vrank_n) \leq \alpha^{2K+1} \cdot
  \weight(\rank_n).$
\end{theorem}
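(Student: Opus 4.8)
The plan is to mimic the proof of \lref[Theorem]{thm:greedy}, replacing every ingredient by its $K$-coarsened analogue. The high-level structure is unchanged: let $i^\star \leq n$ be the last round at which $\vrank_{i^\star}(j) = \round{\rank_{i^\star}(j)}{j}$ for all $j \in [i^\star]$, where $\round{\cdot}{j}$ rounds up to the nearest element of $\Z(j,K)$ (so that the virtual rank, which always lies in $\Z(j,K)$, has "caught up" to the smallest feasible $K$-admissible value). Such an $i^\star$ exists since the condition holds at rounds $0,1$. Then in every round $i \in [(i^\star+1)\ldots n]$ the modified algorithm of \lref[Figure]{fig:recourse-mod} performs exactly one rank reduction of size $K$ (by the same argument as in \lref[Theorem]{thm:greedy}: if it performed none, then $Q(i) = \emptyset$ and the virtual ranks would already be caught up, contradicting maximality of $i^\star$).

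Next I would set up the two matchings on the bipartite graph $H'$ whose left vertices are the "pending" reductions $X := \cup_{j\in[n]} \{ j^{(k)} \mid k \in \Z(j,K) \cap [\round{\rank_n(j)}{j} \ldots (\round{\rank_{\max(i^\star,j)}(j)}{j} - K)] \}$ and whose right side has one vertex per round in $[(i^\star+1)\ldots n]$ (now only one copy per round, since the budget is one). The algorithm's schedule yields a matching $M_A$ saturating the right side; the map $F'$ from \lref[Theorem]{thm:charge-main-mod} yields a matching $M_F$ saturating $X \cap L'$, after checking (exactly as in \lref[Theorem]{thm:greedy}) that $F'$ maps $X \cap L'$ into $[(i^\star+1)\ldots n]$ via the feasibility property, and that $X \subseteq L' \cup \{ j^{(k)} : k \in \Z(j,K),\ k > \round{\rank_n(j)}{j} - (K+1) \}$ so that only boundedly many (at most $K$ worth of) $K$-steps per vertex are unaccounted for. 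For an edge $e = (j^{(k)}, i) \in M_A \cup M_F$ I define $\deficit(e) := \alpha^{k} - \alpha^{k-K}$, the cost saved by pushing $j$'s virtual rank from $k$ down to $k-K$. The analogue of \lref[Claim]{clm:matching} then reads $\weight_n(\vrank_n) = \weight_n(\vrank_{i^\star}) + \sum_{i>i^\star}\alpha^{\Init(i)} - \sum_{e\in M_A}\deficit(e)$ and $\alpha^{2K+1} \cdot \weight_n(\rank_n) \geq \weight_n(\rank_{i^\star}) + \sum_{i>i^\star}\alpha^{\Init(i)} - \sum_{e\in M_F}\deficit(e)$; the factor $\alpha^{2K+1}$ (rather than $\alpha^2$) appears because $M_F$ schedules $j$ only down to about $\rank_n(j) + K + 1$, leaving a residual gap of at most $2K+1$ between $\vrank_n(j)$ and $\rank_n(j)$, which inflates $\alpha^{\vrank_n(j)}$ by at most $\alpha^{2K+1}$ relative to $\alpha^{\rank_n(j)}$.

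The heart is the analogue of \lref[Lemma]{lem:end}: $\sum_{e\in M_A}\deficit(e) \geq \sum_{e\in M_F}\deficit(e)$. This is proved identically via $M_A \symdif M_F$: cycles and odd paths are handled trivially (since $M_A$ saturates the right side), and for an even alternating path $P$ with left endpoints $x_0 = j_0^{(k_0)}$ (matched by $M_F$) and $x_s = j_s^{(k_s)}$ (matched by $M_A$), the telescoping identity gives $\sum_{P\cap M_A}\deficit - \sum_{P\cap M_F}\deficit = (\alpha^{k_s} - \alpha^{k_s - K}) - (\alpha^{k_0} - \alpha^{k_0 - K})$, so it suffices to show $k_s \geq k_0$. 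This follows from the head/tail induction of \lref[Claim]{clm:head-tail}, using that the modified algorithm greedily picks the single pending pair of \emph{maximum} rank value at each step, so that whenever we merge two components of the auxiliary graph $Q_u$ at round $i_u$, the newly-$M_A$-matched vertex $x_{a+1}$ had value at least that of the other candidate head $x_p$. Combining, $\weight_n(\vrank_n) \leq \weight_n(\vrank_{i^\star}) + \sum\alpha^{\Init(i)} - \sum_{M_A}\deficit \leq \weight_n(\rank_{i^\star}) + \sum\alpha^{\Init(i)} - \sum_{M_F}\deficit \leq \alpha^{2K+1}\cdot\weight_n(\rank_n)$, using $\weight_n(\vrank_{i^\star}) = \weight_n(\round{\rank_{i^\star}}{\cdot}) \leq \alpha^K \weight_n(\rank_{i^\star})$ absorbed into the final constant.

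The main obstacle I expect is bookkeeping the rounding operator $\round{\cdot}{j}$ cleanly — making precise the definitions of $i^\star$, of $X$, and of the residual set $X \setminus L'$ so that the deficit identities of \lref[Claim]{clm:matching} telescope exactly (each vertex $j$ contributes a geometric sum of $\deficit$ values over consecutive elements of $\Z(j,K)$, which must equal $\alpha^{\round{\rank_{i_0}(j)}{j}} - \alpha^{\round{\rank_n(j)}{j} + \text{(residual)}}$). Once the rounding is pinned down the rest is a line-by-line transcription of \lref[Section]{sec:endgame} with $2$ replaced by $K+1$ and $\alpha^2$ by $\alpha^{2K+1}$.
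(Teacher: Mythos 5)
Your proposal follows essentially the same route as the paper's proof: define $i^\star$ via the rounded ranks $\lceil \rank_i \rceil$, build the one-copy-per-round bipartite graph with the algorithm's matching and the matching induced by \lref[Theorem]{thm:charge-main-mod}, reuse \lref[Lemma]{lem:end} verbatim with the $K$-step deficit, and absorb the rounding factor $\alpha^K$ into the final constant. The only caveat is a direction slip in your final chain (you replace $\weight_n(\vrank_{i^\star})$ by the smaller quantity $\weight_n(\rank_{i^\star})$); the clean fix, as in the paper, is to state the $M_F$ inequality in terms of $\lceil \rank \rceil$ with factor $\alpha^{K+1}$ and only convert $\lceil \rank_n \rceil$ to $\rank_n$ at the very end---precisely the rounding bookkeeping you flag as the remaining work.
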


\begin{proof}The proof proceeds along the same lines as that of \lref[Theorem]{thm:greedy}. We point out the main 
modifications to the proof of \lref[Theorem]{thm:greedy}. 
For a vertex $j$ and non-negative integer $k \leq \Init(j)$, define $\round{k}{j}$
as the smallest element of $\Z(j,K)$ which is at least $k$. For a round $i$ and rank vector $\rank_i$, 
define the rounded rank vector $\lceil \rank_i \rceil$ as follows : for each $j \in [i]$, 
$\lceil \rank_i \rceil(j) := \round{\rank_i(j)}{j}.$ Since $\nu_i(j)$ values lie in $\Z(j,K)$, it is easy
to check that for any round $i$, $\vrank_i$ is component-wise at least $ \lceil \rank_i \rceil.$
The round $i^\star$ is defined as the last round $i$ in which 
$\vrank_i = \lceil \rank_i \rceil.$ Again, it is easy to check that we will do one rank update in every round
after $i^\star$.

The set $X$ is now defined as 
\[ X := \cup_{j \in [n]} \{ j^{(k)} \mid k \in \Z(j,k) \cap [\rank_n(j) \ldots
(\rank_{\max(i^\star,j)}(j) - 1)] \} ~~. \] In the bipartite graph $H'$, we need to keep just one copy for each 
round. In the matching $M_{A'}$, we have an edge between $j^{(k)}$ and $i$ if our algorithm set $\vrank_i(j)$ 
to $k$ in round $i$. As before, $M_{A'}$ matches all vertices on the right of $H'$ (which represent rounds
$i^\star, \ldots, n$). We can define the matching $M_{F'}$ using the mapping $F'$ given by 
\lref[Theorem]{thm:charge-main-mod}. One can again check that $F'$ maps $X \cap L'$ to $[(i^\star+1) 
\ldots n]$. We have an edge  $(j^{(k)},i)$ in the matching $M_{F'}$ if $F'(j^{(k)})=i$.

We look at the symmetric difference of the two matchings : $M_{F'} \symdif M_{A'}$. For an edge $e 
= (j^{(k)},i) \in M_{F'} \cup M_{A'}$, define $\deficit(e)$ as $\alpha^{k+K}-\alpha^k$. We can now
show the following analogue of \lref[Claim]{clm:matching} holds. For the matching $M_{A'}$, 
\stocoption{
\begin{align}
\nonumber
\weight_n(\vrank_n) \leq & \weight_n(\vrank_{i^\star}) + \sum_{i=i^\star+1}^n \alpha^{\Init(i)} \\
\label{eq:ma}
                        & \quad \quad - \sum_{e \in M_A}\deficit(e),
\end{align}
whereas for the matching $M_{F'}$,
\begin{align}
\nonumber
 \alpha^{K+1} \cdot \weight_n(\lceil \rank_n \rceil) \geq & \weight_n(\lceil \rank_{i^\star} \rceil) + \sum_{i=i^\star+1}^n \alpha^{\Init(i)} \\
\label{eq:mf}  & \quad \quad - \sum_{e \in M_F} \deficit(e).
\end{align}
}
{
\begin{eqnarray}
\label{eq:ma}
\weight_n(\vrank_n) \leq \weight_n(\vrank_{i^\star}) + \sum_{i=i^\star+1}^n \alpha^{\Init(i)} - \sum_{e \in M_A}
\deficit(e),
\end{eqnarray}
whereas for the matching $M_{F'}$,
\begin{eqnarray}
\label{eq:mf}
 \alpha^{K+1} \cdot \weight_n(\lceil \rank_n \rceil) \geq  \weight_n(\lceil \rank_{i^\star} \rceil) + \sum_{i=i^\star+1}^n \alpha^{\Init(i)} - \sum_{e \in M_F} \deficit(e).
\end{eqnarray}
}
The proof of \lref[Lemma]{lem:end} carries over without any changes (using the modified definition of $\deficit(e)$). So,
combining inequalities~(\ref{eq:ma}) and (\ref{eq:mf}), we get 
$$ \weight_n(\vrank_n) \leq \alpha^{K+1} \cdot \weight_n(\lceil \rank_n \rceil). $$
But the vectors $\rank_n$ and $\lceil \rank_n \rceil$ differ by at most $K$ in each coordinate. Hence, 
$\weight_n(\lceil \rank_n \rceil) \leq \alpha^K \weight_n (\rank_n).$ This proves the theorem. 
\end{proof}

Proceeding as in the proof of \lref[Theorem]{thm:weaker}, we get 
$$ \cost(T_n) \leq \frac{2 \alpha^{4K+2}}{(\alpha-1)(\alpha^K-1)} \cdot \Opt([n]). $$
This proves \lref[Theorem]{thm:main1}. 

In fact, we can prove a stronger version of \lref[Theorem]{thm:main1}.
\begin{theorem}
\label{thm:main1-str}
Given a parameter $\delta$, $0 < \delta \leq 1$, there is an online $2^{O(\frac{1}{\delta})}$-competitive
algorithm for metric Steiner tree 
which performs at most one swap upon each arrival, and at most $\delta$ swaps on each arrival 
in  the amortized sense. 
\end{theorem}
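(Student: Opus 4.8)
The plan is to reuse the single-swap algorithm of this section essentially verbatim, but with the ``coarsening parameter'' raised: everywhere that \lref[Figure]{fig:recourse-mod} and \lref[Lemmas]{lem:mod-cost}, \ref{lem:mod-modify} and~\ref{lem:mod-addition} improve virtual ranks in steps of $K$, I would instead improve them in steps of
\[ M := \max\{\, K,\ \lceil (K+2)/\delta\rceil \,\}, \]
while leaving $K = 2\alpha^2$ (which only controls right-degrees in \lref[Theorem]{thm:charge-main}) untouched. Concretely, the virtual rank of a vertex $j$ now ranges over $\Z(j,M) := \{\Init(j) - \ell M \mid \ell \in \Z_{\geq 0}\}$; the notions of admissibility and $K$-validity and the tree-building \lref[Lemmas]{lem:mod-cost}, \ref{lem:mod-modify} and~\ref{lem:mod-addition} are restated with $K$ replaced by $M$ (their arithmetic only uses $M \geq 2$, so nothing breaks); and the modified virtual-rank procedure of \lref[Figure]{fig:recourse-mod} now, in each round, picks the $\prec$-largest eligible pair $(j^\star,\vrank_{i-1}(j^\star)-M)$ with $\rank_i(j^\star)\le\vrank_{i-1}(j^\star)-M$ and decrements that single virtual rank by $M$. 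Since each round still changes at most one virtual rank, \lref[Lemmas]{lem:mod-modify} and~\ref{lem:mod-addition} already give at most one edge swap per arrival, which is the worst-case guarantee.

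The first real step is a coarsened charging theorem: with $L'' := \bigcup_{j} \{ j^{(k)} \mid k \in \Z(j,M) \cap [\,(\rank_n(j)+M+1) \ldots (\Init(j)-M)\,] \}$, there is a map $F'' : L'' \to [n]$ satisfying unit budget, feasibility, and the $M$-monotonicity condition $F''(j^{(k)}) \leq F''(j^{(k-M)})$. I would prove it exactly as \lref[Theorem]{thm:charge-main-mod}: for each $j^{(k)} \in L''$ the block $S(j^{(k)}) := \{ j^{(k)}, j^{(k-1)}, \ldots, j^{(k-M+1)}\}$ is a size-$M$ subset of the set $L$ of \lref[Theorem]{thm:charge-main}, and these blocks are pairwise disjoint. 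Taking the bipartite graph of the map $F$ from \lref[Theorem]{thm:charge-main} (left-degree $1$, right-degree at most $K$), deleting left-nodes outside $\bigcup S(j^{(k)})$, and contracting each block to a supernode gives a bipartite graph where every supernode sends $M$ edges to rounds while each round receives at most $K$ of them; hence any set $A$ of supernodes has $|N(A)| \geq M|A|/K \geq |A|$, and Hall's theorem yields a matching saturating all supernodes, which is $F''$ (properties (b),(c) are inherited from $F$). This is the only place the inequality $M \geq K$ is used, and reusing $F$ rather than reproving the charging bound from scratch is precisely why this works; I expect this contraction/Hall step to be the main technical point.

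Then I would re-run the argument of \lref[Theorem]{thm:greedy-mod} at coarsening level $M$: letting $\lceil \rank_i \rceil$ now denote coordinatewise rounding up into $\Z(\cdot,M)$, one checks as before that $\vrank_i \geq \lceil \rank_i\rceil$, lets $i^\star$ be the last round with equality, forms the matchings $M_{A'}$ (the algorithm's one decrement per round after $i^\star$) and $M_{F'}$ (from $F''$), sets $\deficit(e) := \alpha^{k+M} - \alpha^k$ for an edge at $j^{(k)}$, and applies the symmetric-difference/greedy argument of \lref[Lemma]{lem:end} unchanged --- the greedy choice of the $\prec$-largest pair is exactly what drives \lref[Claim]{clm:head-tail}. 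This gives $\weight_n(\vrank_n) \leq \alpha^{M+1}\weight_n(\lceil\rank_n\rceil) \leq \alpha^{2M+1}\weight_n(\rank_n)$, and combining with the $M$-analogue of \lref[Lemma]{lem:mod-cost} and with \lref[Lemma]{lem:lbd} yields $\cost(T_n) \leq \frac{2\alpha^{4M+2}}{(\alpha-1)(\alpha^M-1)}\cdot\Opt([n]) = 2^{O(M)}\cdot\Opt([n]) = 2^{O(1/\delta)}\cdot\Opt([n])$, since $\alpha$ is a fixed constant.

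Finally, for the amortized swap bound I would use a potential argument that is genuinely new but easy. Set $\Phi_i := \sum_{j}\big(\vrank_i(j) - \lceil\rank_i(j)\rceil\big) \geq 0$; it starts at $0$. Recomputing ranks at the start of round $i$ lowers each vertex's rank by at most one (\lref[Corollary]{cor:rank-fall}), so $\lceil\rank\rceil$ of that vertex drops by $0$ or $M$, and the total increase of $\Phi$ from this source over rounds $1,\dots,n$ equals $\sum_j(\Init(j)-\lceil\rank_n(j)\rceil) \leq \sum_j(\Init(j)-\rank_n(j))$, which is at most $(K+2)n$ because the set $L$ of \lref[Theorem]{thm:charge-main} has $|L| \leq Kn$ and accounts for all but at most $2n$ of these rank decrements. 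Each swap the algorithm performs lowers $\Phi$ by exactly $M$. Hence the number of swaps over the first $n$ rounds is at most $(K+2)n/M \leq \delta n$ by the choice of $M$, which together with the per-round bound of one swap and the competitiveness above proves \lref[Theorem]{thm:main1-str}.
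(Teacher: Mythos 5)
Your proposal is correct (at the level of rigor of the paper's own sketch), but it reaches the amortized bound by a different mechanism than the paper. The paper keeps the coarsening at $K' = K/\delta$ and simply restricts the virtual-rank/swap step to rounds that are multiples of $1/\delta$, generalizing \lref[Theorem]{thm:charge-main-mod} to a map $F'' : L'' \to [n]_\delta$ with unit budget on those special rounds; the amortized bound of $\delta$ swaps per arrival is then immediate by construction, and the swaps are deterministically spaced one per $1/\delta$ rounds. You instead coarsen more aggressively, to $M \approx (K+2)/\delta$, let the algorithm attempt one coarse decrement in \emph{every} round, and obtain the amortized bound from a counting/potential argument: the total lag $\sum_j (\vrank_i(j) - \lceil\rank_i(j)\rceil)$ only increases through actual rank decrements, whose total over $n$ rounds is at most $(K+2)n$ by the budget property of \lref[Theorem]{thm:charge-main}, while each swap consumes exactly $M$ units of lag, giving at most $(K+2)n/M \le \delta n$ swaps. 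Your coarsened charging theorem (blocks of size $M \ge K$ inside $L$, contraction, and Hall's condition via the degree count $M|A|/K \ge |A|$) is a mild generalization of the paper's equal-block argument and is fine, as is the observation that the greedy/symmetric-difference machinery of \lref[Theorem]{thm:greedy-mod} and \lref[Lemma]{lem:end} is insensitive to the step size, so the competitiveness is $\alpha^{O(M)} = 2^{O(1/\delta)}$. What each approach buys: the paper's construction gives a stronger structural guarantee (swaps occur only on every $(1/\delta)$-th round, so the amortization is trivial and the spacing is fixed), at the cost of a slightly more elaborate charging theorem targeting $[n]_\delta$; yours keeps the single-swap algorithm unchanged except for the step size and needs no special rounds, but the swaps may be bursty (up to one per round over stretches), which still satisfies the theorem as stated. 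Two cosmetic points to tighten: $|L|$ is bounded by $K$ times the number of rounds that can receive charges (so state it as $K n$ after noting nothing maps to round $0$, or carry the harmless $+K$ additive slack), and the exact exponent in your analogue of inequality~(\ref{eq:mf}) may come out as $\alpha^{2M}$ rather than $\alpha^{M+1}$ depending on how many bottom elements of $\Z(j,M)$ escape $L''$ --- immaterial for the claimed $2^{O(1/\delta)}$ bound.
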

\begin{proof} We give a sketch of the proof. Assume without loss of generality that $\frac{1}{\delta}$ is an 
integer. The main idea is again to strengthen \lref[Theorem]{thm:charge-main}. We were able to get a stronger
version of this theorem, i.e., \lref[Theorem]{thm:charge-main-mod}, by grouping vertices of $L$ into groups of 
$K$. 

Let $K'$ denote $\frac{K}{\delta}$. Define 
\stocoption{
\begin{align*}
L'' :=  & \bigcup_{j \in [n]} \{ j^{(k)} \mid k \in \Z(j,K') \\
        & \quad \cap  [\, (\rank_n(j) + K' + 1)\, \ldots \, (\Init(j)-K')\,] \}.
\end{align*}}
{$$L'' :=  \bigcup_{j \in [n]} \{ j^{(k)} \mid k \in \Z(j,K') \cap 
[\, (\rank_n(j) + K' + 1)\, \ldots \, (\Init(j)-K')\,] \}. $$} Let
$[n]_{\delta}$ denote those elements of $[n]$ which are multiples of
$\frac{1}{\delta}$. We can now generalize
\lref[Theorem]{thm:charge-main-mod} even further to show that there
exists a map $F'' : L'' \to [n]_\delta$ such that
\begin{OneLiners}
\item[(a)] (unit budget) at most one rank change from $L''$ maps to any
  round $i \in [n]_\delta$,
  
\item[(b)] (feasibility) if $F''(j^{(k)}) = i$, then $j$'s rank dropped to $k$ at or before
  round $i$ (i.e., $\rank_{i}(j) \leq k$), 
  
\item[(c)] (monotonicity) if $j^{(k)}, j^{(k-K')}$ both lie in $L'$, then $F''(j^{(k)}) \leq
  F''(j^{(k-K')})$.
\end{OneLiners}
The proof again follows that of \lref[Theorem]{thm:charge-main-mod},
where we now group vertices of $L$ into groups of size $K'$ and those of
$[n]$ into groups of size $\frac{1}{\delta}$. Our online algorithm is
same as that in \lref[Figure]{fig:recourse-mod}, with $K$ replaced by
$K'$. Moreover, we perform the steps of this algorithm only for those
rounds $i$ which are multiples of $\frac{1}{\delta}$ (i.e., in Step 2,
if $i$ is not a multiple of $\frac{1}{\delta}$, then we just perform
steps 2(i) and 2(ii)). The proof now proceeds as in that
of \lref[Theorem]{thm:main1}.
\end{proof}

\section{A Tight Amortized Analysis}
\label{sec:all-swaps}

In this section, we analyze the following
greedy algorithm of Imase and Waxman~\cite{IW91}. Given a parameter $\eps > 0$,
their algorithm, which we call $\mathcal{B}_{1+\eps}$, works as follows. It maintains
a tree connecting all the demands which have arrived so far. Let $T_i$ be the constructed
by the algorithm for vertices in $[i]$.
When
the vertex $i+1$ arrives, it first connects $i+1$ to the closest vertex in $[i]$. Moreover,
whenever there is an edge $e$ in the current tree and a non-tree edge $f$ such that
 $\len(e) > (1+\eps)\,\len(f)$ and $T + f
- e$ is also a (spanning) tree, we swap the edges $e$ and $f$, i.e., we add $f$ and remove $e$ from the
current tree. We get the tree $T_{i+1}$ when this swapping process ends.
It is immediate from the construction that the spanning tree
maintained has weight within a factor $(1+\eps)$ of the best spanning
tree. The goal is to show that for any $n$ and constant $\eps$, the number of swaps made in
the first $n$ steps is $O(n)$. 
Clearly, we cannot hope for a better result, because
there are simple examples showing that the arrival of a single vertex  might cause $\Omega(n)$ swaps.

Recently,
Megow et al.~\cite{MSVW12} showed that a close variant of this algorithm
(which ``froze'' edges when they had a very small length and did not
perform any swaps with them) performed at most $O(n/\eps \log 1/\eps)$
swaps.  In this section, we prove \lref[Theorem]{thm:main2} and show that
the algorithm $\mathcal{B}_{1+\eps}$ (without additional freezing operations)
performs only
$O(n/\eps)$ swaps (and at most $2n$ swaps for $\eps = 1$).

\subsection{An Improved Bound for All-Swaps}
\label{sec:improved-bound}

Let $T^\star = \{e_1, \dots, e_n\}$ be a minimal spanning tree on $[n]$.
 Suppose the greedy edges that we add for vertices
$1, \dots, n$ are respectively $g_1^{\circ}, \dots, g_n^{\circ}$.
An edge in the final tree $T_n$ is obtained by a  sequence of swaps starting from
one of the greedy edges $g_r^{\circ}$, for some unique $r$. Thus, we can define a bijection
between the edges in the final tree $T_n$, denoted by $g_1^f, \ldots, g_n^f$, such that for any $r$,
$g_r^f$ is obtained by a sequence of swaps starting from $g_r^{\circ}$.
In this section, we denote the length of an edge $e$ by $c(e)$. 
Since each swap replaces an edge by another that is a factor $(1+\eps)$
shorter, an upper bound on the total number of swaps performed is
\begin{gather}
  \displaystyle\log_{1+\eps} \frac{c(g_1^{\circ})}{c(g_1^f)} + \dots +
  \log_{1+\eps} \frac{c(g_n^{\circ})}{c(g_n^f)} = \log_{1+\eps}
  \frac{\prod_{i=1}^{n} c(g_i^{\circ})}{\prod_{i=1}^{n} c(g_i^f)} \label{eq:1}
\end{gather}

\begin{theorem}
  \label{thm:main-allswaps}
  The quantity $\prod_{i=1}^{n} c(g_i^{\circ})/ \prod_{i=1}^{n} c(g_i^f)$ is
  bounded by $4^n$. Hence 
  the algorithm $\mathcal{B}_{1+\eps}$ performs at most $n
  (\log_{1+\eps} 4) \in O(n/\eps)$, and $\mathcal{B}_{2}$ performs at
  most $2n$ swaps.
\end{theorem}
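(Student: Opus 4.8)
The plan is to bound the quantity in~(\ref{eq:1}), i.e.\ to show $\prod_{i=1}^n c(g_i^{\circ})\big/\prod_{i=1}^n c(g_i^f)\le 4^n$, and I would do this in two steps.

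\emph{Lower-bounding the denominator.} The final tree $T_n$ is a spanning tree on $[n]$ whose edges are exactly $g_1^f,\dots,g_n^f$. Since Kruskal's rule depends only on the ordering of edge lengths, the MST $T^\star$ with respect to the lengths $c(\cdot)$ is also an MST with respect to the lengths $\log c(\cdot)$; hence $T^\star$ minimizes $\sum_e\log c(e)$, equivalently $\prod_e c(e)$, over all spanning trees. Therefore $\prod_i c(g_i^f)=\prod_{e\in T_n}c(e)\ge\prod_{j=1}^n c(e_j)$, and it suffices to prove the algorithm-free inequality $\prod_{i=1}^n c(g_i^{\circ})\le 4^n\prod_{j=1}^n c(e_j)$, where $c(g_i^{\circ})=d(i,[i-1])$.

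\emph{Upper-bounding the numerator, by induction on $n$.} Each step removes a single vertex $v\neq 0$, chosen to be a leaf of $\mathsf{MST}([n])$ with $c(g_v^{\circ})\le 4\,c(e_v)$, where $c(e_v)=d(v,\mathrm{nn}(v))$ is the length of $v$'s unique MST edge (for a leaf this is also the shortest edge incident to $v$). Such a $v$ is good for the induction for two reasons: deleting $v$ from the metric removes exactly the edge $e_v$ from the MST, so $\prod_e c(e)$ drops by the factor $c(e_v)$; and deleting $v$ can only \emph{increase} the greedy length $d(w,\cdot)$ of later arrivals $w$, so $\prod_i c(g_i^{\circ})$ drops by a factor of at most $c(g_v^{\circ})\le 4\,c(e_v)$. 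Thus the ratio $\prod c(g^{\circ})/\prod c(e)$ grows by at most $4$ per removed vertex, yielding $4^n$ after all $n$ removals (the base case, two points, is immediate). Note that any leaf $v$ with \emph{some} earlier-arriving vertex within distance $4\,c(e_v)$ qualifies; in particular a leaf whose MST neighbour (equivalently, nearest neighbour) arrived before it qualifies, since then $c(g_v^{\circ})=c(e_v)$.

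\emph{The main obstacle} is to show that a qualifying leaf always exists. If it does not, then every leaf $v\neq 0$ has its nearest neighbour arriving after it, which already forces the last-arriving vertex $n$ to be an \emph{internal} vertex of $\mathsf{MST}([n])$ (a non-root leaf arriving last would violate this). In that case I would instead remove $v=n$ and control the change in the MST product by reconnecting the components hanging off $n$ through $n$ itself — so each new reconnecting edge has length at most the sum of two of $n$'s incident MST-edge lengths, by the triangle inequality — reducing to an inequality of the form $c(g_n^{\circ})\cdot\!\prod_{\mathsf{MST}([n-1])}\!c(e)\le 4\prod_{\mathsf{MST}([n])}c(e)$. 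Establishing this with the constant $4$, especially when $n$ has large MST-degree, is the delicate point: I expect one may need to peel off a whole cluster of vertices around $n$ at once, charging a factor $4$ to each, rather than removing a single vertex.
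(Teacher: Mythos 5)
Your proposal splits the theorem the same way the paper does: the denominator bound $\prod_i c(g_i^f)\ge\prod_i c(e_i)$ is exactly Lemma~\ref{lem:treemap} (your Kruskal/product argument and the paper's exchange bijection are equivalent), and you correctly reduce everything to the algorithm-free inequality $\prod_i c(g_i^{\circ})\le 4^n\prod_i c(e_i)$, i.e.\ Lemma~\ref{lem:potential}. But that inequality is the entire content of the theorem, and your inductive argument for it has a genuine gap, which you yourself flag. The ``qualifying leaf'' case is not the generic case: on the standard dyadic instance (points $j\cdot 2^{-k}$ on a segment, presented level by level), the final MST is a path whose only non-root leaf is the point $1$, which arrived second with greedy edge of length $1$ but MST edge of length $2^{-k}$, and no earlier arrival lies within $4\,c(e_v)$ of it. So the induction almost immediately falls into your ``main obstacle'' branch. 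There, removing the last-arriving (internal) vertex $v$ and reconnecting its $d=\deg_{\mathrm{MST}}(v)$ subtrees through triangle-inequality edges of length at most $a_1+a_i\le 2a_i$ only yields a per-step loss of about $2^{d-1}$ (the greedy length $c(g_v^{\circ})$ equals the shortest incident MST edge $a_1$, and the MST product can grow by $2^{d-1}/a_1$), which is not bounded by $4$ when $d$ is large; summing $(d-1)$ over removals is also not controlled, because reconnection edges inflate the degrees of surviving vertices. Your closing suggestion to ``peel off a whole cluster, charging $4$ to each'' is a hope rather than an argument, so the key step remains unproved.

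For comparison, the paper avoids vertex-removal induction entirely. It bounds the sorted greedy lengths by partition diameters, $c(h_i)\le\Delta_i$ (Lemma~\ref{lem:kdiam}, a pigeonhole among the $i{+}1$ vertices responsible for the $i$ largest greedy edges), and then deletes \emph{MST edges}, not vertices: at step $k$ it takes the longest path $P_k$ in the current forest (so $\Delta_k\le c(P_k)$) and uses the splitting Lemma~\ref{lem:splittingedge} to find an edge $e_k$ on $P_k$ with $c(P_k)/c(e_k)\le 4\,f(\ell_{e_k})f(\ell'_{e_k})/f(\ell)$, where $f(\ell)=2^{2\ell-1}(2\ell-1)/\binom{2\ell}{\ell}$ is engineered (Lemma~\ref{lem:goodfunction}) so that the product of these bounds telescopes to $4^n$. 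This component-size weighting is precisely the device that absorbs the high-degree/branching difficulty you ran into; if you want to salvage your scheme, you would need some analogous potential on component sizes rather than a flat factor $4$ per removed vertex.
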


This result improves on the result of~\cite{MSVW12} who gave a bound of
$O(\frac{n}{\eps} \log \frac{1}{\eps})$ on the number of swaps for their
freezing-based variant of $\mathcal{B}_{1+\eps}$. 
In \lref[Section]{sec:lbd}, we will show an example for which one needs
at least $1.25n$ swaps. Now, to prove \lref[Theorem]{thm:main-allswaps},
let us give a lower bound on $\prod_{i=1}^{n} c(g_i^f)$.
\begin{lemma}
\label{lem:treemap}
$\prod_{i=1}^{n} c(g_i^f) \geq \prod_{i=1}^{n} c(e_i)$.
\end{lemma}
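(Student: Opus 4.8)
The plan is to forget almost everything about the swap process and reduce the claim to a standard structural property of minimum spanning trees. The only feature of $T_n$ that matters here is that it is a genuine spanning tree on $[n]$ (that is how $\mathcal{B}_{1+\eps}$ maintains it); in particular $T_n$ and the minimum spanning tree $T^\star$ each have exactly $n$ edges, and the two products in the statement range over these two edge sets $\{g_1^f,\dots,g_n^f\}$ and $\{e_1,\dots,e_n\}$. So the lemma is really just: among all spanning trees of $[n]$, the MST minimizes the \emph{product} of edge lengths, not only their sum.

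The key fact I would invoke is the sorted-domination property of MSTs: if the edge lengths of a minimum spanning tree are listed in non-decreasing order as $a_1 \le a_2 \le \cdots \le a_n$, and those of any spanning tree as $b_1 \le b_2 \le \cdots \le b_n$, then $a_i \le b_i$ for every $i$. Granting this, the lemma is immediate, since a product is independent of the order of its factors: applying the fact to $T^\star$ and $T_n$ gives $\prod_{i=1}^n c(g_i^f) = \prod_{i=1}^n b_i \ge \prod_{i=1}^n a_i = \prod_{i=1}^n c(e_i)$. So the one genuinely non-routine step is justifying the sorted-domination fact. I would do this by contradiction: if $a_k > b_k$ for some $k$, then the spanning tree $T_n$ contains at least $k$ edges of length strictly less than $a_k$, whereas within $T^\star$ the edges of length $< a_k$ form a forest with at most $k-1$ edges that already spans every connected component of the threshold subgraph consisting of \emph{all} graph edges of length $< a_k$ (this is exactly the invariant maintained by Kruskal's algorithm). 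A component count on the $n+1$ vertices then yields a contradiction, since a forest with $\ge k$ edges has too few components to sit inside that threshold subgraph.

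An alternative and even shorter route bypasses the sorted-domination fact altogether: assuming the distances are positive, the map $x \mapsto \log x$ is strictly increasing, and the family of minimum spanning trees depends only on the relative order of the edge lengths (via, e.g., the cycle property), so $T^\star$ is still a minimum spanning tree for the lengths $\log c(\cdot)$; hence $\sum_i \log c(g_i^f) \ge \sum_i \log c(e_i)$ because $T_n$ is a spanning tree, and exponentiating gives the claim. I expect no real obstacle here—both arguments are standard and both rest on the single observation that MST optimality is governed purely by comparisons between edge lengths; the only choice is which of the two to write out in full.
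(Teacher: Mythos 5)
Your proposal is correct, but it takes a genuinely different route from the paper. The paper's proof is a one-line appeal to the matroid exchange property for spanning trees (cited from Schrijver): there is a bijection $\psi$ between the edges of $T^\star$ and of $T_n$ such that $(T^\star \cup \psi(e)) \setminus e$ is again a spanning tree for every $e \in T^\star$, whence $c(e) \leq c(\psi(e))$ termwise by minimality of $T^\star$, and the product inequality follows by multiplying. Your first argument replaces this citation with a self-contained proof of the sorted-domination property via the threshold-subgraph/component count (essentially the Kruskal invariant that for any MST and any threshold $t$, the MST edges of length $<t$ induce the same components as all edges of length $<t$); this proves a statement essentially equivalent to the paper's bijective termwise domination, just organized by sorted order rather than by an explicit exchange bijection, and it buys independence from the matroid literature at the cost of a short extra argument. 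Your second argument is even slicker and is, in a sense, the cleanest way to see the lemma: since MST optimality depends only on the relative order of edge lengths (cycle property), $T^\star$ remains an MST for the lengths $\log c(\cdot)$, so the usual sum-minimality immediately gives product-minimality; the only hypothesis you need (and correctly flag) is that all distances are positive, which holds here since distinct points of a metric are at positive distance. Either write-up is a valid substitute for the paper's proof; the paper's version is shorter only because it outsources the combinatorial content to a cited corollary.
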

\begin{proof}By well-known properties of spanning trees (and matroids), there exists a bijection $\psi$ between the edges of  $T^\star$ and $T_n$ such that
  for all edges $e \in T^\star$, the graph $\{T^\star \cup \psi(e)\} \setminus e$ is a
  tree~\cite[Corollary~39.12a]{Schrijver-book}. Since $T^\star$ was chosen to have minimal total cost, $c(e) \leq c(\psi(e))$. Therefore $\prod_{i=1}^{n} c(e_i) \leq \prod_{i=1}^{n} c(\psi(e_i)) = \prod_{i=1}^{n} c(g_i^f)$.
\end{proof}

In light of this claim, proving \lref[Theorem]{thm:main-allswaps}
reduces to showing the following lemma.
\begin{lemma}
\label{lem:potential}
$\prod_{i=1}^{n} c(g_i^{\circ}) \leq 4^n \cdot \prod_{i=1}^{n} c(e_i)$.
\end{lemma}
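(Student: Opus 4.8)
The plan is to compare the greedy edge $g_i^\circ$ added on arrival of vertex $i$ with the cost of the minimum spanning tree edges, by charging each greedy edge to a suitable edge of $T^\star$. Recall $g_i^\circ$ connects $i$ to its nearest neighbor in $[i-1]$, so $c(g_i^\circ) = d(i,[i-1])$. The key combinatorial fact I would use is that $T^\star$ restricted to $[i]$ (or rather, a spanning tree on $[i]$ derived from $T^\star$) contains some edge crossing the cut $\{i\}$ versus $[i-1]$ of length at least $d(i,[i-1]) = c(g_i^\circ)$; more usefully, as $i$ ranges over $\{1,\dots,n\}$ I want to produce an injection (or a bounded-multiplicity map) from greedy edges to MST edges so that each greedy edge's length is bounded by a constant times the image MST edge's length, and each MST edge is hit $O(1)$ times --- which would immediately give $\prod_i c(g_i^\circ) \le 4^n \prod_i c(e_i)$ provided the constant is at most $4$ and the multiplicity is $1$, or more generally provided (constant)$^{\text{multiplicity}} \le 4$.

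First I would set up the charging: process arrivals $i = n, n-1, \dots, 1$ in reverse, maintaining the spanning tree $T^\star_i$ on $[i]$ obtained from $T^\star_{i+1}$ by deleting vertex $i$ and reconnecting (using the standard fact that deleting a leaf, or contracting along the unique tree path, keeps things a tree). When we ``remove'' vertex $i$, it had some set of tree edges incident to it in $T^\star_i$; at least one of these, call it $f_i$, has $c(f_i) \ge d(i,[i-1]) = c(g_i^\circ)$ is \emph{not} what we get directly --- instead what we get is that the nearest-neighbor distance is at most the length of any edge from $i$ in any spanning tree on $[i]$, so $c(g_i^\circ) \le c(f_i)$ for the edge $f_i$ of $T^\star_i$ incident to $i$ that survives to... hmm, this needs care. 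The cleaner route: use that $c(g_i^\circ) = d(i,[i-1])$ and that in $T^\star$ itself the path from $i$ to vertex $0$ (or to $[i-1]$) has a first edge leaving the ``component structure''; I would invoke the clustering/Steiner machinery only if needed, but I expect the intended proof is purely MST-combinatorial and uses a reverse-deletion argument together with the inequality $c(g_i^\circ) \le 2 \cdot (\text{length of some edge of } T^\star \text{ charged at step } i)$, with each $T^\star$-edge charged at most twice (once from each endpoint), yielding the bound $\prod c(g_i^\circ) \le \prod (2 c(e))^{(\text{multiplicity} \le \dots)}$; one then needs the exponents to sum correctly to give $4^n$.

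The hard part will be controlling the multiplicity of the charging so that the product bound comes out to exactly $4^n$ rather than something weaker: naively each MST edge could be charged by several greedy edges. I expect the resolution is to charge $g_i^\circ$ to the MST edge that is the \emph{heaviest} on the tree path in $T^\star$ between $i$ and its greedy neighbor (so $c(g_i^\circ) \le d(i,[i-1]) \le$ that path's length $\le (n_{\text{path}}) \cdot c(\text{heaviest edge})$ --- too lossy) --- so more likely one argues directly about $\prod_i c(g_i^\circ)$ via a telescoping/potential argument over the sequence of spanning trees $T^\star_1 \subset \dots \subset T^\star_n = T^\star$ built by greedily inserting each arrival's nearest-neighbor edge, using that inserting $g_i^\circ$ into $T^\star_{i-1}$ and then re-optimizing to $T^\star_i$ changes the product of edge lengths by at most a factor $4$ at each step (each swap only decreases lengths, and the single inserted edge has length at most twice the longest edge on the cycle it creates, which is an MST edge --- giving factor $2$, and then squaring considerations or the $(1+\eps)$-slack gives the extra factor). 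Establishing that per-step factor-$4$ bound rigorously, via the cycle property of MSTs and the triangle inequality, is where the real work lies; everything else is bookkeeping, and Lemma~\ref{lem:treemap} already supplies the matching lower bound $\prod_i c(g_i^f) \ge \prod_i c(e_i)$ needed to close out Theorem~\ref{thm:main-allswaps}.
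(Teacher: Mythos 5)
There is a genuine gap: your write-up never actually commits to a complete argument---it sketches three possible routes and explicitly defers the crux (``controlling the multiplicity,'' ``establishing that per-step factor-$4$ bound'') which is precisely the content of the lemma. Moreover, the routes you lean on would fail. A bounded-multiplicity, constant-factor-per-edge charging from greedy edges to edges of $T^\star$ cannot exist: on a line metric with arrival order $0, n, 1, 2, \ldots, n-1$, the greedy edge of the second arrival has length $n$ while every edge of $T^\star$ has length $1$, so some greedy edge must be charged to an MST edge shorter by an unbounded factor---only the \emph{product} inequality is true, not any per-edge constant-factor statement. Your final suggestion, a per-arrival potential argument claiming the product of tree-edge lengths changes by at most a factor $4$ per step, is also false as stated: take a star metric with $k$ leaves at pairwise distance $2$ and distance $1$ to the center, and let all leaves arrive before the center; the center's arrival drops the MST edge-length product from $2^{k-1}$ to $1$, so no per-step constant relates the greedy and optimal products (the global bound still holds there, but not via step-by-step comparison of these products).

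The paper's proof is global and uses ingredients absent from your proposal. It sorts the greedy edges in decreasing order of length and shows (\lref[Lemma]{lem:kdiam}, a pigeonhole argument) that the $i$-th largest greedy edge has length at most $\Delta_i$, the least value such that $[n]$ can be partitioned into $i$ parts of induced diameter at most $\Delta_i$. It then bounds $\prod_i \Delta_i \le 4^n \prod_i c(e_i)$ by recursively deleting from $T^\star$ an edge on the current longest path, chosen via \lref[Lemma]{lem:splittingedge}: since the longest path upper-bounds every component's diameter, $\Delta_k \le c(P_k)$, and the weight function $f$ of \lref[Lemma]{lem:goodfunction}, satisfying the convolution identity $\sum_{i=1}^{\ell-1} f(\ell)/\bigl(f(i) f(\ell-i)\bigr) \le 4$, makes the product of the per-step bounds telescope to exactly $4^n$. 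Neither the decreasing-order/$\Delta_i$ comparison nor the splitting-function telescoping appears in your sketch, and without something playing their role the ``bookkeeping'' you defer is exactly where the lemma lives.
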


The proof of this lemma will occupy most of the rest of this section. It
is based on a few useful but simple facts, which we prove next.

\begin{lemma}
\label{lem:goodfunction}
There exists a function $f : \mathbb N \to \mathbb R$ such that
\begin{enumerate}[(i)]
\item $f(1) = 1$, and for all $\ell \in \mathbb N$, $f(\ell) \geq 1$
\item For all $\ell \in \mathbb N$,
\[
\displaystyle\sum_{i=1}^{\ell-1} \frac{f(\ell)}{f(i)\cdot f(\ell-i)} \leq 4.
\]
\end{enumerate}
\end{lemma}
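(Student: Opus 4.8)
The plan is to exhibit an explicit closed-form $f$ built from the Catalan numbers. Recall $C_0=1$ and $C_m=\frac{1}{m+1}\binom{2m}{m}$ for $m\ge 0$; these satisfy the convolution identity $C_m=\sum_{k=0}^{m-1}C_k\,C_{m-1-k}$ and the standard bound $C_m\le 4^m$. I would set
\[
 f(\ell):=\frac{4^{\ell-1}}{C_{\ell-1}}\qquad(\ell\in\mathbb{N}).
\]

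Part (i) is then immediate. Since $C_0=1$ we get $f(1)=4^{0}/C_0=1$; and since $C_{\ell-1}\le\binom{2\ell-2}{\ell-1}\le 2^{2\ell-2}=4^{\ell-1}$, we get $f(\ell)\ge 1$ for every $\ell$.

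For part (ii), the case $\ell=1$ is trivial because the sum is empty. For $\ell\ge 2$ I would substitute the definition of $f$: for each $1\le i\le\ell-1$ the powers of $4$ telescope, giving
\[
 \frac{f(\ell)}{f(i)\,f(\ell-i)}=\frac{4^{\ell-1}}{C_{\ell-1}}\cdot\frac{C_{i-1}\,C_{\ell-i-1}}{4^{i-1}\,4^{\ell-i-1}}=4\cdot\frac{C_{i-1}\,C_{\ell-i-1}}{C_{\ell-1}}.
\]
Summing over $i$ and reindexing by $k=i-1\in\{0,\dots,\ell-2\}$, the Catalan convolution identity with $m=\ell-1$ yields $\sum_{k=0}^{\ell-2}C_k\,C_{\ell-2-k}=C_{\ell-1}$, hence $\sum_{i=1}^{\ell-1}\frac{f(\ell)}{f(i)\,f(\ell-i)}=4$. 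So condition (ii) holds, in fact with equality for every $\ell\ge 2$, and the lemma follows.

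I do not expect a genuine obstacle here. The one point worth highlighting is how the function is found: writing $g:=1/f$ turns the desired inequality $\sum_i f(\ell)/(f(i)f(\ell-i))\le 4$ into $\sum_{i=1}^{\ell-1}g(i)\,g(\ell-i)\le 4\,g(\ell)$, i.e.\ a rescaling of $g$ must satisfy the Catalan recursion; this forces $f$ up to the scaling, after which the verification is the short computation above. The only input that is not a one-line calculation is the classical estimate $C_m\le 4^m$ used for $f(\ell)\ge 1$. If one wishes to avoid naming Catalan numbers, one can instead define $b_1=1$, $b_\ell=\sum_{i=1}^{\ell-1}b_i\,b_{\ell-i}$ and $f(\ell)=4^{\ell-1}/b_\ell$; part (ii) then holds verbatim, but proving $b_\ell\le 4^{\ell-1}$ still goes most cleanly through the closed form $b_\ell=C_{\ell-1}$.
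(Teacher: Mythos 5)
Your proposal is correct, and it is essentially the paper's own construction: your $f(\ell)=4^{\ell-1}/C_{\ell-1}$ is identical to the paper's $f(\ell)=2^{2\ell-1}(2\ell-1)/\binom{2\ell}{\ell}$, just written via Catalan numbers, and your use of the Catalan convolution $C_{\ell-1}=\sum_{k=0}^{\ell-2}C_kC_{\ell-2-k}$ is the same identity the paper establishes with its generating-function (or Vandermonde-type) calculation, giving equality to $4$ for $\ell\ge 2$. The bound $C_{\ell-1}\le 4^{\ell-1}$ for part (i) is a legitimate substitute for the paper's direct computation showing $f(\ell)\ge\ell$.
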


The proof of this lemma is based on an unedifying calculation, and is
deferred to \lref[Section]{sec:good-func}. Note that the constant $4$ in
\lref[Lemma]{lem:goodfunction} is 
the same constant $4$ that appears in \lref[Lemma]{lem:potential}; if one is
satisfied with a worse constant, one could use $f(\ell) = \ell^2$, for
which a bound of $\frac{4}{3}\pi^2$ is easy to prove. 

\begin{lemma}
\label{lem:splittingedge}
Consider a tree $T$ with $\ell$ nodes, and a path $P$ on this tree
consisting of edges $h_1, \dots, h_k$ in order. For any edge $e \in T$,
let $\ell_e$ and $\ell_e'$ denote the number of vertices in the two
trees formed by deleting $e$.  Then there exists some edge $h \in P$ such
that
\begin{gather}
  \dfrac{c(P)}{c(h)} \leq 4\, \dfrac{f(\ell_h)\cdot f(\ell_h')}{f(\ell)}, \label{eq:2}
\end{gather}
where $f(\cdot)$ is the function from \lref[Lemma]{lem:goodfunction}.
\end{lemma}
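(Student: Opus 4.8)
The plan is a pigeonhole argument over the edges of $P$, powered by the structural fact that the cut sizes along a path in a tree are strictly monotone. First I would name the path $P = \langle u_0, u_1, \dots, u_k\rangle$ with $h_i = (u_{i-1}, u_i)$, and for each $i$ let $A_i$ be the component of $T - h_i$ containing $u_0$ and set $a_i := |A_i|$; then $\{\ell_{h_i}, \ell_{h_i}'\} = \{a_i,\, \ell - a_i\}$, so that $f(\ell_{h_i})\, f(\ell_{h_i}') = f(a_i)\, f(\ell - a_i)$.

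The one structural input I need is that $1 \le a_1 < a_2 < \dots < a_k \le \ell-1$, i.e.\ the $a_i$ are \emph{distinct} integers in $\{1,\dots,\ell-1\}$. This follows from the uniqueness of paths in the tree $T$: any vertex $v$ lying on the $u_{i+1}$-side of $h_{i+1}$ reaches $u_0$ only by first crossing $h_{i+1}$ and then travelling $u_i, u_{i-1}, \dots, u_0$, hence $v$ also lies on the $u_i$-side of $h_i$. Thus the $u_{i+1}$-side of $T - h_{i+1}$ is contained in the $u_i$-side of $T - h_i$, and it is strictly smaller since it omits $u_i$; this says $\ell - a_{i+1} < \ell - a_i$, i.e.\ $a_{i+1} > a_i$.

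Now I would argue by contradiction. Edge lengths are positive, so $c(P) > 0$, and failure of $(\ref{eq:2})$ for $h = h_i$ is equivalent to $c(h_i)/c(P) < \frac{f(\ell)}{4\, f(a_i)\, f(\ell - a_i)}$. If this held for every $i$, then summing over $i = 1,\dots,k$ and using $\sum_i c(h_i) = c(P)$ would give
\[
1 \;<\; \sum_{i=1}^{k} \frac{f(\ell)}{4\, f(a_i)\, f(\ell - a_i)} \;\le\; \frac{1}{4}\sum_{m=1}^{\ell-1} \frac{f(\ell)}{f(m)\, f(\ell - m)} \;\le\; 1,
\]
the middle inequality because the $a_i$ are distinct elements of $\{1,\dots,\ell-1\}$ and $f \ge 1 > 0$, and the last being exactly property~(ii) of \lref[Lemma]{lem:goodfunction}. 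This contradiction forces $(\ref{eq:2})$ to hold for some $h_i$, which is the assertion of the lemma.

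I do not expect a real obstacle: the statement is essentially the counting bound dual to \lref[Lemma]{lem:goodfunction}(ii). The only step needing a moment's care is the strict monotonicity $a_1 < \dots < a_k$, which hinges on a sub-path of a tree being the unique path between its endpoints; it is also worth noting that the constant $4$ demanded here is precisely the one provided by \lref[Lemma]{lem:goodfunction}(ii), so the bookkeeping is tight with no slack to spare.
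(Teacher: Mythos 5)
Your proof is correct and is essentially the paper's argument: assume the bound fails for every edge, sum the resulting strict inequalities over the path, compare with the full sum $\sum_{m=1}^{\ell-1} f(\ell)/\bigl(f(m)\,f(\ell-m)\bigr)$, and invoke \lref[Lemma]{lem:goodfunction}(ii) to reach a contradiction. The only difference is that you spell out the strict monotonicity of the cut sizes $a_1<\dots<a_k$ (which justifies that the path's terms form a sub-sum of the full sum), a step the paper passes over with the remark that the larger sum ``just contains more non-negative terms.''
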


\begin{proof}
Suppose otherwise; then for all $i$, $$c(h_i) < \dfrac{c(P)}{4} \dfrac{f(\ell)}{f(\ell_{h_i}) f(\ell_{h_i}')}.$$ Hence, 
\stocoption{
\begin{align*}
c(P) & = \sum_{i =1}^k c(h_i)
< \sum_{i =1}^k \dfrac{c(P)}{4} \cdot \dfrac{f(\ell)}{f(\ell_{h_i})\cdot f(\ell_{h_i}')} \\
& \quad \leq \dfrac{c(P)}{4} \sum_{j =1}^{\ell-1} \dfrac{f(\ell)}{f(j) \cdot f(\ell-j)}
\leq c(P),
\end{align*}
}
{\begin{align*}
c(P) = \sum_{i =1}^k c(h_i)
< \sum_{i =1}^k \dfrac{c(P)}{4} \cdot \dfrac{f(\ell)}{f(\ell_{h_i})\cdot f(\ell_{h_i}')}
\leq \dfrac{c(P)}{4} \sum_{j =1}^{\ell-1} \dfrac{f(\ell)}{f(j) \cdot f(\ell-j)}
\leq c(P),
\end{align*}}
which is a contradiction. (The third inequality just contains more
non-negative terms than the second one, and the last inequality used
\lref[Lemma]{lem:goodfunction}.)
\end{proof}

\begin{definition}
  \label{def:delta}
  For $1 \leq i \leq n$, let $\Delta_i$ be the smallest number such that
  there exists a partition of $[n]$ into $i$ parts, such that the
  induced subgraph for each part has diameter at most $\Delta_i$.
\end{definition}

\begin{lemma}
\label{lem:kdiam}
For $1 \leq i \leq n$, suppose that $h_i$ is the $i$-th largest greedy edge, so that $h_1, \dots, h_n$ is a permutation of $g_1^{\circ}, \dots, g_n^{\circ}$ and $h_1 \geq \dots \geq h_n$. Then $c(h_i) \leq \Delta_i$.
\end{lemma}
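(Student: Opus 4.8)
The plan is to exhibit, for each $i$, a partition of $[n]$ into $i$ parts each of diameter at most $c(h_i)$; by the definition of $\Delta_i$ as the \emph{smallest} such diameter over all $i$-partitions, this immediately gives $\Delta_i \le c(h_i)$ — wait, we actually want the reverse inequality $c(h_i) \le \Delta_i$, so the argument must go the other way: I would take an \emph{optimal} $i$-partition witnessing $\Delta_i$ and show that the greedy edge $h_i$ (the $i$-th largest) has length at most $\Delta_i$. The natural route is a pigeonhole/averaging argument on the greedy edges that are at least as long as $h_i$.

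First I would recall the key property of the greedy construction: when vertex $v$ arrives, its greedy edge $g_v^{\circ}$ connects $v$ to the nearest previously-arrived vertex, so $c(g_v^{\circ}) = d(v, [v-1])$. Consequently, if two vertices $u, v$ both have greedy edges of length $> \Delta_i$, and say $u$ arrived before $v$, then $d(u,v) \ge d(v,[v-1]) = c(g_v^{\circ}) > \Delta_i$. In other words, the set $W$ of endpoints of greedy edges strictly longer than $\Delta_i$ is a set of points that are pairwise at distance $> \Delta_i$ — a ``$\Delta_i$-separated'' set. (The root vertex $0$ has no greedy edge, so $|W|$ counts only arriving vertices, which is fine.)

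Next I would invoke the optimal $i$-partition $P_1, \dots, P_i$ of $[n]$ realizing $\Delta_i$: each part has (induced-subgraph, hence also metric) diameter at most $\Delta_i$, so each part contains at most one point of the $\Delta_i$-separated set $W$. Therefore $|W| \le i$, i.e., at most $i$ greedy edges have length strictly greater than $\Delta_i$. Since $h_i$ is the $i$-th largest greedy edge, if we had $c(h_i) > \Delta_i$ then $h_1, \dots, h_i$ would all be strictly longer than $\Delta_i$, giving $|W| \ge i$ greedy edges longer than $\Delta_i$ — actually we need $> i$ to get a contradiction, so I need to be slightly more careful: if $c(h_i) > \Delta_i$ then $h_1,\dots,h_i$ are $i$ greedy edges of length $> \Delta_i$, and this is consistent with $|W| \le i$ only if their endpoints land in distinct parts, which is possible. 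So the clean statement is: the number of greedy edges of length $> \Delta_i$ is at most $i$, hence the number of length $\ge$ (any value strictly above $\Delta_i$) is at most $i$; letting that value tend to $\Delta_i$ from above (or noting the set of edge-lengths is finite) forces $c(h_{i+1}) \le \Delta_i$. To get $c(h_i) \le \Delta_i$ as stated, one uses that a part of diameter $\le \Delta_i$ can contain at most one endpoint of a greedy edge of length $> \Delta_i$ \emph{and} the arrival-order argument shows distinct such endpoints, so at most $i-1$ greedy edges can be $> \Delta_i$ once we also account for the root lying in some part; I would nail down this off-by-one by a careful count in the write-up.

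The main obstacle I anticipate is precisely this off-by-one bookkeeping: getting ``$\le i$'' versus ``$\le i-1$'' right so that the conclusion is $c(h_i) \le \Delta_i$ rather than $c(h_{i+1}) \le \Delta_i$. The resolution should come from observing that among the endpoints of the $i$ longest greedy edges, together with the root $0$, we have $i+1$ points, and in any partition into $i$ parts two of them share a part; one checks (using the arrival-order characterization $c(g_v^{\circ}) = d(v,[v-1])$, and treating the root as ``arrived at time $0$'') that any two points sharing a part are within $c(h_i)$ of each other only if $c(h_i) \le \Delta_i$. This $(i+1)$-points-into-$i$-parts pigeonhole is the crux, and everything else is routine.
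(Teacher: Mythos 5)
Your final paragraph is exactly the paper's proof: the arrival vertices of the $i$ largest greedy edges together with the root give $i+1$ points that are pairwise at distance at least $c(h_i)$ (for any pair, the later-arriving vertex has its greedy edge among the $i$ longest, and $c(g_v^{\circ}) = d(v,[v-1])$), so pigeonholing them into the $i$ parts of a partition witnessing $\Delta_i$ puts two of them in one part, at distance at most $\Delta_i$, forcing $c(h_i) \le \Delta_i$. The earlier detour via the set $W$ of edges longer than $\Delta_i$ is unnecessary (as you note, it only yields the off-by-one bound $c(h_{i+1}) \le \Delta_i$), but the crux you identify is correct and matches the paper once the pairwise-separation claim is stated cleanly for all $i+1$ points including the root.
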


\begin{proof}
For all $1 \leq i \leq n$, let $x_i$ be the vertex associated with $h_i$'s arrival, and define $x_0 = 0$. Then any edge in the subgraph induced by $\{x_0, x_1, \dots, x_i\}$ has cost at least $h_i$. By definition, there exists a partition of $V$ into $i$ components all of diameter at most $\Delta_i$. But two of the $i+1$ vertices $\{x_0, \dots, x_i\}$ must lie in the same component of the partition, so that their distance is at most $\Delta_i$, implying that $h_i \leq \Delta_i$.
\end{proof}

Moreover, note that $\prod_{i=1}^{n} c(g_i^{\circ}) = \prod_{i=1}^{n}
c(h_i)$, so it suffices to bound the latter.
\begin{proofof}{\lref[Lemma]{lem:potential}}
Consider a permutation $e_1, \dots, e_n$ such that for all $k$, $e_k$
lies on the longest path $P_k$ of the forest of $k$ trees formed by deleting
$e_1, \dots, e_{k-1}$ from $T$, and such that $e_k, P_k$ satisfy the
condition~(\ref{eq:2}) in \lref[Lemma]{lem:splittingedge}. Note that since
$P_k$ is the longest path in the forest, hence the diameter of
every component is at most $P_k$. By Definition~\ref{def:delta} and the
fact that the forest is a partition of $[n]$ into $k$ parts,
we get that $\Delta_k \leq P_k$. Consequently,
\[ \dfrac{\Delta_k}{e_k} \leq \dfrac{P_k}{e_k} \leq 4
\dfrac{f(\ell_e) \cdot f(\ell_e')}{f(\ell)}, \] where $\ell$ is the size
of the component that contains $e_k$, etc. Multiplying over all $k$, the
right side telescopes to $\frac{4^n \cdot f(1)^{n+1}}{f(n+1)} \leq
4^n$. Here we used that $f(n+1) \geq 1$ and $f(1) = 1$. Finally, putting
everything together, we get
\stocoption{
\begin{align*}
\frac{\prod_{i=1}^{n} c(g_i^{\circ})}{\prod_{i=1}^{n} c(g_i^f)}
& \leq \frac{\prod_{i=1}^{n} c(g_i^{\circ})}{\prod_{i=1}^{n} c(\ell_i)}
= \frac{\prod_{i=1}^{n} c(h_i)}{\prod_{i=1}^{n} c(\ell_i)}
\leq \frac{\prod_{i=1}^{n} \Delta_i}{\prod_{i=1}^{n} c(\ell_i)} \\
& \leq \frac{4^n\cdot f(1)^{n+1}}{f(n+1)}
\leq 4^n,
\end{align*}}
{\[
\frac{\prod_{i=1}^{n} c(g_i^{\circ})}{\prod_{i=1}^{n} c(g_i^f)}
\leq \frac{\prod_{i=1}^{n} c(g_i^{\circ})}{\prod_{i=1}^{n} c(\ell_i)}
= \frac{\prod_{i=1}^{n} c(h_i)}{\prod_{i=1}^{n} c(\ell_i)}
\leq \frac{\prod_{i=1}^{n} \Delta_i}{\prod_{i=1}^{n} c(\ell_i)}
\leq \frac{4^n\cdot f(1)^{n+1}}{f(n+1)}
\leq 4^n,
\]}
where the first two inequalities above follow from
\lref[Lemmas]{lem:treemap}, \ref{lem:kdiam}, and the remaining two from
the preceding discussion.
\end{proofof}

\subsubsection{Proof of Lemma~\ref{lem:goodfunction}, and its Tightness}
\label{sec:good-func}

\begin{proofof}{\lref[Lemma]{lem:goodfunction}}
We claim the function
\[
f(\ell) =
\frac{(-1)^{\ell+1}}{2\, \binom{\frac{1}{2}}{\ell}} =
\frac{2^{2\ell-1}\; (2\ell-1)}{\binom{2\ell}{\ell}}
\]
satisfies the desired
properties.

\begin{enumerate}[(i)]
\item Expanding the second formula out gives
\stocoption{
\begin{align*}
& \frac{\left(2^\ell \ell!\right) (2^{\ell-1} \ell!) (2\ell-1)}{(2\ell)!} \\
& =
\frac{(2 \cdot 4 \cdot \dots \cdot 2\ell) (2 \cdot 4 \cdot \dots \cdot (2\ell-2) \cdot \ell) (2\ell-1)}{2\ell!} \\
&= \frac{2 \cdot 4 \cdot \dots \cdot (2\ell-2) \cdot \ell}{1 \cdot 3 \cdot \dots \cdot (2\ell-3)}
\end{align*}
}{
\begin{align*}
\frac{\left(2^\ell \ell!\right) (2^{\ell-1} \ell!) (2\ell-1)}{(2\ell)!} &=
\frac{(2 \cdot 4 \cdot \dots \cdot 2\ell) (2 \cdot 4 \cdot \dots \cdot (2\ell-2) \cdot \ell) (2\ell-1)}{2\ell!} \\
&= \frac{2 \cdot 4 \cdot \dots \cdot (2\ell-2) \cdot \ell}{1 \cdot 3 \cdot \dots \cdot (2\ell-3)}
\end{align*}
}
which is greater than $\ell \geq 1$.

\item We give two proofs of this fact. The first is via generating
  functions.  Consider the formal power series $A(x) =
  \displaystyle\sum_{\ell=1}^\infty \dfrac{x^\ell}{f(\ell)}$.

Then
\begin{equation} \label{eqn1}
A(x)^2 = \sum_{\ell=2}^\infty x^\ell\sum_{i=1}^{\ell-1} \frac{1}{f(i)f(\ell-i)}
\end{equation}

But $\displaystyle A(x) = \sum_{\ell=1}^\infty -2\binom{\frac{1}{2}}{\ell}(-x)^\ell = 2 - 2(1-x)^{\frac{1}{2}}$ by the binomial theorem, so
\begin{equation} \label{eqn2}
A(x)^2 = 4(A(x)-x) = \sum_{\ell=2}^\infty \frac{4x^\ell}{f(\ell)}
\end{equation}

Equating coefficients of \ref{eqn1} and \ref{eqn2}, we see that
\[
\displaystyle\sum_{i=1}^{\ell-1} \frac{f(\ell)}{f(i)\cdot f(\ell-i)} = 4.
\]

\textbf{Proof II:} Here is a purely algebraic proof.
Fix an $\ell$. First note that
\[
\sum_0^n \binom{x}{i} \binom{x}{\ell-i} = \binom{2x}{\ell}
\]
for all real $x$;
this identity is a polynomial in $x$ and it holds for all integral $x \geq \ell$, since both sides count the number of ways to choose a subset of $\ell$ people from a room of $x$ boys and $x$ girls.

It follows that
\stocoption{
\begin{align*}
\sum_1^{\ell-1} \frac{f(\ell)}{f(i)f(\ell-i)}
&= \sum_1^{\ell-1} -2 \frac{\binom{\frac{1}{2}}{i}\binom{\frac{1}{2}}{\ell-i}}{\binom{\frac{1}{2}}{\ell}} \\
&= \frac{-2}{\binom{\frac{1}{2}}{\ell}} \left[ \sum_0^\ell \binom{\frac{1}{2}}{i}\binom{\frac{1}{2}}{\ell-i} - \binom{\frac{1}{2}}{0}\binom{\frac{1}{2}}{\ell} \right. \\
& \quad \quad \quad \quad - \left. \binom{\frac{1}{2}}{\ell}\binom{\frac{1}{2}}{0} \right] \\
&= \frac{-2}{\binom{\frac{1}{2}}{\ell}} \left[ \binom{1}{\ell} - 2\binom{\frac{1}{2}}{\ell} \right] \\
&= \frac{-2}{\binom{\frac{1}{2}}{\ell}} \left[-2\binom{\frac{1}{2}}{\ell} \right] = 4
\end{align*}
}
{
\begin{align*}
\sum_1^{\ell-1} \frac{f(\ell)}{f(i)f(\ell-i)}
&= \sum_1^{\ell-1} -2 \frac{\binom{\frac{1}{2}}{i}\binom{\frac{1}{2}}{\ell-i}}{\binom{\frac{1}{2}}{\ell}} \\
&= \frac{-2}{\binom{\frac{1}{2}}{\ell}} \left[ \sum_0^\ell \binom{\frac{1}{2}}{i}\binom{\frac{1}{2}}{\ell-i} - \binom{\frac{1}{2}}{0}\binom{\frac{1}{2}}{\ell} - \binom{\frac{1}{2}}{\ell}\binom{\frac{1}{2}}{0} \right] \\
&= \frac{-2}{\binom{\frac{1}{2}}{\ell}} \left[ \binom{1}{\ell} - 2\binom{\frac{1}{2}}{\ell} \right] \\
&= \frac{-2}{\binom{\frac{1}{2}}{\ell}} \left[-2\binom{\frac{1}{2}}{\ell} \right] = 4
\end{align*}
}
\end{enumerate}
\end{proofof}

We can also show tightness of our technique: there is no function
which can be used to get a constant better than $4$.
\begin{lemma}
  \label{lem:badfunctions}
  There does not exist a function $g : \mathbb N \to \mathbb R$ and a
  constant $0 < C < 4$ such that $g(1) = 1$ and for all $\ell \in \mathbb N$,
  \begin{enumerate}[(i)]
    \item
    $g(\ell) \geq 1$
    \item
    $\displaystyle\sum_{i=1}^{\ell-1} \frac{g(\ell)}{g(i)g(\ell-i)} \leq C$
  \end{enumerate}
\end{lemma}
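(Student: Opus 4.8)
The plan is to argue by contradiction using the ordinary generating function of $1/g$. Suppose such a $g$ and a constant $C \in (0,4)$ existed. Since $g(\ell) \geq 1$ for every $\ell$, the power series $A(x) := \sum_{\ell=1}^{\infty} x^\ell / g(\ell)$ has non-negative coefficients and converges on $[0,1)$ (it is dominated there by $\sum_{\ell \geq 1} x^\ell = x/(1-x)$), so $A(x)$ is a well-defined non-negative real number for each $x \in [0,1)$. The point is that hypothesis~(ii), together with $g(1)=1$ from~(i), is precisely a coefficient-wise bound relating $A(x)^2$ to $A(x)$. Indeed, for every $\ell \geq 2$,
\[
  [x^\ell]\, A(x)^2 \;=\; \sum_{i=1}^{\ell-1}\frac{1}{g(i)\,g(\ell-i)}
  \;\leq\; \frac{C}{g(\ell)} \;=\; C \cdot [x^\ell]\bigl(A(x)-x\bigr),
\]
while $[x^0]$ and $[x^1]$ of both $A(x)^2$ and $C(A(x)-x)$ vanish (for the latter because $[x^1](A(x)-x) = 1/g(1)-1 = 0$). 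Since all the coefficients involved are non-negative, evaluating at any $x \in [0,1)$ gives the inequality $A(x)^2 \leq C\,\bigl(A(x) - x\bigr)$.

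Now fix $x \in [0,1)$ and regard $t := A(x)$ as a fixed real number satisfying $t^2 - C t + C x \leq 0$. The quadratic $p(t) = t^2 - Ct + Cx$ opens upward, so it attains a non-positive value only if its discriminant $C^2 - 4Cx$ is non-negative; since $C>0$ this forces $x \leq C/4$. But $C < 4$ means $C/4 < 1$, so picking any $x$ with $C/4 < x < 1$ contradicts the inequality just derived. Hence no such $g$ and $C$ can exist, which is exactly the claim.

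It is worth noting that this bound is tight with \lref[Lemma]{lem:goodfunction}: there the function $f$ achieves equality $\sum_{i=1}^{\ell-1} f(\ell)/(f(i)f(\ell-i)) = 4$, so its generating function $A_f(x) = 2 - 2(1-x)^{1/2}$ satisfies $A_f(x)^2 = 4\,(A_f(x)-x)$ exactly, and the discriminant condition above degenerates to $x \leq 1$, which holds throughout $[0,1)$ — so the argument produces no contradiction precisely at $C = 4$. The only routine part of the proof is checking the coefficient identity for $A(x)^2$ and the bookkeeping of its low-order terms; the entire conceptual content lies in reading hypothesis~(ii) as the quadratic inequality $A(x)^2 \le C(A(x)-x)$ and extracting the discriminant obstruction, so I do not anticipate any real difficulty.
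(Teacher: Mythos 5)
Your proof is correct and follows essentially the same route as the paper's: form the generating function $A(x)=\sum_{\ell\ge 1}x^\ell/g(\ell)$, use hypothesis (ii) to get $A(x)^2\le C\,(A(x)-x)$, and derive a contradiction from the discriminant of $y^2-Cy+Cx$ for $x\in(C/4,1)$. The only cosmetic difference is that the paper fixes $a\in(C/4,1)$ at the outset whereas you derive the inequality for all $x\in[0,1)$ first and then pick the offending $x$; the substance is identical.
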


\begin{proof}
Suppose there existed such a function. Let $A(x) = \displaystyle\sum_1^\infty \frac{x^i}{g(i)}$. Choose $a \in (\frac{C}{4}, 1)$. By condition (i), $\displaystyle A(a) \leq \sum_1^\infty a^i = \frac{a}{1-a}$ is a positive real number. By condition (ii),
\begin{align*}
A(a)^2 &= \left(\sum_1^\infty \frac{a^i}{g(i)} \right)^2 \\
&= \sum_{\ell=2}^\infty a^\ell \sum_{i=1}^{\ell-1} \frac{1}{g(i)g(\ell-i)} \\
&\leq \sum_{\ell=2}^\infty C\frac{a^\ell}{g(\ell)}
\end{align*}
Therefore $A(a)^2 + C\frac{a}{g(1)} \leq CA(a)$, or $A(a)^2 - CA(a) + Ca \leq 0$. But the quadratic $y^2 - Cy + Ca$ has discriminant $C^2 - 4Ca < 0$ and hence no real solutions in $y$. This is a contradiction, therefore no such function $g$ exists.
\end{proof}

\subsection{A Lower Bound on the Potential Function, and on All-Swaps}
\label{sec:lbd}

In this section, we show that algorithm $\mathcal{B}_{2}$ performs
asymptotically more than $n$ swaps. Previously known examples only
showed that $\mathcal{B}_2$ might need to perform at least $n-1$ swaps;
the following example shows that the correct (worst-case) number lies
between $1.25n$ and $2n$.

\begin{figure}[!htbp]
\begin{center}
\includegraphics[scale=1.25]{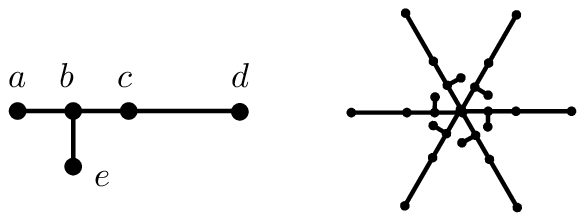}
\end{center}
\end{figure}

Consider the tree on the left where the edges have length $1$, except
$cd$ has length $2$. We take $k = m/4$ copies of it and identify the
vertex $a$ in all $k$ copies to get the tree $T$ on the right. (The
figure shows an example $k=6$.)  There are $4k = m$ edges and $m+1$
nodes in this tree, and the final metric will be the metric closure of
this tree.

Suppose we give the vertices in the following order: we give all the
copies of $d$, then copies of $e$, then of $c$, then the vertex $a$, and
finally the copies of $b$.  Each copy of $d$ (aside from the first) adds
a greedy edge of length $8$.  Each copy of $e$ adds a greedy edge of
length $4$.  The vertex $a$ and each copy of $c$ adds a greedy edge of
length $2$. Finally, each copy of $B$ adds a greedy edge of length $1$.
This means $\prod_i \text{greedy}_i = 8^{k-1} 4^k 2^{k+1} 1^k = 2^{6k -
  2}$. Moreover, after our algorithm finishes, the final tree is just the
tree $T$, the product of whose edge lengths is $2^k 1^{3k} = 2^k$. This
gives us a ratio of $2^{5k - 2} = 2^{1.25\, m - 2}$. Hence we get that
$\prod_i c(g_i) \geq 2^{1.25\, m - 2} \prod_i c(\ell_i)$ for this
instance. Moreover, it is easy to check that the number of swaps
performed by our algorithm is also $1.25 m - O(1)$, which
proves the claim.


\fi

\section{Conclusions}

This paper considers  maintaining an $O(1)$-competitive Steiner
tree in an online environment. In this model, when a new vertex arrives
the distances to previous vertices is revealed, and must form a metric
space. The algorithm is allowed to add an edge connecting this new
vertex to previous vertices, and also to add/delete a constant number of
existing edges. It was previously known that a natural greedy algorithm
makes a total of $O(n)$ additions/deletions and maintains a
constant-competitive tree, which implies that the number of changes per
arrival is constant \emph{on average}. In this paper we give an
algorithm that makes a single change per arrival \emph{in the worst
  case}. Our idea is to use a new constant-amortized-swaps algorithm,
which is then de-amortized by carefully delaying some of the swaps, and
showing that these delays do not result in a significant blowup in cost.
We also give a tight bound and a simpler proof of the natural greedy
constant-average-swaps algorithm.

Several problems remain open: can we show that a primal-only greedy-like
algorithm swap upon each arrival suffices to give
$O(1)$-competitiveness?  (See~\cite{Vers12} for a related conjecture.)
We have not optimized the constants in our result, aiming for simplicity
of exposition, but it would be useful to get a smaller constant factor
that would put it in the realm of practicality. Moreover, can we extend
our algorithm to the case where vertices are allowed to arrive and
depart---the ``fully-dynamic'' case---and get even a constant amortized
bound? Finally, for which other problems can we improve results by
allowing a small number of changes in hindsight? And in what situations
can we use similar de-amortization techniques?

\subsection*{Acknowledgments}

We would like to thank Chaoxu Tong for pointing out an error in the
previous version of Lemma~\ref{lem:modify}.

\ifstoc
 \bibliographystyle{abbrv}
 \bibliography{bibonline}
\else
 \bibliographystyle{alpha}
 {\small \bibliography{bibonline}}
\fi


\end{document}